\crefname{equation}{}{}
\crefname{enumi}{}{}
\providecommand{\keywords}[1]
{
  \small	
  \textbf{\textit{Keywords---}} #1
}
\newrobustcmd*{\setmincitenames}{\numdef\blx@mincitenames}
\def\i{\mathrm{i}}
\def\e{\mathrm{e}}
\def\rev{\mathrm{rev}}
\def\calS{\mathtt{S}}
\def\calC{\mathtt{C}}
\def\calA{\mathtt{A}}
\def\calU{\mathcal{U}}
\def\Cliff{\mathcal{C}}
\def\calU{\mathcal{U}}
\def\sfx{\mathsf{x}}
\def\sfX{\mathsf{X}}
\def\Diag{\mathrm{Diag}}
\def\pf{\mathrm{pf}}
\def\deg{\mathrm{deg}}
\def\triu{\mathrm{triu}}
\def\skew{\mathrm{skew}}
\def\columns{\mathrm{columns}}
\def\parity{\mathrm{parity}}
\def\rmi{\mathrm{i}}
\def\K{\Pi_{\mathrm{im}(X)}}
\providecommand{\keywords}[1]{\textbf{\textit{Keywords ---}} #1}
\newtheorem{theorem}{Theorem}[section]
\newtheorem{lemma}[theorem]{Lemma}
\newtheorem{corollary}[theorem]{Corollary}
\newtheorem{proposition}[theorem]{Proposition}
\newtheorem{definition}[theorem]{Definition}
\newtheorem{example}[theorem]{Example}
\newtheorem{remark}[theorem]{Remark}
\DeclareMathOperator{\bin}{\mathrm{bin}}
\newcommand{\teal}[1]{\textcolor{teal}{#1}}
\title{\bf Bypassing orthogonalization in\\ the quantum DPP sampler}
\author[1]{Michaël Fanuel}
\author[1]{R\'emi Bardenet}
\date{}
\affil[1]{
\it Universit\'e de Lille, CNRS, Centrale Lille,\authorcr{}
\it UMR 9189 -- CRIStAL, F-59000 Lille, France
}
\begin{document}
\maketitle

\begin{abstract}
    Given an $n\times r$ matrix $X$ of rank $r$, consider the problem of sampling
$r$ integers $\mathtt{C}\subset \{1, \dots, n\}$ with probability proportional
to the squared determinant of the rows of $X$ indexed by $\mathtt{C}$. The
distribution of $\mathtt{C}$ is called a projection determinantal point process
(DPP). The vanilla classical algorithm to sample a DPP works in two steps, an
orthogonalization in $\mathcal{O}(nr^2)$ and a sampling step of the same cost.
The bottleneck of recent quantum approaches to DPP sampling remains that
preliminary orthogonalization step. For instance, (Kerenidis and Prakash, 2022)
proposed an algorithm with the same $\mathcal{O}(nr^2)$ orthogonalization,
followed by a $\mathcal{O}(nr)$ classical step to find the gates in a quantum
circuit. The classical $\mathcal{O}(nr^2)$ orthogonalization thus still
dominates the cost. Our first contribution is to reduce preprocessing to
normalizing the columns of $X$, obtaining $\mathsf{X}$ in $\mathcal{O}(nr)$
classical operations. We show that a simple circuit inspired by the formalism
of Kerenidis et al., 2022 samples a DPP of a type we had never encountered in
applications, which is different from our target DPP. Plugging this circuit
into a rejection sampling routine, we recover our target DPP after an expected
$1/\det \mathsf{X}^\top\mathsf{X} = 1/a$ preparations of the quantum circuit.
Using amplitude amplification, our second contribution is to boost the
acceptance probability from $a$ to $1-a$ at the price of a circuit depth of
$\mathcal{O}(r\log n/\sqrt{a})$ and $\mathcal{O}(\log n)$ extra qubits.
Prepending a fast, sketching-based classical approximation of $a$, we obtain a
pipeline to sample a projection DPP on a quantum computer, where the former
$\mathcal{O}(nr^2)$ preprocessing bottleneck has been replaced by the
$\mathcal{O}(nr)$ cost of normalizing the columns and the cost of our
approximation of $a$.

\end{abstract}
    
\keywords{
    Determinantal point processes, free fermions, quantum circuits. 
}
\newpage
\tableofcontents

\section{Introduction}
\label{sec:intro}
We say that a random subset $Y \subseteq \{1,\dots, n\}$ is drawn from a discrete determinantal point process (DPP) of correlation kernel $K \in \mathbb{C}^{n\times n}$ when
\begin{equation}
    \mathbb{P}(\calC \subseteq Y) = \det K_{\calC\calC}, \quad \forall \calC \subseteq \{1,\dots, n\}, \label{eq:inclusion_proba}
\end{equation}
where $K_{\calC\calC} = [K_{ij}]_{i,j\in \calC}$ is the principal submatrix of $K$ indexed by $\calC$.
%Note that this submatrix depends on a choice of order for the elements of $\calC$, whereas its determinant does not.
When it exists, the distribution described by \eqref{eq:inclusion_proba} is denoted by $\mathrm{DPP}(K)$.
In particular, if $K$ is symmetric, $\mathrm{DPP}(K)$ is well-defined if and only if the eigenvalues of $K$ are in $[0,1]$ \citep{Mac72,Sos00}. 
Existence conditions for nonsymmetric correlation kernels take a more complicated form \citep{brunel2018learning}.

While originating in quantum statistical optics \citep{Mac72}, DPPs have been studied from several perspectives, such as machine learning \citep{kulesza2012determinantal,gartrell2021scalable}, randomized numerical algebra \citep{DeMa21}, probability \citep{HKPV09}, and statistics \citep{lavancier2015determinantal}.
In data science applications, a key asset of DPPs is that they can be sampled exactly in polynomial time.
In this paper, we consider the problem of sampling a subclass of DPPs, for which the correlation kernel $K = \K$ is the orthogonal projector onto the vector space $\mathrm{range}(X)$ spanned by the columns of a given matrix
$$
  X\in \mathbb{R}^{n \times r}, \quad r\leq n.
$$ 
For simplicity, we assume throughout the paper that the columns of $X$ are linearly independent.
In other words, the kernel $K$ is the projector 
\begin{equation}
  \K = X (X^\top X)^{-1} X^\top. \label{eq:target_kernel}
\end{equation}
% \rb{a bit weird to write the kernel $\mathbb{P}$, it is an error in redefining the command?}\mf{No, I'd like to emphasize that $K$ depends on $\mathrm{range}(X)$. We could also write $K_{\mathrm{im}(X)}$ to simplify. }
Such DPPs are called \emph{projection DPPs}, and are common in applications to data science.
The flagship example of a projection DPP is the set of edges of a uniform spanning tree in a connected graph, where $X$ in \eqref{eq:target_kernel} is the vertex-edge incidence matrix of the graph, with any row removed to make it full-rank \citep{Pemantle91}. 
Sampling this DPP helps to build preconditioners for Laplacian linear systems \citep{KS2018,FanBar2022}.
Other data science applications include feature selection for linear regression \citep{BeBaCh20b}, graph filtering \citep{tremblay2017graph,jaquard2023smoothing}, or minibatch sampling in stochastic gradient descent \citep{BaGhLi21}. 
Another argument in favor of restricting one's attention to projection DPPs is that many DPP-related distributions involved in applications are statistical mixtures of projection DPPs, such as volume sampling \citep{DRVW06} or $k$-DPPs \citep{kulesza2012determinantal}.

The standard classical algorithm by \citet{HKPV09} to sample from $\mathrm{DPP}(K)$, as well as recent refinements \citep{BTA22}, require to orthogonalize the columns of $X$ as a preprocessing, followed by a chain rule argument.
The preprocessing typically takes the form of a $QR$-decomposition, i.e., 
\begin{equation}
  X = \mathsf{Q}R \tag{column orthogonalization}\label{eq:QR}
\end{equation}
where $\mathsf{Q}$ is an $n\times r$ real matrix, whose columns are orthonormal and span the same vector space as the columns of $X$, 
and $R$ is upper triangular; see e.g.\ \citep[Chapter II]{trefethen2022numerical}. 
The cost of this preprocessing is $\mathcal{O}(nr^2)$.
Similarly to the classical algorithm, a natural approach to sample $\mathrm{DPP}(K)$ on a quantum computer involves decomposing $\mathsf{Q}$ in \eqref{eq:QR} as a product of so-called \emph{Givens rotations}, which is actually one way to implement the QR decomposition \eqref{eq:QR}. 
These Givens rotations are then used to parametrize a quantum circuit of linear depth to sample the DPP of interest \citep*{bardenet2024sampling,kerenidis2022quantum}.
Because of this classical preprocessing, and neglecting potential gains in parallelizing it, the computational complexity of the whole quantum algorithm remains the same as the best available classical implementation of the algorithm of \cite{HKPV09}.
%Alternately, thThere is also an alternative to the Givens decomposition with the same $\mathcal{O}(nr^2)$ decomposition cost which relies on the sine-cosine decomposition; see \citep[Section 3.]{kerenidis2022quantum}.
%
\subsection{Contributions}
To summarize, we give a quantum algorithm that takes as input an $n\times r$ full-rank matrix $X$, $r\leq n$, and outputs a sample of $\mathrm{DPP}(\K)$ in \eqref{eq:target_kernel}.
The quantum circuit involved has a similar depth and gate count as previous approaches, but our sampler is faster in the sense that the complexity of the classical preprocessing is taken down from $\mathcal{O}(nr^2)$ to $\mathcal{O}(nr + c(n,r))$, where $c(n,r)$ is the cost of approximating $a = \det (\sfX^\top \sfX)$,
\begin{equation}
  \sfX = X \cdot \Diag(\|X_{:1}\|_2^{-1}, \dots, \|X_{:r}\|_2^{-1}),\tag{column normalization}\label{eq:col_norm}
\end{equation}
and $X_{:1}, \dots, X_{:r}$ are the $r$ columns of $X$.
Using a QR decomposition, we see that $c(n,r) = \mathcal{O}(nr^2)$, but importantly, there are faster known classical algorithms than QR to approximate the determinant $a$.
In particular, we discuss below how $c(n,r)$ can be $\mathcal{O}(nr)$ in favourable cases, thus identifying a class of DPPs that are significantly faster to sample on a quantum computer.
As a side result of independent interest, we study a DPP of a new type, which naturally appears as an intermediate step in our construction. 
Because we hope to interest an interdisciplinary audience, we strove to make the paper accessible to readers with a background in probability and machine learning but comparatively little background in quantum computing.
We now go over our contributions in more detail.

\paragraph{Lighter preprocessing and rejection sampling algorithm.}
We reduce the DPP sampling cost by considering a \emph{computationally cheaper} classical preprocessing step than \citep{bardenet2024sampling,kerenidis2022quantum}.
We use the Clifford loaders of \citep[Section 4]{kerenidis2022quantum}, with Givens rotations, but in contrast to the QR-based approach, we decompose each column of $X$ individually. 
Explicitly, \cref{eq:QR} is traded off for a simple column-by-column normalization \eqref{eq:col_norm}.
The complexity of this preprocessing is $\mathcal{O}(nr)$.
Then, the columns of $\sfX$ are loaded (i.e., a quantum circuit is built) at cost $\mathcal{O}(nr)$, with the
Clifford loaders using Givens rotations.
Overall, preprocessing thus remains $\mathcal{O}(nr)$, shaving off a power $2$ on $r$ when compared with QR's cost. 
The resulting quantum circuit has dimensions of the same order as in previous works, so that the overall procedure is cheaper. 
Yet the price is that observing the obtained quantum state in the computational basis only yields the target DPP conditionally on the number of $1$s being the rank $r$ of $\K$.
Rejecting (``post-selecting") the output until this cardinality condition is filled yields a sampler for $\mathrm{DPP}(\K)$, where preprocessing cost has been traded in for an expected number $a=\det\sfX^\top \sfX$ of state preparations until acceptance.
Note that, as a byproduct, our analysis also describes the effect on the algorithm of \cite{kerenidis2022quantum} of numerical errors during the orthogonalization preprocessing, which are likely if $X$ is ill-conditioned.

\paragraph{Reducing rejection rate with amplitude amplification.}
From there, we show how to boost the acceptance probability of our rejection sampler using amplitude amplification, at the expense of adding $\mathcal{O}(\log n)$ control qubits and leveraging recent results of \citep{rethinasamy2024logarithmic} on coherent Hamming measurements; see \cref{a:rejection_sampling_Clifford_amplified}.
Prepending the randomized sketching-based estimation of $a$ proposed by \cite{boutsidis2017randomized}, we obtain, with a probability larger than $1-\delta$, an $\epsilon$-approximation $\widehat{a}$ at cost 
\begin{equation}
    \mathcal{O}\left(\text{nnz}(\sfX^\top \sfX)/\theta_1 \log (1/\epsilon)/\epsilon^{2}\log(1/\delta)\right),\label{eq:cost_sketch_theta_1}
\end{equation}
where $\theta_1$ is a user-chosen parameter such that  $0 < \theta_1 < \lambda_{\min}(\sfX^\top \sfX)$.
Then, using 
$$
  \lfloor \pi/(4\asin\sqrt{\widehat{a}}^{1/(1+2\epsilon)} )\rfloor
$$ 
Grover iterations, the acceptance probability of the sampling algorithm is amplified to at least $1 - \widehat{a} + o_{\epsilon}(1)$.
The exact lower bound on the amplified acceptance is given more formally in \cref{thm:guarantee_nb_Grover_epsilon}.

The complexity of the sketching step in \cref{eq:cost_sketch_theta_1} is part of the cost of the entire pipeline. 
Actually, the cost of sketching $a$ with the algorithm of \citep{boutsidis2017randomized} strongly depends on: (i) the asymptotic behaviour of $\lambda_{\min}(\sfX^\top \sfX)$ as a function of $n$ and $r$, and (ii) the user's ability to determine a lower bound $\theta_1$.
The former is key to determine when the cost \cref{eq:cost_sketch_theta_1} is low, and determining $\theta_1$ in general will be application-dependent.
To focus on the case of uniform spanning trees, consider an $n\times r$ matrix $X$ that is the edge-vertex incidence matrix of a connected graph \emph{where one column is removed}, so that $n$ is the number of edges and $r+1$ the number of nodes.
We consider in Section~\ref{sec:sketch_graph_case} two intuitively extreme cases; informally, a well-connected and a weakly connected graph.
\begin{itemize}
  \item For a connected graph where \emph{$\varrho$ is connected to all the other nodes} (a hub), we show that $\theta_1 = 1/(r+1)$ is a valid lower bound, and that the relevant function in the asymptotic upper bound \cref{eq:cost_sketch_theta_1} satisfies 
  $$
  \text{nnz}(\sfX^\top \sfX)/\theta_1 = \mathcal{O}\left(nr \right),
  $$ 
  yielding the same dependence in $nr$ as \cref{eq:col_norm}, and thus a DPP that can be sampled in $\mathcal{O}(nr)$.
  \item For a \emph{path graph} and for an endpoint node $\varrho$, we have that 
  $
  \lambda_{\min}(\sfX^\top \sfX)
  $
  behaves asymptotically as $\Theta(\sin^2(\frac{\pi}{4r + 2}))$.
  For instance, as shown in \cref{sec:sketch_graph_case}, we can choose $\theta_1 = \sin^2(\frac{\pi}{4r + 2})$, leading to
  $$
  \text{nnz}(\sfX^\top \sfX)/\theta_1 = \Theta\left(nr^2 \right),
  $$ 
  namely, the same scaling as the cost of QR.
\end{itemize} 
Although we do not have a general strategy to determine $\theta_1$ in the graph case, we expect the sketching cost to be low whenever the graph associated with $X$ is well-connected.

\paragraph{A DPP with non-symmetric correlation matrix.} 
As a side contribution, we identify in \cref{sec:novel_DPP} the distribution arising from a measurement in the computational basis without rejection nor amplitude amplification. 
It turns out that this is a DPP as well, of an unusual type since the probability of occurrence of a subset is in this case the determinant of a \emph{skewsymmetric} matrix.
To give intuition, we describe the particular case where the DPP selects edges in a graph, which generalizes a well-known determinantal measure over spanning trees in statistical physics and computer science.
In particular, we show that this new measure is supported on \emph{dimer-rooted forests}, and we highlight connections with Pfaffians.
Furthermore, since sampling this DPP relies on loading sparse columns of an edge-vertex incidence matrix, we also propose a \emph{sparse loader architecture} (see \cref{sec:sparse_architecture}), which intends to reduce the number of gates in the generic loaders described in \citep*{kerenidis2022quantum}.

\paragraph{Implementation and numerics.}
We implemented our algorithm using the Qiskit library \citep{Qiskit}. 
Results of simulations and executions on a real quantum computer are given in \cref{sec:numerics}, in the particular case of determinantal measures over spanning trees and dimer-rooted forests. 
Our code is freely available.\footnote{\url{https://github.com/mrfanuel/DPPs-with-clifford-loaders}}
\subsection{Notations}

For convenience, since we consider Pfaffians of submatrices, we have to define ordered subsets.
An ordered subset $\calS = (i_1,\dots, i_s)$ of $\{1, \dots,n\}$ is a sequence of distinct integers with a given order.
We denote the \emph{reversed} ordered subset by $\rev(i_1,\dots, i_s)= (i_s,\dots, i_1)$.
To simplify, we also abuse notation by writing $r + \{1,\dots,n\}$ for $\{r+1,\dots, r+n\}$.
We shall denote matrices with unit-norm columns by \emph{sans serif} fonts such as $\mathsf{X}$, $\mathsf{Q}$, $\mathsf{B}$ or $\sfx$.
When necessary, we denote the $n\times n$ identity matrix by $I_n$ whereas $0_n$ denotes the zero matrix of the same size.
\section{Sampling a projection DPP without orthogonalization \label{sec:without_orthogonalization}}
In this section, we introduce our quantum circuit. 
We find it easier to introduce it using operators satisfying canonical anticommutation relations (CAR), following a line of earlier work on fermionic simulation \citep{JSKSB18,TerDiVi02,bardenet2024sampling}. We thus start in \cref{sec:CAR_and_Clifford} by recalling the corresponding vocabulary, and translate the key notion of Clifford loader from \cite{kerenidis2022quantum} in these terms. 
%  which are familiar to physicists whereas mathematicians may prefer the use of exterior algebra as promoted by \citet{lyons2003determinantal}; see also \citep{DFP08}.
% Since we discuss the implementation of sampling algorithms on quantum computers we will use the former approach as in the recent paper \citep*{bardenet2024sampling} which relied on the tools developped for many-body physics simulation \citep{JSKSB18,TerDiVi02}.
%
\subsection{CAR algebra and Clifford loaders \label{sec:CAR_and_Clifford}}
%
% For the ease of a reader familiar with probabilities and statistics but less with quantum computing, we briefly review basics of quantum circuits.
%\rb{We should discuss whether  we need the additional abstraction of the CAR algebra. Could we start with $(\mathbb{C}^{2})^{\otimes n}$, defining the JW operators, and note they satisfy CAR?} 
Let $c_1, \dots, c_n$ be linear operators on a Hilbert space of finite dimension and denote by $c_1^*, \dots, c_n^*$ their adjoints. 
We assume that they satisfy the following algebraic relations, called \emph{Canonical Anticommutation Relations},\footnote{
  Mathematically, the relations \eqref{e:CAR} are used to generate an abstract $C^*$-algebra that models fermionic particles; see e.g.\ \citep[Section 5.2.2]{bratteli2012operator}, but we shall only need a particular representation in our paper.
}
\begin{align}
  \label{e:CAR}
  \tag{CAR}
  c_i c_j + c_j c_i = c_i^* c_j^* + c_j^* c_i^*  = 0
  \qquad \text{and} \qquad
  c_i c_j^* + c_j^* c_i= \delta_{ij} \mathbb{I},
\end{align}
for all $1\leq i,j\leq n$.

In this paper, we consider the Hilbert space $(\mathbb{C}^{2})^{\otimes n}$ of $n$ qubits.
The fermionic \cref{e:CAR} can be represented on this space with the Jordan-Wigner construction; see \citep{bardenet2022point} and references therein.
Recall the well-known \emph{Pauli matrices}, given by
\begin{equation*}
	I \triangleq I_{2} = \begin{bmatrix} 1 & 0\\ 0 & 1\end{bmatrix},\quad
	\sigma_x = \begin{bmatrix} 0 & 1\\ 1 & 0\end{bmatrix}, \quad
	\sigma_y = \begin{bmatrix} 0 & -\i\\ \i & 0\end{bmatrix}, \quad
	\sigma_z = \begin{bmatrix} 1 & 0\\ 0 & -1\end{bmatrix},
	\label{e:pauli_matrices}
\end{equation*}
and which satisfy $\sigma_x^2 = \sigma_y^2 = \sigma_z^2= I$, and $\sigma_x\sigma_y = \i\sigma_z$.
Now, the Jordan-Wigner representation of CAR is given by 
\begin{equation*}
    c_j = \underbrace{\sigma_z \otimes \dots \otimes \sigma_z}_{j-1\text{ times}} \otimes \left(\frac{\sigma_x-\i \sigma_y}{2}\right) \otimes \underbrace{I \otimes \dots \otimes I}_{n -(j-1) \text{ times}},
\end{equation*}
for all $1\leq i \leq n$ and by defining $c_j^*$ to be the adjoint of $c_j$.
Above, we slightly abuse notations by identifying an annihilation operator $c_j$ with its Jordan-Wigner representation.
It is customary to call the $c^*$'s \emph{creation operators} and the $c$'s \emph{annihilation operators}.
This becomes intuitive when considering Fock space.

Denote by $\ket{\emptyset}$ the Fock vacuum of $c_1, \dots, c_n$, namely the 
element of the Hilbert space such that 
$c_j \ket{\emptyset}= 0$ for all $1\leq j\leq n$.
Its representation is given by the tensor product
$
% \begin{bmatrix}
%   \ket{0}\\
%   \vdots\\
%   \ket{0}\\
% \end{bmatrix} 
% \triangleq
\ket{0 \dots 0} \triangleq \ket{0}\otimes \dots \otimes \ket{0}
$ with the factor  
$\ket{0}$ represented as the vector
$\left(\begin{smallmatrix}
  1 \\
  0
\end{smallmatrix}\right).
$
Similarly, it is conventional to represent $c_1^*\ket{\emptyset}$ by $\ket{10 \dots 0}$ with the factor $\ket{1}$ represented by $\left(\begin{smallmatrix}
  0 \\
  1
\end{smallmatrix}\right)$.
We often call the states $c_j^*\ket{\emptyset}$ for $1\leq j \leq n$ the one-particle states.
For $\sfx = [\sfx_1, \dots , \sfx_n]^\top\in \mathbb{R}^n$ such that $\|\sfx\|_2 = 1$, define the Clifford loader
\begin{equation}
    \Cliff(\sfx) = \sfx_1 \cdot (c_1 + c_1^*) + \dots + \sfx_n \cdot (c_n + c_n^*).\label{eq:Clifford_operator}
\end{equation}
For example, the Jordan-Wigner representation of $c_1 + c_1^*$ is simply $\sigma_x \otimes I \otimes \dots \otimes I$.
This operator is Hermitian by construction and a simple calculation shows that 
\begin{equation}
\Cliff(\sfx)\Cliff(\sfx^\prime) + \Cliff(\sfx^\prime)\Cliff(\sfx) = 2 (\sfx^\top \sfx^\prime) \cdot \mathbb{I} \text{ with } \|\sfx\|_2 = \|\sfx^\prime\|_2 = 1. \label{eq:Anticommutation_Cs}
\end{equation}
Still, if $ \|\sfx\|_2= 1$, the latter identity shows that $\Cliff(\sfx)^2 = \mathbb{I}$ and thus $\Cliff(\sfx)$ is unitary.
To say it in other words, a Clifford loader embeds a unit vector into a Clifford algebra -- here the Clifford algebra of dimension $n$ over the reals; see e.g. \citep[Section 2.3]{baez2002octonions} for a physics perspective.
%
% \begin{remark}(Complex unit vector)
%  Let $\mathsf{z} = a + \mathrm{i}b$ be a unit vector of $\mathbb{C}^n$ with $a,b\in \mathbb{R}^n$.
%  We naturally extend the definition of $\Cliff$ by letting 
%   $
%   \Cliff(\mathsf{z}) = \sum_{i=1}^n a_i (c_i + c_i^*) + \sum_{i=1}^n b_i (c_i - c_i^*)/\mathrm{i}.
%   $
%   Upon defining $z_i = |z_i | e^{\mathrm{i} \theta_i}$ for $1\leq i \leq n$, we can write 
% $$
%   \Cliff(\mathsf{z}) = \sum_{i=1}^n |z_i| (e^{-\mathrm{i} \theta_i}c_i + e^{\mathrm{i} \theta_i}c_i^*).
% $$
% For simplicity, all vectors remain real in the sequel. \rb{I am not sure what to make of this Remark. Are we simply saying that if one wanted to tream complex-valued kernel, we could, but that we are not going to do it?}
% \end{remark}
%
To our knowledge, Clifford loaders were introduced in \citep*{kerenidis2022quantum} as a way to load the orthonormal columns of a matrix by successive applications of their loaders to the Fock vacuum $\ket{\emptyset}$. 
% They also are closely related to sampling algorithms for determinantal point processes as also discussed in \citep{bardenet2024sampling}.
To summarize, \citet*{kerenidis2022quantum} propose the following architectures, named in this paper as follows,
\begin{itemize}
  \item pyramid loaders with $\mathcal{O}(n)$ gates and depth $\mathcal{O}(n)$,
  \item parallel loaders with $\mathcal{O}(n)$ gates and depth $\mathcal{O}(\log n)$.
\end{itemize}
We refer to \cref{sec:architectures_loaders} and \cref{app:Clifford} for more details about the construction of the loaders, and a connection to spherical coordinate systems.
Furthermore, \cref{sec:sparse_architecture} describes a sparse architecture that we propose in this paper.
\begin{remark}[Complexity of a Clifford loader]\label{rem:complexity_clifford}
  We can assume that $n = 2^k$ for some integer $k$, since zeroes can be appended to any vector so that the resulting length becomes a power of $2$.
  If there is no restriction on qubit connectivity, the loader $\Cliff(\sfx)$ can be constructed with a circuit of $\mathcal{O}(n)$ two-qubit gates and of depth $\mathcal{O}(\log n)$ (parallel architecture); see \citep*{kerenidis2022quantum} and \citep{johri2021nearest}, or \cref{app:Clifford}.
  If we add the requirement that the two-qubit gates act only on neighbouring qubits arranged as a path graph, then the circuit depth is $\mathcal{O}(n)$ (pyramid architecture).
  In contrast, the parallel loader of logarithmic depth requires two-qubit gates acting on distant qubits in the path graph.

  % For constructing $\ket{\columns(\sfX)}$, the parameters of the $r$ loaders can be obtained in parallel by multiple processors. 
\end{remark}

\subsection{State preparation with Clifford loaders\label{sec:state_preparation}}
Let $\sfX$ be an $n\times r$ matrix, $r\leq n$, with unit-norm columns $\sfX_{:1}, \dots, \sfX_{:r}$.
A direct substitution of the expression of the Clifford loader \cref{eq:Clifford_operator} gives
\begin{equation}
  \Cliff(\sfX_{:1}) \dots \Cliff(\sfX_{:r}) \ket{\emptyset}= \ket{\text{remainder}} + \sum_{i_1, \dots, i_r = 1}^n \sfX_{i_1 1 } \dots \sfX_{i_r r} \cdot c_{i_1}^* \dots c_{i_r}^* \ket{\emptyset},\label{eq:dpp_state}
\end{equation}
where $\ket{\text{remainder}}$ is orthogonal to each of the states of the form $c_{i_1}^* \dots c_{i_r}^* \ket{\emptyset}$. 
For conciseness, the explicit form of $\ket{\text{remainder}}$ is only given in \cref{eq:expression_colX} in the sequel.
Rearranging the terms in the sum on the right-hand side of \cref{eq:dpp_state} with the help of \eqref{e:CAR}, we find 
\begin{equation}
  \Cliff(\sfX_{:1}) \dots \Cliff(\sfX_{:r}) \ket{\emptyset}= \ket{\text{remainder}} + \sum_{\substack{\calS: |\calS| = r\\ \text{ ordered }}} \det(\sfX_{\calS:}) \prod_{i\in \rev(\calS)}c_{i}^*  \ket{\emptyset},\label{eq:remainplusdet}
\end{equation}
where the composition of creation operators on the right-hand side should appear in the order determined by $\calS$.
Note that $\ket{\text{remainder}}$ vanishes if the columns of $\sfX$ are mutually orthogonal as a consequence of \cref{eq:Anticommutation_Cs}.
For simplicity, denote by $\columns(\sfX)$ the ordered set of columns of $\sfX$.
Define
\begin{equation}
  \ket{\columns(\sfX)} \triangleq \Cliff(\sfX_{:1}) \dots \Cliff(\sfX_{:r}) \ket{\emptyset}.
  \label{eq:def_psi}
\end{equation}
The state \cref{eq:def_psi} generalizes the definition of subspace states \citep*{kerenidis2022quantum}. 
In particular, the latter correspond to  \cref{eq:def_psi} when all the columns of $\sfX$ are orthonormal, so that the remainder state in \eqref{eq:remainplusdet} is absent.
Note that, unlike subspace states, our construction \eqref{eq:def_psi} depends on the order of the columns because the remainder state does.
% In that case, it is denoted by $\ket{\text{col}(\sfX)}$ where $\text{col}(\sfX)$ is the range of $\sfX$ which is invariant if $\sfX\mapsto \sfX O$ where $O$ is orthogonal.
% In contrast, $\ket{\columns(\sfX)}$ does not in general have this property.
Now, from the complexity viewpoint, there is a circuit preparing the state \eqref{eq:def_psi} with $\mathcal{O}(nr)$ 2-qubit gates and a depth $\mathcal{O}(r \log n)$, in light of \cref{rem:complexity_clifford}.

% In general, $\ket{\columns(\sfX)}$ does not only depend on the space generated by the columns.
% If some columns are not mulually orthogonal, $\ket{\text{remainder}}$ depends in particular of \emph{the order in which the columns appear}.
%
\subsection{Observing occupation numbers\label{sec:defining_PP}}
As a consequence of \cref{e:CAR}, the fermion occupation number operators 
\begin{equation}
  N_1 = c_1^* c_1, \dots, N_n = c_n^* c_n,\label{eq:number_operators}
\end{equation}
are mutually commuting, and have as only eigenvalues $0$ and $1$ (respectively, no or one particle).
At this point, we list two trivial remarks which follow from \cref{e:CAR}: for all $1\leq i\leq n$
\begin{itemize}
  \item $N_i$ is the projector onto the eigenspace of $N_i$ of eigenvalue $1$,
  \item $\mathbb{I} - N_i$ is the projector onto the eigenspace of $N_i$ of eigenvalue $0$.
\end{itemize} 
For compactness, for all $1\leq i \leq n$, we define the indicator of $b\in\{0,1\}$ by the operator
\begin{align*}
  \mathbbm{1}_b(N_i) = b N_i + (1-b) (\mathbb{I} - N_i),
\end{align*}
which simply returns the \emph{projector onto the eigenspace of $N_i$ with eigenvalue $b$}.
Now, we are ready to introduce the random vector $[Y_1, \dots, Y_n]$, valued in $\{0,1\}^n$, defined by
\begin{equation}
  \mathbb{P}\left([Y_1 , \dots, Y_n] = [b_1, \dots, b_n]\right) = \bra{\columns(\sfX)} \mathbbm{1}_{b_1}(N_1) \dots \mathbbm{1}_{b_n}(N_n) \ket{\columns(\sfX)}.\label{eq:law_Ys}
\end{equation}
The above definition of random variables from a quantum state and commuting Hermitian operators is often called Born's rule.
To keep compact expressions, we do not indicate that this random vector depends on $\ket{\columns(\sfX)}$.
Abusing notation, we define the point process $Y$ on $\{1,\dots, n\}$ as follows. 
Let $\calC$ be a subset of $\{1,\dots,n\}$ and denote by $1(\calC)\in \mathbb{R}^n$ the \emph{indicator} vector with entries indexed by $\calC$ equal to $1$ and to $0$ otherwise.
Then, we define the law of $Y$ by
\begin{equation}
  \mathbb{P}(Y = \calC) \triangleq \mathbb{P}\left([Y_1 , \dots, Y_n] = 1(\calC)\right), \label{eq:def_Y}
\end{equation}
where the law of $[Y_1 , \dots, Y_n]$ is given by \eqref{eq:law_Ys}.
The point process $Y$ is in fact a DPP, which is described in further detail in \cref{sec:novel_DPP}.
For now, we rather focus on the process resulting from conditioning on $Y$ having cardinality $r$.
\begin{lemma}[Law of conditioned process]
  \label{l:conditional_probability_is_the_right_DPP}
  Let $Y$ be the point process defined in \cref{eq:def_Y}. We have 
  \begin{align*}
    \mathbb{P}\left(Y = \calC \big| |Y| = r\right) = \frac{1}{\det\sfX^\top \sfX}|\det\sfX_{\calC:}|^2.
  \end{align*}
\end{lemma}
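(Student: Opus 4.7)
My plan is to read off the joint probability $\mathbb{P}(Y=\calC)$ directly from the Born rule definition \eqref{eq:law_Ys}--\eqref{eq:def_Y} together with the explicit decomposition \eqref{eq:remainplusdet}, and then obtain $\mathbb{P}(|Y|=r)$ via the Cauchy--Binet formula. The first step is to notice that, for any $\calC\subseteq\{1,\dots,n\}$, the commuting projectors in \eqref{eq:law_Ys} evaluated at $(b_1,\dots,b_n)=1(\calC)$ multiply to the rank-one orthogonal projector onto the Fock basis vector $\ket{\calC}$ carrying a particle exactly at the sites of $\calC$, so that
\begin{equation*}
    \mathbb{P}(Y=\calC)=|\langle \calC|\columns(\sfX)\rangle|^{2}.
\end{equation*}

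Next, I would identify the $r$-particle component of $\ket{\columns(\sfX)}$ using \eqref{eq:remainplusdet}. The structural remark is that each factor $c_i+c_i^*$ changes the fermion number by exactly $\pm 1$, so applying $r$ such factors to $\ket\emptyset$ yields a vector supported on Fock states whose particle number has parity $r\bmod 2$. More precisely, the ``all creation'' contributions give the explicit sum in \eqref{eq:remainplusdet} and carry exactly $r$ particles, while every term that feeds into $\ket{\text{remainder}}$ involves at least one $c_{i}$ which, when propagated to the right via \eqref{e:CAR}, annihilates a previously created excitation and lowers the final particle count by at least two. Hence $\ket{\text{remainder}}$ is orthogonal to every $r$-particle Fock state, and for $|\calC|=r$ the overlap picks up only the summands in \eqref{eq:remainplusdet} whose underlying set is $\calC$, giving $|\langle\calC|\columns(\sfX)\rangle|^{2}=|\det \sfX_{\calC:}|^{2}$ once the reversal in $\prod_{i\in\rev(\calS)} c_i^*$ and the determinant sign are matched against the sorted Fock convention.

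Finally, I would marginalize. Summing the previous display over all subsets of size $r$ and invoking Cauchy--Binet,
\begin{equation*}
    \mathbb{P}(|Y|=r)=\sum_{|\calC|=r}|\det \sfX_{\calC:}|^{2}=\det(\sfX^{\top}\sfX),
\end{equation*}
and the claimed conditional probability follows upon division. The main obstacle is the orthogonality of $\ket{\text{remainder}}$ to the $r$-particle sector in the second step; this is a careful but routine parity/bookkeeping argument on the expansion of $\prod_{k=1}^{r}\!\bigl(\sum_i\sfX_{ik}(c_i+c_i^{*})\bigr)\ket\emptyset$. Sign subtleties in identifying ordered tuples with sorted Fock basis vectors do not affect the conclusion since only the squared modulus enters the final probability.
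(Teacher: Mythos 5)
Your proposal is correct and follows the same route as the paper: read $\mathbb{P}(Y=\calC)=|\det\sfX_{\calC:}|^2$ for $|\calC|=r$ off the Born rule and the decomposition \eqref{eq:remainplusdet}, then normalize via Cauchy--Binet. The only difference is that you supply an explicit parity/particle-number argument for why $\ket{\text{remainder}}$ has no overlap with the $r$-particle sector, whereas the paper takes that orthogonality as already established in the text surrounding \eqref{eq:dpp_state}; this is a harmless elaboration, not a different method.
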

In other words, by conditioning $Y$ on $|Y| = r$, we obtain the (projection) DPP with correlation kernel $\K$ described in \cref{eq:target_kernel}.
Note also that the conditioned process is invariant to permutation of the columns of $\sfX$.
\begin{proof}
First, by inspection of the coefficients at the right-hand side of \cref{eq:remainplusdet} and by using the law of $Y$ given by Born's rule \cref{eq:law_Ys}, for a subset $\calC$ such that $|\calC| = r$,  we have 
\begin{align}
  \mathbb{P}(Y = \calC) =  |\det\sfX_{\calC:}|^2, \label{eq:ProbY=C}
\end{align}
since the operator appearing in \cref{eq:law_Ys} with $[b_1,\dots,b_n] = 1(\calC)$ is the projector onto the subspace generated by $\prod_{i\in \calC}c_{i}^*  \ket{\emptyset}$.
Thus the probability that the result of the measure of $N_{1}+ \dots + N_{n}$ is $r$ is given by 
$$
\mathbb{P}\left(|Y| = r\right) = \sum_{\calS : |\calS| = r }\mathbb{P}\left(Y = \calS\right) = \sum_{\calS: |\calS| = r} |\det\sfX_{\calS:}|^2 = \det\sfX^\top\sfX,
$$
as a consequence of the Cauchy-Binet identity.
\end{proof}
\subsection{A rejection sampling algorithm}\label{sec:algo}
We give a rejection sampling algorithm for $\mathrm{DPP}(\K)$ as \cref{a:rejection_sampling_Clifford}.
\begin{algorithm}[h!]
  \begin{algorithmic}
    \Require $X\in \mathbb{R}^{n\times r}$ with $r< n $ and $\rank(X)=r$.
    \State Initialize $s \leftarrow 0$.
    \While{$s \neq r$} \hfill\emph{\teal{$\sharp$ Expected nb repetitions: $1/\det(\sfX^\top\sfX)$.}}
    \State  Build $\sfX$ with \cref{eq:col_norm}. \hfill\emph{\teal{$\sharp$ Cost: $\mathcal{O}(nr)$.}}
    \State Prepare $\ket{\columns(\sfX)} $. \hfill \emph{\teal{$\sharp$ Cost: $\mathcal{O}(nr)$.}}
    \State $[b_1, \dots, b_n]\leftarrow $ measure $[N_1, \dots, N_n]$ in $\ket{\columns(\sfX)}$.
    \State $s\leftarrow b_1 + \dots + b_n$.
    \EndWhile
    \Ensure $[b_1, \dots, b_n]\in \{0,1\}^n$.
  \end{algorithmic}
  \caption{Rejection sampling algorithm for $\mathrm{DPP}(\K)$.
  }
  \label{a:rejection_sampling_Clifford}
\end{algorithm}
Note that the state preparation is performed by using \cref{eq:def_psi} and \cref{app:Clifford}.
% \begin{itemize}
%   \item Use \cref{eq:col_norm} to build $\sfX$. Cost: $\mathcal{O}(nr)$.
%   \item Prepare $\ket{\columns(\sfX)} = \Cliff(\sfX_{:1}) \dots \Cliff(\sfX_{:r})\ket{\emptyset}$; see \cref{app:Clifford}. Cost: $\mathcal{O}(nr)$.
%   \item Measure the vector $[N_1, \dots, N_n]$ in $\ket{\columns(\sfX)}$, namely the $n$ qubits in the computational basis. Accept only if this vector contains exactly $r$ entries equal to $1$. Otherwise, repeat state preparation and measure. Expected number of resamplings: $1/\det(\sfX^\top\sfX)$.
% \end{itemize}
As a consequence of \cref{l:conditional_probability_is_the_right_DPP}, the acceptance probability of the output of this algorithm is $a = \det\sfX^\top\sfX$.
However, the latter determinant can take a small numerical value if the columns of $\sfX$ are close to be linearly dependent.
For instance, we give in \cref{ex:acceptance_barbell_graph} in the sequel an example where $X$ is the incidence matrix of a \emph{barbell} graph, for which $a$ decays exponentially with the number of nodes in the bar.
In such cases, the gain in preprocessing cost could be compensated by a long waiting time until acceptance in \cref{a:rejection_sampling_Clifford}.
In \cref{sec:amplitude_amplification}, we thus describe a procedure to increase the acceptance probability of \cref{a:rejection_sampling_Clifford} by modifying the quantum state, using the well-known amplitude amplification technique.
After that, \cref{sec:novel_DPP} further describes the point process obtained by omitting the rejection step, which has an interesting structure in the case where $X$ is an edge-vertex incidence matrix of a graph.

% \paragraph{Example: sampling uniform spanning trees.}

% Let $B$ denote the edge-vertex adjacency matrix of an unweighted finite and connected graph. 
% The Laplacian of this graph is $L = B^\top B$.
% \mf{more explanations here}
% For further use, denote by $\deg(i)$ the number of neighbors of node $i$.
% For sampling spanning trees, one needs to first choose a root node $r$.
% Then, $X = B_{:\hat{r}}$.
% Note that $\|X_{:,i}\|_2 = \sqrt{\deg(i)}$ for all nodes $i\neq r$.

% Let $D = \Diag(\deg)$. The normalized laplacian is $L^{(N)} = D^{-1/2}LD^{-1/2}$.
% Note that its spectrum lies in $[0,1]$.
% Let $(L^{(N)})_{\hat{r}\hat{r}}$ be the normalized Laplacian with the row and column associated with the root removed, which is non-singular.

% The acceptance proba is 
% \begin{equation*}
%   \det(L^{(N)})_{\hat{r}\hat{r}} = \frac{\det L_{\hat{r}\hat{r}}}{\prod_{i\neq r}\deg(i)}  = \frac{\prod_{\lambda(L)\neq 0}\lambda(L)}{n \prod_{i\neq r}\deg(i)}.
% \end{equation*}
% Recall that $\det L_{\hat{r}\hat{r}}$ does not depend on $r$.

%
\section{Amplitude amplification to reduce the number of rejections \label{sec:amplitude_amplification}}
\cref{a:rejection_sampling_Clifford} so far repeatedly prepares $\ket{\columns(\sfX)}$ in 
\eqref{eq:def_psi} and observes all number operators until a sample of cardinality $r$ is obtained.
The number of repetitions until a sample is obtained is a geometric random variable of mean $\det \sfX^\top\sfX $, as an immediate corollary of \cref{l:conditional_probability_is_the_right_DPP}.
In this section, we study the possibility to reduce the number of rejections by manipulating $\ket{\columns(\sfX)}$ to boost the probability of obtaining a sample of cardinality $r$, without changing the conditional distribution. 
This procedure, known as amplitude amplification \citep*{Brassard-Hoyer}, is possible in our case thanks to a circuit, proposed recently  by \cite{rethinasamy2024logarithmic}, that performs a coherent Hamming measurement at the expense of adding extra ancillary qubits.

\subsection{Coherent Hamming measurements}
\label{s:coherent_hamming}

For all $s\in\{0, \dots, n\}$, denote by $P_s$ the orthogonal projector onto
\begin{equation}
  \mathrm{Span}\left ( \ket{i_1, \dots, i_n}: i_1, \dots, i_n \in \{0,1\} \text{ and } \sum_{\ell = 1}^n i_\ell = s \right ),\label{eq:span_r}
\end{equation}
namely, the eigenspace of the number operator $N_1 + \dots + N_n$ with eigenvalue $s$; see \cref{eq:number_operators} for their definition.
Thus, $s$ is here the \emph{Hamming weight} of the string $i_1\dots i_n$, and can take any value between $0$ and $n$. 
We use control qubits to store this Hamming weight.
% \begin{equation}
%   P_s = \sum_{x \in \{0,1\}^n: \Vert x\Vert_1 = s} \ketbra{x}.\label{eq:P_s}
% \end{equation}
Assuming here for simplicity $n + 1 = 2^k$, \cite{rethinasamy2024logarithmic} take in total $k+n$ qubits, and construct a unitary $V$ acting on $(\mathbb{C}^2)^{\otimes (k+n)}$ such that, for any unit-norm $\ket{\psi} \in (\mathbb{C}^2)^{\otimes n}$, 
\begin{equation}
  V \left( \ket{0}^{\otimes k}\otimes \ket{\psi} \right) = \sum_{s=0}^n \ket{\bin(s)} \otimes P_s \ket{\psi}.
  \label{e:augmented_state}
\end{equation}
Here, the first register of $k$ qubits is thought of as \emph{control} qubits.
For all $s \in \{0,\dots,n\}$,  we defined
$
\bin(s) = [s_0, \dots, s_{k-1}] \in \{0,1\}^k$ where $s = \sum_{\ell=0}^{k-1} s_\ell 2^{\ell}$, and thus 
\begin{equation}
  \ket{\bin(s)} = \left((\sigma_x)^{s_0}\otimes \dots \otimes (\sigma_x)^{s_{k-1}}\right)\ket{0}^{\otimes k} = \ket{s_0, \dots, s_{k-1}}.\label{eq:bin}
\end{equation}
In other words, the state $\ket{\bin(s)}$ consists in preparing $\ket{0}^{\otimes k}$ in this control register, and then setting the $\ell$th qubit to $1$ if and only if the $\ell$th digit of $s$ in base $2$ is $1$.
% Replacing $\ket{\psi}$ in \eqref{e:augmented_state} to be our state $\ket{\columns(\sfX)}$ from \eqref{eq:def_psi}, and measuring 
% $$
%   \ketbra{r}_C \otimes N_1\dots N_n,
% $$
% the measurements corresponding to the second register will be a sample from the projection DPP in Lemma~\ref{lem:TBC}.
% \rb{Not sure we know how to measure $\ketbra{r}_C$. But at least we can use \eqref{e:augmented_state} to build an oracle for amplitude amplification.}
For the sake of completeness, we sketch here the construction by \cite{rethinasamy2024logarithmic} of the operator $V$ in \eqref{e:augmented_state}, and its decomposition in common single- and two-qubit gates.
For $s\in\{0, \dots, n\}$, consider first
\begin{equation}
  U_s = \sum_{r=0}^n \e^{\frac{2\i\pi r s}{n+1}}P_r.\label{eq:U_s}
\end{equation}  
Letting $R_Z(\phi) = \e^{-\i \phi/2 \times \sigma_z}$ be a so-called $Z$-rotation of angle $\phi$, one can check that\footnote{
  By, e.g., checking the matrix of $R_z(\phi)^{\otimes n}$ in the computational basis; see \citep[Section II]{rethinasamy2024logarithmic}.
}  
\begin{equation}
  U_s = \e^{\frac{\i\pi s n}{n+1}} (R_Z\left(\nicefrac{2\pi s}{n+1}\right))^{\otimes n},
  \label{e:def_U}
\end{equation}
so that $U_s$ only requires a circuit of depth $1$ of single-qubit gates. 
At this point, we introduce a notation for the following controlled-unitary gate
\begin{align*}
\mathrm{CU}_m =(I^{\otimes (m-1)}\otimes  \ketbra{0} \otimes I^{\otimes (k-m)}) \otimes I^{\otimes n}
+ (I^{\otimes (m-1)}\otimes  \ketbra{1}\otimes I^{\otimes (k-m)}) \otimes U_{2^{m-1}},
\end{align*}
which is controlled by the $m$th qubit of the control block and where the unitary transformation is given by \cref{eq:U_s} for $s=2^{m-1}$.
Again recall that $I=\left(\begin{smallmatrix}
  1 & 0\\
  0 & 1
\end{smallmatrix}\right)$.
Now let $V$ be defined by
\begin{equation}
  \label{e:def_V}
  V = (\mathrm{IQFT}\otimes I^{\otimes n}) \left(\prod_{m=1}^k \mathrm{CU}_m\right)  (H^{\otimes k} \otimes I^{\otimes n}),
\end{equation}
where $H = \nicefrac{1}{\sqrt{2}}\left(\begin{smallmatrix}
  1 & 1\\
  1 & -1
\end{smallmatrix}\right)$ denotes the Hadamard gate.
That $V$ can be built using one- and two-qubit gates is obvious from its form.
Indeed, $\mathrm{CU}_m$ is a controlled-$U_{2^{m-1}}$ gate, with a single control qubit.
By \eqref{e:def_U}, this can be written as a depth-$1$ product of controlled phase gates and a depth-$n$ product of controlled $Z$-rotations; see Remark 3 in \citep{rethinasamy2024logarithmic} for the exact implementation. 
The inverse quantum Fourier transform $\mathrm{IQFT}$ on the control register can be implemented using $\mathcal{O}(k^2)$ = $\mathcal{O}(\log^2 n)$ single- and two-qubit gates \citep[Chapter 5]{NiCh10}, and thus qualitatively corresponds to a negligible overhead.
Note that a constant number of gates can be further suppressed in the construction of $V$ by transferring rotations on the control qubits \citep{rethinasamy2024logarithmic}.
It remains to see that $V$ as defined in \eqref{e:def_V} satisfies \eqref{e:augmented_state}.
Indeed,
\begin{align*}
  & V \left( \ket{0}^{\otimes k} \otimes \ket{\psi} \right)\\
  &= \frac{1}{\sqrt{n+1}} (\mathrm{IQFT}\otimes I^{\otimes n})  \left(\prod_{m=1}^k \mathrm{CU}_m\right) \left( \sum_{y=0}^{n} \ket{\bin(y)} \otimes \ket{\psi} \right)\\
  &= \frac{1}{\sqrt{n+1}} \sum_{y=0}^{n} (\mathrm{IQFT}\otimes I^{\otimes n}) \left(\ket{\bin(y)} \otimes \prod_{m=1}^k (U_{2^{m-1}})^{y_m}\ket{\psi}\right),
\end{align*}
where $y_m\in \{0,1\}$ is the $m$-th digit (starting counting at zero) of $y$ in base $2$; see \cref{eq:bin}.
This gives
\begin{align*}
  V \left( \ket{0}^{\otimes k} \otimes \ket{\psi} \right)
  &= \frac{1}{\sqrt{n+1}} \sum_{y=0}^{n+1} (\mathrm{IQFT}\otimes I^{\otimes n}) \left(\ket{\bin(y)} \otimes U_y \ket{\psi}\right),\\
  &= \frac{1}{n+1} \sum_{y=0}^{n} \sum_{z=0}^{n} \sum_{r=0}^{n}  \e^{2\i\pi y \frac{s-z}{n+1}} \left(\ket{\bin(z)} \otimes P_s\ket{\psi}\right),
\end{align*}
where we substituted \cref{eq:U_s} to obtain the last equality.
But for $s,z\in\{0, \dots, n\}$, $\sum_{y=0}^{n+1} \e^{2\i\pi y \frac{s-z}{n+1}} = (n+1)\delta_{sz}$, so that
$$
  V \left( \ket{0}^{\otimes k} \otimes \ket{\psi} \right) = \sum_{s=0}^n  \ket{\bin(s)}\otimes P_s\ket{\psi},
$$
as claimed.
At the price of further sophistication, which we do not detail here, one can further manipulate the circuit to make the depth logarithmic in $n$, by adding a control register of $n$ qubits \citep[Section IV]{rethinasamy2024logarithmic}.

\subsection{Amplitude amplification}\label{sec:Amplitude_amplification}
Denote for convenience $\Cliff_{\sfX}= \Cliff(\sfX_{:1}) \dots \Cliff(\sfX_{:r})$, and for matching conventional notations of amplitude amplification, we write $\ket{\psi} \triangleq \Cliff_{\sfX}\ket{0}^{\otimes n} = \ket{\columns(\sfX)}$.
Define the following notations 
\begin{equation}
  \ket{\psi_1} = P_r \ket{\psi} \text{ and } \ket{\psi_0} =  (\mathbb{I} - P_r) \ket{\psi}, \label{eq:psi0_psi1}
\end{equation}
where $P_r$ is the orthogonal projector onto the space of states with Hamming weight $r$, namely the subspace defined in \cref{eq:span_r}.
Then 
\begin{equation}
\ket{\psi} = \ket{\psi_0} + \ket{\psi_1},\label{eq:psi_decomposition}
\end{equation}
and the probability, when measuring in the computational basis, to obtain a state with Hamming weight $r$ is 
\begin{equation}
  a \triangleq \braket{\psi_1}{\psi_1}. \label{eq:a}
\end{equation}
\citet{Brassard-Hoyer} first note that the following operator\footnote{Reusing the notations of \citep{Brassard-Hoyer}, Grover's operator is denoted by $Q$ -- hoping that it is clear from the context that it no related to a $QR$ decomposition. }
\begin{equation}
  Q(r,\sfX) \triangleq -\Cliff_{\sfX}S_0 \Cliff_{\sfX}^{*} S_r \label{eq:Q}
\end{equation}
leaves $\mathrm{Span}(\ket{\psi_0}, \ket{\psi_1})$ invariant, where 
\begin{equation}
  S_0 = \mathbb{I} - 2 P_0 \label{eq:S_0}
\end{equation}
flips the sign of $\ket{0}^{\otimes n}$ while leaving invariant its orthogonal complement.
Similarly,  
\begin{equation}
  S_r  = \mathbb{I} - 2 P_r \label{eq:S_r}
\end{equation}
flips the signs of components of Hamming weight $r$ and leaves invariant the orthogonal complement.
Note that $S_r$ reduces to $S_0$ when $r=0$.
To avoid cumbersome expressions, we now omit the dependence of $Q$ on $r$ and $\sfX$.
To better understand the definition of $Q$, we note that $Q$ can also be expressed as the composition of two reflections, i.e.,
\begin{equation}
    Q = - (\mathbb{I}  - 2P_{\ket{\psi}})(\mathbb{I} - 2 P_r),\label{eq:Q_as_reflections}
\end{equation}
where $P_{\ket{\psi}}$ denotes the orthogonal projector onto $\mathrm{Span}(\ket{\psi})$.
Hence, by inspection of \cref{eq:psi0_psi1} and \cref{eq:Q_as_reflections}, we see that -- when restricted to $\mathrm{Span}(\ket{\psi_0}, \ket{\psi_1})$ -- $Q$ acts as a rotation of angle $2\theta_a$, where $\theta_a \in [0, \pi/2]$ is defined by 
\begin{equation}
  \sin^2\theta_a = a.\label{eq:theta_a}
\end{equation}
This is the key ingredient of Grover's algorithm and amplitude amplification.
In particular, applying $m\geq 1$ times $Q$ to $\ket{\psi}$ yields 
\begin{equation}
  Q^m \ket{\psi} =  \sin ((2m+1)\theta_a) \frac{\ket{\psi_1}}{\sqrt{a}} +  \cos ((2m+1)\theta_a) \frac{\ket{\psi_0}}{\sqrt{1-a}}. \label{eq:Q**m_psi}
\end{equation}
  
%Note that the conditionally on the Hamming weight being $r$, the distribution of a measurement in the computational basis has not changed. 
At this point, recall the normalization \cref{eq:a}.
What has changed with respect to $\ket{\psi}$ in \cref{eq:psi_decomposition} is the probability to obtain a Hamming weight $r$, which is now $\sin^2 ((2m+1)\theta_a)$ rather than $a$. 
If $m = \lfloor \pi/4\theta_a\rfloor$, the amplified acceptance probability is guaranteed to satisfy 
\begin{align}
  \sin^2 ((2m+1)\theta_a) \geq \max(a, 1-a). \label{eq:amplified_acceptance_prob}
\end{align}
see \citep[Section 3]{boyer1998tight}.
Hence, if $m = \lfloor \pi/4\theta_a\rfloor$ and $a <1/2$, the probability that the resulting circuit's output is accepted is larger than $1-a$.
To achieve this amplification, the number of applications of $Q$ -- and thus $\Cliff_{\sfX}$ and its inverse in light of \cref{eq:Q} -- is
$$
m = \lfloor \pi/4\theta_a\rfloor = \Theta(1/\sqrt{a}),
$$
where we simply used \cref{eq:theta_a}.
Note that, if $a > 1/2$, $\lfloor \pi/4\theta_a\rfloor =0$.
% The expected number of applications of $\Cliff_{\sfX}$ and its inverse before success is then 
% $$
%   2 \lfloor \pi/4\theta_a\rfloor / \max(a, 1-a) = \Theta(1/\sqrt{a}).
% $$
% Compared to the $\Theta(1/a)$ expected waiting time of the initial algorithm, this is a quadratic speedup.
\begin{algorithm}[h!]
  \begin{algorithmic}
      \Require  $X\in \mathbb{R}^{n\times r}$ with $r< n $ and $\rank(X)=r$, and $a = \det \sfX^\top \sfX$.
      \State Initialize $s \leftarrow 0$.
      \State  Build $\sfX$ with \cref{eq:col_norm}. \hfill\emph{\teal{$\sharp$ Cost: $\mathcal{O}(nr)$.}}            
      \If{$a<1/2$} 
      \State{$m \leftarrow \lfloor \frac{\pi}{4\asin\sqrt{a}}\rfloor$.} 
      \Else
      \State { $m \leftarrow 0$.}
      \EndIf
      \While{$s\neq r$}   \hfill\emph{\teal{$\sharp$  Expected nb repetitions: $ \leq 1/(1-a)$.}}
      \State Prepare $Q^m\ket{\columns(\sfX)}$.  \hfill\emph{\teal{$\sharp$ Cost: $\mathcal{O}(nr)$.}}
      \State $[b_1, \dots, b_n] \leftarrow$  measure $[N_1, \dots, N_n]$ in $Q^m\ket{\columns(\sfX)}$.
      \State $s \leftarrow b_1 + \dots + b_n$.
      \EndWhile
      \Ensure $[b_1, \dots, b_n]\in \{0,1\}^n$.
  \end{algorithmic}
  \caption{Amplified RS for $\mathrm{DPP}(\K)$ given $\det \sfX^\top \sfX$.}
  \label{a:rejection_sampling_Clifford_amplified}
\end{algorithm}

\begin{remark}[Rule of thumb for the quadratic speedup]\label{rem:rule_of_thumb}
Consider the case of a small acceptance probability -- say $a < 0.01$. 
By quadratically approximating the amplified acceptance probability $\sin^2 ((2m+1)\theta_a)$, we see that $m=1$ already multiplies $a$ by a factor close to $9$, while $m=2$ corresponds to a factor close to $25$.
\end{remark}
Because of the analogy with Grover's algorithm, we call the parameter $m$ the number of Grover iterations.
The optimal choice of $m$ depends on the knowledge of $a$, which is an issue in practice. 
\cite{Brassard-Hoyer} propose a modification of their algorithm to actually estimate $a$ before running amplitude amplification. 
Another possibility is to estimate $a$ by classical means, as long as the computational complexity of this estimation step remains comparable to the overall cost of the sampling procedure.
In the context of this paper, the acceptance probability -- defined in \cref{eq:a} -- reads 
$$
a = \det\sfX^\top \sfX,
$$
and can be computed in $\mathcal{O}(nr^2)$ operations, since $X$ is $n\times r$.
Computing $a$ would thus defeat our initial purpose of lowering the cost of the QR preprocessing; see \cref{sec:intro}.
% A clear bottleneck of \cref{a:rejection_sampling_Clifford_amplified} is the computation of the exact acceptance probability $a$ in step $2$ which presumably has the same  cost as \cref{eq:QR}.
% Hence, boosting the acceptance probability to at least $1-a$ still costs ${O}(nr^2)$ classical operations.
Thus, we have to abandon exactness somewhere.
% We promote the use of a sketching algorithm -- e.g. \citep{boutsidis2017randomized} -- to prevent loosing the computational benefit of using \cref{eq:col_norm} as preprocessing.

% \rb{I would put Algorithm 2 immediately after \eqref{rem:rule_of_thumb} and explain the no-free lunch remark here as well. Then we discuss sketching.}
%If $\sfX^\top \sfX$ is available, the cost is $\mathcal{O}(r^3)$ which is in principle a lower cost compared with the computation of the singular value decomposition of $X$.
Fortunately, this determinant can also be approximated with the help of the sketching technique for log-determinants in \citep[Eq.\ 5]{boutsidis2017randomized}, which we now rephrase in our context where the eigenvalues of $\sfX^\top \sfX$ are upper bounded by $1$.
Let $\epsilon>0$ and $\delta\in (0,1)$. 
There is a randomized algorithm, known as Gaussian sketching, which takes $\sfX$ as input and outputs $\widehat{a}$ such that 
\begin{equation}
  \mathbb{P}(a^{2\epsilon} \leq \widehat{a}/a \leq 1/a^{2\epsilon})\geq 1-\delta,\label{eq:rel_guarantee_on_a}
\end{equation}
in time 
\begin{equation}
  \mathcal{O}\left(\frac{\text{nnz}(\sfX^\top \sfX)}{\theta_1} \frac{\log (1/\epsilon)}{\epsilon^{2}} \log(1/\delta)\right) \quad \text{ with } 0 < \theta_1 < \lambda_{\min}(\sfX^\top \sfX), \label{eq:sketching_cost}
\end{equation}
where $\text{nnz}(\sfX^\top \sfX)$ and $\lambda_{\min}(\sfX^\top \sfX)$ are respectively the number of non-zero entries and the smallest eigenvalue of $\sfX^\top \sfX$.
The RS algorithm using this sketch is \cref{a:rejection_sampling_Clifford_amplified_approx} and the guarantee for boosting acceptance is given in \cref{thm:guarantee_nb_Grover_epsilon}.
\begin{algorithm}[h!]
  \begin{algorithmic}
      \Require  $X\in \mathbb{R}^{n\times r}$ with $r< n $ and $\rank(X)=r$, $\epsilon \in (0, 1/2)$ and $\theta_1$ as in \cref{eq:sketching_cost}.
      \State Initialize $s \leftarrow 0$.
      \State  Build $\sfX$ with \cref{eq:col_norm}. \hfill\emph{\teal{$\sharp$ Cost: $\mathcal{O}(nr)$.}}
      \State  Compute an $\epsilon$-approximation $\widehat{a}$.\hfill\emph{\teal{$\sharp$ Cost: $\mathcal{O}\left(\frac{\text{nnz}(\sfX^\top \sfX)}{\theta_1} \frac{\log (1/\epsilon)}{\epsilon^{2}}\right)$ whp.}}  
      \If{$\widehat{a}^{\frac{1}{1+2\epsilon}}<1/2$} 
      \State{$m \leftarrow \left\lfloor \frac{\pi}{4\asin\sqrt{\widehat{a}}^{\frac{1}{1+2\epsilon}}} \right\rfloor$.} 
      \Else
      \State { $m \leftarrow 0$.}
      \EndIf
      \While{$s\neq r$}   \hfill\emph{\teal{$\sharp$  Expected nb repetitions: $\leq 1/( 1- \widehat{a} + o_\epsilon(1))$.}}
      \State Prepare $Q^m\ket{\columns(\sfX)}$.  \hfill\emph{\teal{$\sharp$ Cost: $\mathcal{O}(nr)$.}}
      \State $[b_1, \dots, b_n] \leftarrow$  measure $[N_1, \dots, N_n]$ in $Q^m\ket{\columns(\sfX)}$. \emph{\teal{$\sharp$ i.e.\ sample \cref{eq:law_Ys}.}} 
      \State $s \leftarrow b_1 + \dots + b_n$.
      \EndWhile
      \Ensure $[b_1, \dots, b_n]\in \{0,1\}^n$.
  \end{algorithmic}
  \caption{Amplified RS for $\mathrm{DPP}(\K)$ by sketching $\det \sfX^\top \sfX$.}
  \label{a:rejection_sampling_Clifford_amplified_approx}
\end{algorithm}
\begin{theorem}[Effect of an $\epsilon$-approximation on number of Grover steps]
  \label{thm:guarantee_nb_Grover_epsilon}
  Let $\epsilon \in(0 , 1/2)$ and let $\widehat{a}$ be an $\epsilon$-approximation of $
  a = \det\sfX^\top \sfX$ in the sense of \cref{eq:rel_guarantee_on_a}.
  Assume $\widehat{a}^{\frac{1}{1+2\epsilon}} <1/2$.
  If we take 
  \begin{equation}
    m(\widehat{a},\epsilon) = \lfloor \pi/4\asin\sqrt{\widehat{a}}^{\frac{1}{1+2\epsilon}} \rfloor \label{eq:m_epsilon}
  \end{equation} 
  Grover iterations, the acceptance probability of rejection sampling in \cref{a:rejection_sampling_Clifford_amplified_approx} is larger than
  \begin{equation}
    q(\widehat{a},\epsilon) \triangleq\sin^2\left(\left( 2 m(\widehat{a},\epsilon) + 1\right)\asin\sqrt{\widehat{a}}^{\frac{1}{1-2\epsilon}}\right).\label{eq:lower_bound_epsilon}
  \end{equation}
  Also, we have $\lim_{\epsilon\to 0_+}q(\widehat{a},\epsilon) \geq 1 - \widehat{a}$.
\end{theorem}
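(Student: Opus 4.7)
The plan is to convert the sketching guarantee on $\widehat{a}$ into an interval bracketing the true Grover angle $\theta_a=\asin\sqrt{a}$, then combine this with the amplitude amplification identity \cref{eq:Q**m_psi} to lower bound the acceptance probability after $m(\widehat{a},\epsilon)$ Grover iterations.

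First, I would unpack \eqref{eq:rel_guarantee_on_a}. Since $a\in(0,1]$, the sandwich $a^{2\epsilon}\leq\widehat{a}/a\leq a^{-2\epsilon}$ is equivalent to $\widehat{a}^{1/(1-2\epsilon)}\leq a\leq \widehat{a}^{1/(1+2\epsilon)}$; applying the monotone map $t\mapsto\asin\sqrt{t}$ then gives $\theta_-\leq\theta_a\leq\theta_+$ with $\theta_\pm:=\asin\sqrt{\widehat{a}^{1/(1\pm 2\epsilon)}}$. The hypothesis $\widehat{a}^{1/(1+2\epsilon)}<1/2$ forces $\theta_+<\pi/4$. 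By \cref{eq:Q**m_psi}, the acceptance probability after $m$ Grover steps is exactly $\sin^2((2m+1)\theta_a)$, so the task reduces to lower-bounding this quantity as $\theta_a$ varies over $[\theta_-,\theta_+]$.

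Next, I would plug in $m=\lfloor\pi/(4\theta_+)\rfloor$. By the definition of the floor, $(2m+1)\theta_+\in(\pi/2-\theta_+,\pi/2+\theta_+]$, so $(2m+1)\theta_a$ sweeps the window $[(2m+1)\theta_-,(2m+1)\theta_+]\subset[0,\pi/2+\theta_+]\subset[0,3\pi/4]$. On this window the map $\alpha\mapsto\sin^2\alpha$ is either monotone increasing or first increases and then decreases, so its infimum is attained at one of the two endpoints. The under-rotation endpoint $\theta_-$ contributes exactly $q(\widehat{a},\epsilon)$, while the over-rotation endpoint $\theta_+$ contributes $\sin^2((2m+1)\theta_+)\geq\cos^2\theta_+$ via the standard Boyer-Brassard-Hoyer-Tapp estimate \cref{eq:amplified_acceptance_prob}. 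The main obstacle is the endpoint comparison: one has to verify that the under-rotation endpoint actually binds, i.e.\ $\sin^2((2m+1)\theta_-)\leq\cos^2\theta_+$. This is automatic when $(2m+1)\theta_+\leq\pi/2$ (since $\sin^2$ is then monotone on the whole window), but in the other case it relies on $\theta_+<\pi/4$, on $\theta_-/\theta_+\leq 1$, and on a short trigonometric computation bounding $(2m+1)\theta_-$ away from $\pi/2$.

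Finally, for the limit $\epsilon\to 0_+$, both $\theta_-$ and $\theta_+$ collapse to $\theta_{\widehat{a}}:=\asin\sqrt{\widehat{a}}$ and $m(\widehat{a},\epsilon)$ to $m_0:=\lfloor\pi/(4\theta_{\widehat{a}})\rfloor$, so by continuity $q(\widehat{a},\epsilon)\to\sin^2((2m_0+1)\theta_{\widehat{a}})$. Invoking \cref{eq:amplified_acceptance_prob} with $a$ replaced by $\widehat{a}$ yields $\sin^2((2m_0+1)\theta_{\widehat{a}})\geq\max(\widehat{a},1-\widehat{a})\geq 1-\widehat{a}$, establishing the asymptotic claim.
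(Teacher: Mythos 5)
Your proposal follows the same high-level route as the paper's proof: bracket the true angle $\theta_a$ between $\theta_-=\asin\sqrt{\widehat a^{1/(1-2\epsilon)}}$ and $\theta_+=\asin\sqrt{\widehat a^{1/(1+2\epsilon)}}$, then bound $\sin^2((2m+1)\theta_a)$ over that bracket for $m=\lfloor\pi/(4\theta_+)\rfloor$. Where you differ is that you make the case distinction explicit: the paper writes the chain
$\sin^2((2m+1)\theta_{a,-})\leq\sin^2((2m+1)\theta_a)\leq\sin^2((2m+1)\theta_{a,+})$
only under the side condition $(m+1/2)\theta_{a,+}\leq\pi/4$, and then immediately substitutes $m=\lfloor\pi/(4\theta_{a,+})\rfloor$ without checking that this $m$ satisfies it. You correctly observe that it need not (e.g.\ $\theta_{a,+}$ just above $\pi/6$ gives $m=1$ yet $(m+\tfrac12)\theta_{a,+}>\pi/4$), and that the window $[(2m+1)\theta_-,(2m+1)\theta_+]$ only stays inside $[0,3\pi/4]$, so the minimum of $\sin^2$ sits at an endpoint and a two-sided comparison is needed.

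The genuine gap is at the step you defer to a ``short trigonometric computation.'' That step cannot in fact be carried out in general, because the endpoint comparison $\sin^2((2m+1)\theta_-)\leq\sin^2((2m+1)\theta_+)$ — which is exactly what is needed for $q(\widehat a,\epsilon)$ to be the infimum over the bracket — can fail. Concretely, for $\epsilon$ small the two angles $\theta_-$ and $\theta_+$ are close, and if $(2m+1)\theta_+$ lands just beyond $\pi/2$ (so $(2m+1)\theta_+=\pi/2+3\eta$ with small $\eta>0$ and $m=1$, say $\theta_+=\pi/6+\eta$), then $(2m+1)\theta_-$ can also exceed $\pi/2$, and since $\sin^2$ decreases past $\pi/2$ we get $\sin^2((2m+1)\theta_-)>\sin^2((2m+1)\theta_+)$. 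Your stated target $\sin^2((2m+1)\theta_-)\leq\cos^2\theta_+$ fails even more dramatically in this regime. So the obstacle you flag is real, and the hand-waved resolution does not close it; the paper's published proof has the same unclosed gap, just hidden behind the unverified side condition. A correct statement would have to either take the minimum of the two endpoint values as the bound, or add a hypothesis ruling out $(2m+1)\theta_+>\pi/2$. Your limit argument for $\epsilon\to 0_+$ is essentially sound, though one should note the floor can drop by one in the limit when $\pi/(4\theta_{\widehat a})$ is an integer; in that case one still gets $q\to\cos^2\theta_{\widehat a}=1-\widehat a$, so the conclusion survives.
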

Before proving \cref{thm:guarantee_nb_Grover_epsilon}, note that the lower bound on the acceptance probability \cref{eq:lower_bound_epsilon} as well as the number of Grover steps \cref{eq:m_epsilon} are easily computed by a user of \cref{a:rejection_sampling_Clifford_amplified_approx} given an $\epsilon$-approximation of the acceptance probability.
The increase of this lower bound as $\epsilon$ decreases is manifest in \cref{fig:lower_bound_epsilon} for three values of $\widehat{a}$.
Indeed, the bound goes rapidly above $1/2$ as $\epsilon$ decreases below $0.05$, whereas the number of Grover iterates in \cref{fig:m_epsilon} remains rather small.
  \begin{figure}[h!]
    \centering
    \begin{subfigure}{0.45\textwidth}
          \centering
          \includegraphics[scale = 0.4]{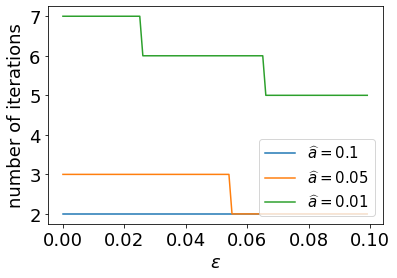}
          \caption{$m(\widehat{a},\epsilon)$.\label{fig:m_epsilon}}
    \end{subfigure}
    \hfill
    \begin{subfigure}{0.45\textwidth}
          \centering
          \includegraphics[scale = 0.4]{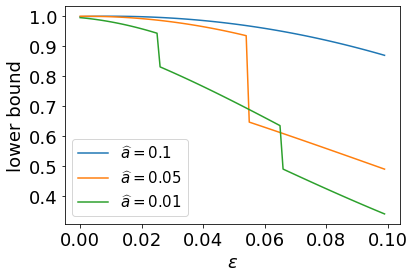}
          \caption{$q(\widehat{a},\epsilon)$.
        \label{fig:lower_bound_epsilon}}
    \end{subfigure}
  \caption{Effect of $\epsilon$-approximation of $a$ for amplitude amplification. 
  On the left-hand side, we report the number of Grover iterations \cref{eq:m_epsilon} for boosting the acceptance probability as in \cref{thm:guarantee_nb_Grover_epsilon}.
  On the right-hand side, we display the lower bound \cref{eq:lower_bound_epsilon} on the acceptance probability of \cref{a:rejection_sampling_Clifford_amplified_approx}.}
\end{figure}
\begin{proof}
  First, note that \cref{eq:rel_guarantee_on_a} guarantees that $\widehat{a}\leq 1$ if $0 < \epsilon < 1/2$. 
  Still, using \cref{eq:rel_guarantee_on_a}, we have
  \begin{equation}
      0 < \widehat{a}^{\frac{1}{1-2\epsilon}} \leq a \leq \widehat{a}^{\frac{1}{1+2\epsilon}} \leq 1\quad \text{ with } 0 < \epsilon < 1/2. \label{eq:delta_interval_a_hat}
  \end{equation}
  By using \cref{eq:theta_a} and the fact that $\asin(\cdot)$ is an increasing function on $[0,1]$, we obtain
  $$
  0 < \underbrace{\asin\widehat{a}^{\frac{1}{2-4\epsilon}}}_{\theta_{a,-} }\leq \theta_a \leq\underbrace{\asin\widehat{a}^{\frac{1}{2+4\epsilon}}}_{\theta_{a,+}}  \leq \pi/2.
  $$
  If $(m+1/2)\theta_{a,+} \leq \pi/4$, we have the following guarantee on the acceptance probability
  $$
  0 < \sin^2((2m+1)\theta_{a,-})  \leq \sin^2((2m+1)\theta_{a}) \leq \sin^2((2m+1)\theta_{a,+})   \leq 1,
  $$
  since $\sin^2(\cdot)$ is increasing on $[0,\pi/2]$.
  Thus, if we take $m = m(\widehat{a},\epsilon) =\lfloor \pi/4\theta_{a,+}\rfloor$, we can guarantee that the success probability is larger than $\sin^2((2m+1)\theta_{a,-})$.
\end{proof}
To complete the picture, we need to have a circuit for the so-called ``Grover operator" $Q$. 

\FloatBarrier
\subsection{A circuit for the Grover operator \label{sec:circuit_Grover}}
We have studied in \cref{sec:CAR_and_Clifford} a circuit for $\Cliff_{\sfX} = \Cliff(\sfX_{:1}) \dots \Cliff(\sfX_{:r})$, from which it is easy to deduce a circuit for its inverse $\Cliff_{\sfX}^* = \Cliff(\sfX_{:r}) \dots \Cliff(\sfX_{:1})$ by using \cref{eq:Anticommutation_Cs}. 
The only new components of the circuit are $S_0$ and $S_r$, given respectively in \cref{eq:S_0} and \cref{eq:S_r}.
While $S_0$ is standard (see \cref{prop:S_0}), we explain in this section how we use the construction of \cref{s:coherent_hamming} to build a circuit for $S_r$.
% Thus, we add $k$ control qubits to the target $n$ qubits as in \cref{e:augmented_state} to build a circuit which represents 
% $
%   Q = -\Cliff_{\sfX}S_0 \Cliff_{\sfX}^{*} S_r,
% $
% in the target register.

% we first apply  to $\ket{0}^{\otimes k} \otimes A\ket{0}$ the unitary $V$ of \cite{rethinasamy2024logarithmic}.
% Next, we apply a sequence of $n$ parallel $k$-controlled set $Z$ gates, then apply $A^{*}$ in the target register, then store the result of a C$^n$NOT gate in an extra qubit, and apply $n$ parallel $Z$ gates conditionally on that extra qubit being $1$. Then apply $A$, and $n$ parallel $Z$ gates again for the final minus sign.
% The resulting circuit is shown in Figure~\ref{f:TBC}.

We start by defining a reflection.
\begin{lemma}\label{prop:Grover_oracle}
  Let $1\leq r < n$ and denote by $\bin(r) = [r_0, \dots , r_{k-1}]$ the vector of its decomposition in base $2$.
  Define the unitary 
  $$
  O(r) = \left( \sigma_x^{1-r_0} \otimes \dots \otimes  \sigma_x^{1-r_{k-1}}\right)\mathrm{CZ}_k \left( \sigma_x^{1-r_0} \otimes \dots \otimes \sigma_x^{1-r_{k-1}} 
  \right),
  $$
  where $\mathrm{CZ}_{k}$ is a multiple controlled-$\sigma_z$ gate.
  For all $s\in\{0,\dots, n\}$, we have
  $$
  O(r)  \ket{\bin(s)} =  (-1)^{\delta_{rs}} \ket{\bin(s)}.
  $$
\end{lemma}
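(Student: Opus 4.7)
The plan is to compute $O(r)\ket{\bin(s)}$ directly by tracking the action of each factor on a computational basis state. The intuition is that the tensor product of bit-flip gates conjugates the multiply-controlled $\sigma_z$ so as to turn its ``marked'' basis element from $\ket{1\dots 1}$ into $\ket{\bin(r)}$.

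First, I would recall the action of the multiply-controlled $\sigma_z$ gate: for any $(b_0, \dots, b_{k-1}) \in \{0,1\}^k$,
\begin{equation*}
  \mathrm{CZ}_k \ket{b_0, \dots, b_{k-1}} = (-1)^{b_0 \cdots b_{k-1}} \ket{b_0, \dots, b_{k-1}},
\end{equation*}
so the gate picks up a minus sign if and only if $b_0 = \dots = b_{k-1} = 1$. Second, I would compute the action of the rightmost tensor product on $\ket{\bin(s)} = \ket{s_0, \dots, s_{k-1}}$. Since $\sigma_x \ket{b} = \ket{1-b}$, we have
\begin{equation*}
  \bigl(\sigma_x^{1-r_0} \otimes \dots \otimes \sigma_x^{1-r_{k-1}}\bigr) \ket{s_0, \dots, s_{k-1}} = \ket{s_0 \oplus (1-r_0), \dots, s_{k-1} \oplus (1-r_{k-1})}.
\end{equation*}

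The key observation is that the $i$-th entry of the resulting bit string equals $1$ if and only if $s_i = r_i$: indeed, if $r_i = 1$ the bit is unchanged and equals $s_i$, while if $r_i = 0$ the bit is flipped and equals $1 - s_i$, which is $1$ iff $s_i = 0 = r_i$. Consequently, the string is $(1, \dots, 1)$ if and only if $s_i = r_i$ for all $i$, i.e., $s = r$.

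Combining with step one, $\mathrm{CZ}_k$ then multiplies this state by $(-1)^{\delta_{rs}}$. Finally, I would apply the leftmost tensor product of bit flips, which undoes the first one since each factor is an involution, restoring $\ket{\bin(s)}$. This yields $O(r) \ket{\bin(s)} = (-1)^{\delta_{rs}} \ket{\bin(s)}$, as claimed. There is no real obstacle here; the only subtlety is keeping the bookkeeping of bit flips consistent with the base-$2$ digit convention fixed in \cref{eq:bin}.
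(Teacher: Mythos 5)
Your proof is correct and follows essentially the same route as the paper's: both argue by conjugation, showing that the outer bit-flip layer maps $\ket{\bin(r)}$ to $\ket{1}^{\otimes k}$ (the unique state on which $\mathrm{CZ}_k$ acts with a sign), then undoes the flips. Your version tracks the general state $\ket{\bin(s)}$ explicitly via the XOR identity $s_i \oplus (1-r_i) = 1 \iff s_i = r_i$, which is a cleaner bookkeeping of the same idea and also sidesteps a small typo in the paper's final line (which writes $\ket{\bin(r)}$ where $\ket{\bin(s)}$ is intended).
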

\begin{proof}
  First, by construction $\sigma_x^{1-r_0} \otimes \dots \otimes  \sigma_x^{1-r_{k-1}} \ket{\bin(r)} = \ket{1}^{\otimes k}$.
  Second, by definition, $\mathrm{CZ}_{k} \ket{1}^{\otimes k} = -  \ket{1}^{\otimes k}$ and $\mathrm{CZ}_{k}$ leaves invariant the orthogonal of $\ket{1}^{\otimes k}$. 
  Thus,
  $
  O(r)\ket{\bin(r)} = - ( \sigma_x^{1-r_0} \otimes \dots \otimes  \sigma_x^{1-r_{k-1}} )   \ket{1}^{\otimes k} = -\ket{\bin(r)}.
  $
  Furthermore, for $s\neq r$, 
  $
  O(r)\ket{\bin(s)} =  ( \sigma_x^{1-r_0} \otimes \dots \otimes  \sigma_x^{1-r_{k-1}} )   \ket{1}^{\otimes k} = \ket{\bin(r)}.
  $
  This is the desired result.
\end{proof}
\cref{eq:rep_S_r} leverages \cref{prop:Grover_oracle} and the construction of \citet{rethinasamy2024logarithmic} to yield a circuit for $S_r$.
\begin{lemma}[Oracle for Grover's algorithm]\label{eq:rep_S_r}
  Let the unitary operator $V$ be as defined in \eqref{e:def_V}.
  Let also $O(r)$ be as defined in \cref{prop:Grover_oracle}
  and consider a unit-norm $\ket{\psi}\in (\mathbb{C})^{\otimes n}$. We have
  $$
  V^* \left(O(r)\otimes I^{\otimes n}\right) V \left(\ket{0}^{\otimes k} \otimes \ket{\psi}\right) = \ket{0}^{\otimes k} \otimes S_r\ket{\psi}.
  $$
\end{lemma}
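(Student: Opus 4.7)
The plan is to chain together the two ingredients already in hand: equation \eqref{e:augmented_state}, which tells us what $V$ does to the initialized state, and Lemma \ref{prop:Grover_oracle}, which characterizes $O(r)$ on the computational basis of the control register. The proof is then a direct calculation where the control register is first used to coherently record the Hamming weight, then used as a ``switch'' to attach the correct sign, and finally uncomputed.

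More concretely, I would proceed in four short steps. First, apply $V$ to the initial state, using \eqref{e:augmented_state}, to obtain
\[
V\bigl(\ket{0}^{\otimes k}\otimes \ket{\psi}\bigr) = \sum_{s=0}^{n} \ket{\bin(s)}\otimes P_s\ket{\psi}.
\]
Second, act with $O(r)\otimes I^{\otimes n}$ and use Lemma \ref{prop:Grover_oracle}, which gives $O(r)\ket{\bin(s)} = (-1)^{\delta_{rs}}\ket{\bin(s)}$, yielding
\[
\sum_{s=0}^{n} (-1)^{\delta_{rs}}\ket{\bin(s)}\otimes P_s\ket{\psi} = \sum_{s=0}^{n}\ket{\bin(s)}\otimes P_s\ket{\psi} - 2\,\ket{\bin(r)}\otimes P_r\ket{\psi}.
\]
Third, observe that the $P_s$ are pairwise orthogonal projectors with $P_s P_r = \delta_{sr}P_r$, so the second term can be rewritten as $V(\ket{0}^{\otimes k}\otimes 2P_r\ket{\psi})$ by a second application of \eqref{e:augmented_state}. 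Thus, by linearity of $V$,
\[
(O(r)\otimes I^{\otimes n}) V\bigl(\ket{0}^{\otimes k}\otimes \ket{\psi}\bigr) = V\bigl(\ket{0}^{\otimes k}\otimes (\mathbb{I}-2P_r)\ket{\psi}\bigr).
\]
Fourth, apply $V^*$ on the left and recall $S_r = \mathbb{I}-2P_r$ from \eqref{eq:S_r} to conclude.

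There is essentially no obstacle here; the content is in setting up $V$ and $O(r)$ correctly, which has already been done. The only place where one has to be slightly careful is the rewriting in the third step: one must check that $P_s P_r = \delta_{sr} P_r$, which is immediate because $P_s$ projects onto the Hamming-weight-$s$ subspace and distinct Hamming-weight subspaces are orthogonal. This is also the reason the argument works without any assumption on $\ket{\psi}$ beyond unit norm.
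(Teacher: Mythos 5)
Your proof is correct and follows essentially the same route as the paper's: apply \eqref{e:augmented_state} to compute $V$ on the initial state, use \cref{prop:Grover_oracle} to insert the sign $(-1)^{\delta_{rs}}$, recognize the resulting sum as $V$ applied to $\ket{0}^{\otimes k}\otimes S_r\ket{\psi}$ via $P_s P_r = \delta_{sr}P_r$, and uncompute with $V^*$. The only cosmetic difference is that you split off the $s=r$ term and then invoke linearity of $V$, whereas the paper factors $(-1)^{\delta_{rs}}P_s = P_s S_r$ directly inside the sum before applying \eqref{e:augmented_state} to the unit vector $S_r\ket{\psi}$; both are fine, and your appeal to linearity correctly justifies applying \eqref{e:augmented_state} to the non-normalized vector $2P_r\ket{\psi}$.
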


\begin{proof}
  By construction, for all unit-norm $\ket{\psi}\in (\mathbb{C})^{\otimes n}$, we have \cref{e:augmented_state}, namely
  $
  V( \ket{0}^{\otimes k} \otimes \ket{\psi} ) = \sum_{s=0}^n  \ket{\bin(s)}\otimes  P_s\ket{\psi}.
  $
  By acting now with $O(r)$ on the first $k$ qubits and by using next the definition of $S_r$ in \eqref{eq:S_r}, we find 
  \begin{align}
    \left(O(r)\otimes I^{\otimes n}\right)V \left( \ket{0}^{\otimes k} \otimes \ket{\psi} \right) &= \sum_{s=0}^n  \ket{\bin(s)}\otimes  (-1)^{\delta_{rs}}P_s\ket{\psi} \nonumber \\
    &=\sum_{s=0}^n  \ket{\bin(s)}\otimes P_s S_r\ket{\psi}.\label{eq:effect_O(r)}
  \end{align}
  At this point, we leverage again \cref{e:augmented_state} but with $\ket{\psi}$ replaced by $S_r\ket{\psi}$ this time, i.e.,
  $$
  \ket{0}^{\otimes k} \otimes S_r\ket{\psi}  = V^* \sum_{s=0}^n  \ket{\bin(s)}\otimes  P_s S_r \ket{\psi}.
  $$
  Combining the latter identity with \cref{eq:effect_O(r)} yields the desired result.
\end{proof}
Next, \cref{prop:S_0} provides a circuit for implementing $S_0$.
\begin{lemma}\label{prop:S_0}
  Let $\mathrm{CZ}_{n}$ denote the multiple controlled-$\sigma_z$ gate of size $n$.
  Let 
  $$
  S_0 = \left(\sigma_x \otimes \dots \otimes \sigma_x\right)\mathrm{CZ}_{n}\left(\sigma_x \otimes \dots \otimes \sigma_x\right).
  $$
  We have 
  $
  S_0 \ket{0}^{\otimes n} = - \ket{0}^{\otimes n},
  $
  and, for all $\ket{\phi}$ orthogonal to $\ket{0}^{\otimes n}$, we have
  $
  S_0 \ket{\phi} = \ket{\phi}.
  $
\end{lemma}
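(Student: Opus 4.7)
The plan is to expand both sides in the computational basis and track how each of the three gates acts on computational basis states. The key facts are: (i) $\sigma_x^{\otimes n}$ acts as the bitwise complement on computational basis states and is its own inverse, and (ii) by definition, the multiply-controlled $\mathrm{CZ}_n$ satisfies $\mathrm{CZ}_n \ket{1}^{\otimes n} = -\ket{1}^{\otimes n}$ and leaves every other computational basis element invariant.

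First I would verify the claim on $\ket{0}^{\otimes n}$ by a direct three-step calculation: applying $\sigma_x^{\otimes n}$ sends $\ket{0}^{\otimes n}$ to $\ket{1}^{\otimes n}$; $\mathrm{CZ}_n$ then multiplies this by $-1$; and the second $\sigma_x^{\otimes n}$ sends $-\ket{1}^{\otimes n}$ back to $-\ket{0}^{\otimes n}$, giving $S_0 \ket{0}^{\otimes n} = -\ket{0}^{\otimes n}$.

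Next I would establish the second claim by checking it on any computational basis vector $\ket{b}$ with $b \in \{0,1\}^n \setminus \{(0,\dots,0)\}$, since these vectors form an orthonormal basis of the orthogonal complement of $\ket{0}^{\otimes n}$ and $S_0$ is linear. For such a $b$, the first $\sigma_x^{\otimes n}$ maps $\ket{b}$ to $\ket{\bar b}$, where $\bar b$ is the bitwise complement. Because $b \neq (0,\dots,0)$, we have $\bar b \neq (1,\dots,1)$, so by property (ii) the gate $\mathrm{CZ}_n$ leaves $\ket{\bar b}$ unchanged, and the final $\sigma_x^{\otimes n}$ returns the state to $\ket{b}$. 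Extending by linearity gives $S_0 \ket{\phi} = \ket{\phi}$ for every $\ket{\phi}$ orthogonal to $\ket{0}^{\otimes n}$.

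There is no real obstacle here — the argument is a one-line check on each computational basis vector, relying only on the definitions of $\sigma_x$ and $\mathrm{CZ}_n$. The only detail worth spelling out is the observation that $\{\ket{b} : b \neq (0,\dots,0)\}$ is an orthonormal basis of $\{\ket{0}^{\otimes n}\}^{\perp}$, so that the pointwise verification suffices.
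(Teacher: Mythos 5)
Your proof is correct and follows essentially the same route as the paper's one-line argument: both rely on the fact that $\mathrm{CZ}_n$ negates $\ket{1}^{\otimes n}$ and fixes its orthogonal complement, conjugated by the bit-flip $\sigma_x^{\otimes n}$. You simply spell out the conjugation and the basis-by-basis verification that the paper leaves implicit.
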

\begin{proof}
  Again, by definition, $\mathrm{CZ}_{n} \ket{1}^{\otimes n} = -  \ket{1}^{\otimes n}$ and $\mathrm{CZ}_{n}$ leaves invariant the orthogonal of $\ket{1}^{\otimes n}$. 
\end{proof}
By composing the building blocks in \cref{eq:rep_S_r} and \cref{prop:S_0}, we have a circuit for Grover's operator.
\begin{theorem}[A circuit for Grover's operator]
  In the setting of \cref{sec:Amplitude_amplification}, let $Q$ be given in \cref{eq:Q}.
  Define 
  $$ 
    C = -(I^{\otimes k} \otimes \Cliff_{\sfX})(I^{\otimes k} \otimes S_0)(I^{\otimes k} \otimes \Cliff_{\sfX}^*) V^* \left( O(r)\otimes I^{\otimes n}\right)V
  $$ 
  where $S_0$ is given in \cref{prop:S_0} and $O(r)$ is defined in \cref{prop:Grover_oracle}.
  We have 
  $$
    C (\ket{0}^{\otimes k} \otimes \ket{\psi}) = \ket{0}^{\otimes k} \otimes Q\ket{\psi}.
  $$
\end{theorem}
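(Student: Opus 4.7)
The statement reduces to a direct composition argument, so the approach is simply to read off the action of $C$ on the product state $\ket{0}^{\otimes k} \otimes \ket{\psi}$ one factor at a time, using \cref{eq:rep_S_r} as the only nontrivial ingredient. The key observation is that every factor in $C$ other than $V^*\left(O(r)\otimes I^{\otimes n}\right)V$ is of the form $I^{\otimes k}\otimes U$, hence acts trivially on the ancilla register; consequently, once the ancilla is returned to $\ket{0}^{\otimes k}$ it stays there, and the data register accumulates the composition $-\Cliff_{\sfX}\,S_0\,\Cliff_{\sfX}^{*}\,S_r$, which is exactly $Q$ by \cref{eq:Q}.

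More precisely, the plan is as follows. First, apply $V^*\left(O(r)\otimes I^{\otimes n}\right)V$ to $\ket{0}^{\otimes k}\otimes \ket{\psi}$; by \cref{eq:rep_S_r} this yields $\ket{0}^{\otimes k} \otimes S_r \ket{\psi}$. Next, apply $I^{\otimes k}\otimes \Cliff_{\sfX}^{*}$, which gives $\ket{0}^{\otimes k}\otimes \Cliff_{\sfX}^{*} S_r \ket{\psi}$; then $I^{\otimes k}\otimes S_0$, yielding $\ket{0}^{\otimes k}\otimes S_0 \Cliff_{\sfX}^{*} S_r \ket{\psi}$; then $I^{\otimes k}\otimes \Cliff_{\sfX}$, yielding $\ket{0}^{\otimes k}\otimes \Cliff_{\sfX} S_0 \Cliff_{\sfX}^{*} S_r \ket{\psi}$. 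Finally, the overall global sign $-1$ in the definition of $C$ produces $\ket{0}^{\otimes k}\otimes (-\Cliff_{\sfX} S_0 \Cliff_{\sfX}^{*} S_r)\ket{\psi} = \ket{0}^{\otimes k}\otimes Q\ket{\psi}$, which is the claimed identity.

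There is no real obstacle: the whole content of the theorem is that \cref{eq:rep_S_r} allows one to implement $S_r$ coherently while uncomputing the $k$ ancilla qubits, so that the remaining single-register factors compose cleanly to give $Q$. The only point worth being careful about is the ordering of factors in $C$: reading them right to left, one first realizes the reflection $S_r$ via the $V$--sandwich, then $\Cliff_{\sfX}^{*}$, then $S_0$, then $\Cliff_{\sfX}$, matching the right-to-left reading of $-\Cliff_{\sfX} S_0 \Cliff_{\sfX}^{*} S_r$ in \cref{eq:Q}.
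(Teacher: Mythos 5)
Your proof is correct and follows essentially the same route as the paper: invoke \cref{eq:rep_S_r} to convert $V^*\left(O(r)\otimes I^{\otimes n}\right)V$ into $S_r$ on the data register with the ancilla reset to $\ket{0}^{\otimes k}$, then note that the remaining factors act as identity on the ancilla and compose on the data register to give $Q$. The paper's proof is just a more compressed version of the same two steps.
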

\begin{proof}
  Using \cref{eq:rep_S_r}, we have
  \begin{align*}
    -C (\ket{0}^{\otimes k} \otimes \ket{\psi}) &= (I^{\otimes k} \otimes \Cliff_{\sfX})(I^{\otimes k} \otimes S_0)(I^{\otimes k} \otimes \Cliff_{\sfX}^*)\left(\ket{0}^{\otimes k} \otimes S_r\ket{\psi}\right)\\
    &=  \ket{0}^{\otimes k} \otimes \Cliff_{\sfX}S_0 \Cliff_{\sfX}^* S_r\ket{\psi},
  \end{align*}
  which concludes the proof.
\end{proof}
This concludes the presentation of our fast quantum sampler for projection DPPs. 
Before turning to experimental validation, we now examine the point process that results from measuring the occupation number without applying any rejection step.

\section{A  DPP with nonsymmetric correlation kernel  \label{sec:novel_DPP}}
To better describe the point process associated with \cref{eq:law_Ys}, we first need to introduce useful notations. 
Denote the upper triangular part of a square matrix $A$ by $\triu(A)$.
Let $\sfX \in \mathbb{R}^{n\times r}$ be a matrix with unit-norm columns.
% Note that $\triu (\sfX^\top \sfX )$ is invertible since its determinant equals $1$.
Now, define the skewsymmetric matrix $\skew(A) = \triu(A) - \triu(A)^\top$.
For instance, by taking a matrix $\sfX = [\sfX_{:1} \ \sfX_{:2} \ \sfX_{3:}]$ with columns of unit norm, it comes 
% $$
%   \triu \left(\sfX^\top \sfX \right)
%   =
%   \begin{bmatrix}
%     1 & \sfX_{:1}^\top \sfX_{:2} & \sfX_{:1}^\top \sfX_{:3}\\
%     0 & 1 & \sfX_{:2}^\top \sfX_{:3}\\    
%     0 & 0 &  1
%   \end{bmatrix}.
% $$
% and
$$
  \skew \left(\sfX^\top \sfX \right)
  =
  \begin{bmatrix}
    0 & \sfX_{:1}^\top \sfX_{:2} & \sfX_{:1}^\top \sfX_{:3}\\
    -\sfX_{:1}^\top \sfX_{:2} & 0 & \sfX_{:2}^\top \sfX_{:3}\\    
    -\sfX_{:1}^\top \sfX_{:3} & - \sfX_{:2}^\top \sfX_{:3} &  0
  \end{bmatrix}.
$$
\subsection{Definition and properties}
Consider again the state $\ket{\columns(\sfX)}=\Cliff(\sfX_{:1}) \dots \Cliff(\sfX_{:r}) \ket{\emptyset}$ as given in \cref{eq:dpp_state}.
Let $Y$ be the point process associated with the measure of the occupation numbers
\[
  [N_1, \dots, N_n] \text{ with } N_i \triangleq c_i^* c_i \text{ for } i\in\{1, \dots n\},
\] 
in the state $\ket{\columns(\sfX)}$; as given by \cref{eq:law_Ys}.
Hence, $[N_1, \dots, N_n]$ is associated with a random vector in $\{0,1\}^n$.
% Define the point process $Y$ matching this random vector as in \cref{sec:defining_PP}.
%
\begin{theorem}[DPP associated with a Clifford loader]\label{thm:pmf_DPP}
  Let $\sfX \in \mathbb{R}^{n\times r}$ be a matrix with unit-norm columns such that $\rank(\sfX) = r$.
  The point process $Y$ defined in \cref{eq:law_Ys} and associated with the measure of the occupation numbers $N_1, \dots, N_n$ in the state $\ket{\columns(\sfX)}$ given in \eqref{eq:def_psi} has the following law
  \begin{equation*}
    \mathbb{P}(Y = \calC) = \det\begin{bmatrix}
      0_{|\calC|}& \sfX_{\calC:}\\
      -(\sfX_{:\calC})^\top & \skew(\sfX^\top\sfX)
    \end{bmatrix},
  \end{equation*}
  where $\calC\subseteq \{1, \dots ,n\}$. In particular, we have
  $
    \mathbb{P}(\parity(|Y|) \neq  \parity(r)) = 0.
  $
\end{theorem}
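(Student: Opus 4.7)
The plan is to write $\mathbb{P}(Y=\calC)$ as a squared vacuum expectation of a product of operators linear in the $c_i, c_i^*$, and then invoke Wick's theorem in its Pfaffian form. By Born's rule \eqref{eq:def_Y} and \eqref{eq:def_psi}, $\mathbb{P}(Y=\calC) = |\braket{\calC}{\columns(\sfX)}|^2$. Ordering $\calC = \{j_1 < \dots < j_k\}$ with $\ket{\calC} = c^*_{j_1}\cdots c^*_{j_k}\ket{\emptyset}$, taking adjoints and anticommuting the $c$'s into increasing-index order via \eqref{e:CAR} gives $\bra{\calC} = (-1)^{k(k-1)/2}\bra{\emptyset} c_{j_1}\cdots c_{j_k}$, whose sign disappears upon squaring; hence
\begin{equation*}
\mathbb{P}(Y=\calC) = \left|\bra{\emptyset} c_{j_1}\cdots c_{j_k}\,\Cliff(\sfX_{:1})\cdots \Cliff(\sfX_{:r})\ket{\emptyset}\right|^2.
\end{equation*}
The parity statement then follows immediately: each of the $k+r$ factors above is linear in $c, c^*$ and thus flips particle-number parity, so when $k+r$ is odd the vacuum expectation vanishes, yielding $\mathbb{P}(\parity(|Y|)\neq \parity(r))=0$.

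When $k+r$ is even, the core step is the Pfaffian form of Wick's theorem for the Gaussian vacuum: for any operators $\phi_1,\dots,\phi_{k+r}$ linear in creation and annihilation operators,
\begin{equation*}
\bra{\emptyset}\phi_1\cdots\phi_{k+r}\ket{\emptyset} = \pf(M),\qquad M_{ab} = \bra{\emptyset}\phi_a\phi_b\ket{\emptyset} \text{ for } a<b,\ M_{ba}=-M_{ab}.
\end{equation*}
This identity can be proved by induction, splitting the leftmost $\phi_1$ into its $c$ and $c^*$ parts (noting $\bra{\emptyset}c^*=0$) and then anticommuting the resulting $c$ rightward using \eqref{e:CAR}: each step produces a scalar contraction $\{c,\phi_b\}$ with one of the later factors plus a shorter expectation of the same Wick type. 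The three two-point functions arising here are immediate from \eqref{e:CAR}: $\bra{\emptyset}c_{j_a}c_{j_b}\ket{\emptyset}=0$ since $c\ket{\emptyset}=0$; $\bra{\emptyset}c_{j_a}\Cliff(\sfX_{:m})\ket{\emptyset} = \sfX_{j_a,m}$, because only the $c_{j_a}c_i^*$ piece survives and contributes $\delta_{j_a,i}$; and $\bra{\emptyset}\Cliff(\sfX_{:m_1})\Cliff(\sfX_{:m_2})\ket{\emptyset} = (\sfX^\top \sfX)_{m_1,m_2}$, computed analogously.

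Assembling and antisymmetrizing these blocks yields exactly the $(k+r)\times(k+r)$ matrix
\begin{equation*}
M = \begin{bmatrix} 0_k & \sfX_{\calC:} \\ -(\sfX_{\calC:})^\top & \skew(\sfX^\top\sfX) \end{bmatrix},
\end{equation*}
where the bottom-right block arises as $\triu(\sfX^\top \sfX) - \triu(\sfX^\top\sfX)^\top$. The claim then follows from the classical identity $\det M = \pf(M)^2$ for skewsymmetric $M$. The most delicate part of the argument is the Pfaffian Wick formula in the case of operators that genuinely mix creation and annihilation (unlike textbook Majorana statements, which give the matrix as a commutator-based covariance); this is handled by using the $a<b$ convention in the definition of $M$ together with the straightforward induction sketched above, while everything else is bookkeeping with \eqref{e:CAR}.
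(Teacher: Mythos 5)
Your proposal is correct and takes essentially the same approach as the paper: both write $\mathbb{P}(Y=\calC)$ as the squared vacuum expectation $\bigl|\bra{\emptyset} c_{j_1}\cdots c_{j_k}\,\Cliff(\sfX_{:1})\cdots \Cliff(\sfX_{:r})\ket{\emptyset}\bigr|^2$, invoke Wick's theorem in Pfaffian form (the paper's \cref{eq:Wick} and \cref{eq:pf} in \cref{sec:expression_psi}), evaluate the same three two-point contractions from \cref{eq:trivial_contractions}, assemble the same block-skewsymmetric matrix, and finish with $\det = \pf^2$. The only cosmetic difference is that the paper additionally verifies $\sum_{\calC}\mathbb{P}(Y=\calC)=1$ via the L-ensemble identity \cref{eq:identity_KT}, whereas your argument relies (correctly) on unitarity of the Clifford loaders for this; and the paper tracks the reversed ordering $\rev(\calS)$ while you absorb the equivalent $(-1)^{k(k-1)/2}$ sign into the squared modulus.
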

Before giving a proof, we first emphasize that it is manifest that $\mathbb{P}(Y = \calC) \geq 0$ since it is the determinant of a skewsymmetric matrix\footnote{The eigenvalues of a real skewsymmetric matrix are pure imaginary and come in conjugated pairs.}.
It is however less obvious that $\mathbb{P}(Y = \calC) \leq 1$.
We show this fact in the following proof, by proving that $\sum_{\calC} \mathbb{P}(Y = \calC) = 1$.
Second, there is no reason for the above DPP to be invariant if the columns of $\sfX$ are permuted.
In \cref{rem:node_ordering_is_important} in the sequel, we discuss an example in which different orderings of the columns of $\sfX$ yield different DPP laws.
\begin{proof}
  Consider the definition of $\ket{\columns(\sfX)}$ in \eqref{eq:def_psi}.
A simple argument based on Wick's theorem (see \cref{sec:expression_psi} for more details) gives 
\begin{equation}
  \ket{\columns(\sfX)} = \sum_{\substack{0\leq \ell \leq r :\\  \text{parity}(r) = \text{parity}(\ell)}} \sum_{\substack{\calS: |\calS| = \ell\\ \text{ordered}}} 
  \pf 
  \begin{bmatrix}
    0_{\ell } & \sfX_{\calS:}\\
    -(\sfX^\top)_{:\calS} & \skew(\sfX^\top\sfX)
  \end{bmatrix}
  \prod_{i\in \rev(\calS)} c_i^* \ket{\emptyset},\label{eq:expression_colX}
\end{equation}
where $\sfX_{\calS:}\in \mathbb{R}^{\ell \times r}$ is by construction such that $\ell + r$ is even.
Note that, in the formula above, we take the convention that
\[
  \begin{bmatrix}
     & \sfX_{\emptyset:}\\
    -(\sfX^\top)_{:\emptyset} & \skew(\sfX^\top\sfX)
  \end{bmatrix} \triangleq 
    \skew(\sfX^\top\sfX)
  .
\]
To gain intuition, here is an example with $r = 3$. 
Let $i\in \{1,2,3\}$.
The amplitude in front of $c_i^* \ket{\emptyset}$ (corresponding to $\calS = (i)$) reads
\begin{equation*}
  \pf   \begin{bmatrix}
    0 & \sfX_{i1} & \sfX_{i2}& \sfX_{i3}\\
     & 0 & \sfX_{:1}^\top \sfX_{:2} & \sfX_{:1}^\top \sfX_{:3}\\    
     &  &  0 &  \sfX_{:2}^\top \sfX_{:3}\\
     &  &   & 0
  \end{bmatrix} =\sfX_{i1} (\sfX_{:2}^\top \sfX_{:3})  - \sfX_{i2} (\sfX_{:1}^\top \sfX_{:3}) + \sfX_{i3} (\sfX_{:1}^\top \sfX_{:2}).
\end{equation*}
We have -- as a particular case -- for any ordered subset such that $|\calS| = r$,
\begin{eqnarray*}
  \pf \begin{bmatrix}
    0_{r} & \sfX_{\calS :}\\
    - (\sfX^\top)_{:\calS}& \skew(\sfX^\top\sfX)
  \end{bmatrix} = (-1)^{r(r-1)/2} \det(\sfX_{\calS :}),
\end{eqnarray*}
as a consequence of \cref{lem:pf_saddle}, given in appendix,
so that we recover the second term in \cref{eq:remainplusdet}.
Now let $\calC$ be a subset of $\{1,\dots, n\}$ with $|\calC| \leq r$.
Coming back to the expression of $\ket{\columns(\sfX)}$, the probability $\mathbb{P}(Y = \calC)$ is given by the squared modulus of the coefficient of $\prod_{i\in \calC} c_i^* \ket{\emptyset}$ in \cref{eq:expression_colX} as a consequence of \cref{eq:law_Ys} (Born's rule), i.e.,
\begin{equation*}
  \mathbb{P}(Y = \calC) = 
  \left|
    \pf  
    \begin{bmatrix}
    0_{\ell} & \sfX_{\calC:}\\
    -(\sfX^\top)_{:\calC} &\skew(\sfX^\top\sfX) 
    \end{bmatrix}
    \right|^2 
    =
    \det
    \begin{bmatrix}
      0_{\ell} & \sfX_{\calC:}\\
      -(\sfX^\top)_{:\calC} &\skew(\sfX^\top\sfX) 
    \end{bmatrix},
\end{equation*}
since all matrices have real entries here.
Note that this is a well-normalized probability mass function. 
Let us briefly explain why. 
First, the matrix 
\begin{equation*}
  \begin{bmatrix}
    0_{\ell } & \sfX_{\calC:}\\
    -(\sfX^\top)_{:\calC} &\skew(\sfX^\top\sfX) 
  \end{bmatrix}
\end{equation*}
is skewsymmetric with an even number of rows and columns. 
Hence, its determinant cannot be negative.
Second, we show that $\mathbb{P}(Y = \calC)$ sums up to unity.
Let us use the following identity \citep[Theorem 2.1]{kulesza2012determinantal}
\begin{equation}
  \sum_{\substack{\calC^\prime\subseteq \{1,\dots, n^\prime\} \\
\calA^\prime\subseteq \calC^\prime}}\det L_{\calC^\prime\calC^\prime} = \det(L + 1_{\overline{\calA^\prime}}),\label{eq:identity_KT}
\end{equation}
by choosing $n^\prime = r+n$.
Here, we introduced the diagonal matrix $[1_{{\calC}}]_{i,j} = \delta_{ij} 1(i\in \calC)$ whose diagonal contains the indicator vector of $\calC$.
Upon denoting 
\begin{align}
  L = 
  \begin{bmatrix}
    0_{n } & \sfX\\
    -\sfX^\top &\skew(\sfX^\top\sfX) 
  \end{bmatrix},\label{eq:L}
\end{align}
we have that 
\begin{align*}
  \sum_{\substack{\calC^\prime\subseteq \{1,\dots, r+n\} \\
  n+\{1,\dots,r\}\subseteq \calC^\prime}}\det L_{\calC^\prime\calC^\prime} = \det(L + 1_{\overline{n+\{1,\dots,r\}}}) 
  = \det  
  \begin{bmatrix}
    I_{n} & \sfX\\
    -\sfX^\top &\skew(\sfX^\top\sfX) 
  \end{bmatrix}.
\end{align*}
By using the formula for the determinant of block matrices, we find
\begin{equation*}
  \det  \begin{bmatrix}
    I_{n} & \sfX\\
    -\sfX^\top &\skew(\sfX^\top\sfX) 
  \end{bmatrix}
  = 
  \det(\skew(\sfX^\top\sfX) +  \sfX^\top\sfX ).
\end{equation*}
Since $\skew(\sfX^\top\sfX) +  \sfX^\top\sfX $ is triangular with only ones on its diagonal, its determinant is equal to $1$. 
This completes the proof.
\end{proof}
Actually, the proof of \cref{thm:pmf_DPP} inspires the following generalization.
\begin{theorem}\label{thm:pmf_DPP_A}
  Let $X \in \mathbb{R}^{n\times r}$ be a matrix with linearly independent columns and let $A \in \mathbb{R}^{r \times r}$ be skewsymmetric. 
  The determinantal point process with law 
  \begin{equation*}
    \mathbb{P}(Y = \calC) = \frac{1}{\det(A + X^\top X)} \det\begin{bmatrix}
      0_{|\calC| } & X_{\calC:}\\
      -(X_{\calC :})^\top & A
    \end{bmatrix} \text{ with } \calC\subseteq \{1, \dots ,n\}
  \end{equation*}
  is well-defined.
\end{theorem}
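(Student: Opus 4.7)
My plan is to verify the two requirements that make the claimed mass function a bona fide probability distribution: non-negativity of each atom, and correct normalization. The structure will closely mirror the normalization argument in the proof of \cref{thm:pmf_DPP}, and I will reuse its two main ingredients: the Kulesza--Taskar identity \cref{eq:identity_KT} and a block Schur complement. No use of the quantum-mechanical derivation is needed; the statement is purely combinatorial.

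For non-negativity, I would observe that for each $\calC \subseteq \{1, \dots, n\}$ the matrix
$$
M_\calC \triangleq \begin{bmatrix} 0_{|\calC|} & X_{\calC:}\\ -(X_{\calC:})^\top & A \end{bmatrix}
$$
is real skewsymmetric of size $|\calC| + r$, since $A$ is skewsymmetric and the two off-diagonal blocks are each other's negative transposes. The determinant of a real skewsymmetric matrix is non-negative: it vanishes in odd dimension and equals the square of the Pfaffian in even dimension. Hence $\det M_\calC \geq 0$, and as a by-product the putative distribution is supported on subsets with $\parity(|\calC|) = \parity(r)$.

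To identify the normalizing constant, I would apply \cref{eq:identity_KT} verbatim as in the proof of \cref{thm:pmf_DPP}: set $n' = r+n$, take
$$
L = \begin{bmatrix} 0_n & X\\ -X^\top & A \end{bmatrix},
$$
and condition on $\calA' = n + \{1, \dots, r\}$. Writing any such $\calC' = \calC \cup (n+\{1,\dots,r\})$ with $\calC \subseteq \{1,\dots,n\}$, the left-hand side reorganizes as $\sum_\calC \det M_\calC$. The right-hand side becomes $\det\begin{bmatrix} I_n & X\\ -X^\top & A \end{bmatrix}$, and the block determinant formula (Schur complement with respect to the invertible block $I_n$) collapses it to $\det(A + X^\top X)$.

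It remains to check that $\det(A + X^\top X) > 0$, so the probabilities are well defined. Invertibility is immediate: for any $v$, $v^\top(A + X^\top X)v = v^\top X^\top X v = \|Xv\|_2^2$ using $v^\top A v = 0$ for skewsymmetric $A$, and linear independence of the columns of $X$ forces $v = 0$. Combined with the previous step, a sum of non-negative determinants equals a non-zero number, so $\det(A + X^\top X)$ must be strictly positive. I do not anticipate a serious obstacle: the skewsymmetry of $A$ causes no difficulty since $A$ itself is never inverted, and the only minor bookkeeping point is adopting the convention $M_\emptyset = A$ already used in \cref{thm:pmf_DPP}.
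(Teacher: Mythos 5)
Your proposal is correct, and the overall architecture mirrors the paper's: non-negativity of each atom follows from skewsymmetry of $M_\calC$, and normalization is inherited from the Kulesza--Taskar identity plus the block Schur complement exactly as in the proof of \cref{thm:pmf_DPP}. The one place you diverge is in establishing $\det(A + X^\top X) > 0$. The paper factorizes $\det(A + X^\top X) = \det(X^\top X)\det(S + I_r)$ with $S = (X^\top X)^{-1/2} A (X^\top X)^{-1/2}$ skewsymmetric, then relies on the fact that $\det(I_r + S) > 0$ because the eigenvalues of a real skewsymmetric matrix come in conjugate purely-imaginary pairs, so $\det(I_r+S) = \prod(1 + \nu_\ell^2)$. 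You instead first establish invertibility directly from the quadratic form $v^\top (A + X^\top X) v = \lVert X v \rVert_2^2 > 0$ for $v \neq 0$, then conclude positivity by noting the determinant equals a sum of non-negative Pfaffian squares and is non-zero. Both are sound. The paper's route is self-contained for that single step and does not lean on the normalization identity; yours avoids matrix square roots and the Youla-type spectral fact, at the cost of needing the normalization sum in hand first -- but since you prove that anyway, there is no circularity and the proof stands.
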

\begin{proof}
  Note that $ \det(A + X^\top X) = \det( X^\top X) \det(S + I_{r})$ where $$S = ( X^\top X)^{-1/2} A ( X^\top X)^{-1/2}$$ is skewsymmetric.
  Thus, $ \det(A + X^\top X)>0$. 
  Hence, $\mathbb{P}(Y = \calC) \geq 0$.
  That it is the correct normalization factor follows from the arguments in the proof of \cref{thm:pmf_DPP}.
\end{proof}
Note that similar DPPs -- called \emph{$L$-ensembles} --  were described in \citep{gartrell2019learning,gartrell2021scalable} which consider the case of a skewsymmetric  matrix $L$ as in \cref{eq:L}. 
Incidentally, the DPP in \cref{thm:pmf_DPP_A} involves only specific principal minors of the matrix $L$ in \cref{eq:L}, namely the ones which always include the last $r$ rows and columns.
In light of this remark, this type of DPPs is a skewsymmetric analogue of the so-called \emph{extended $L$-ensembles} of \citet{tremblay2023extended}, where the authors only consider Hermitian matrices $L$.
For completeness, we give the correlation kernel of the DPP defined in \cref{thm:pmf_DPP_A} in \cref{corol:kernel_DPP}, which has a particularly intuitive form since it is a deformation of the orthogonal projector \cref{eq:target_kernel}.
\begin{corollary}\label{corol:kernel_DPP}
  Let $Y$ be the point process of \cref{thm:pmf_DPP_A}. 
  We have 
  \begin{equation*}
    \mathbb{P}(\calA \subseteq Y) = \det(X (A + X^\top X)^{-1}X^\top)_{\calA\calA}.
  \end{equation*}
\end{corollary}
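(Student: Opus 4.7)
My plan is to compute $\mathbb{P}(\calA \subseteq Y) = \sum_{\calC \supseteq \calA} \mathbb{P}(Y = \calC)$ using the identity \cref{eq:identity_KT} already invoked in the proof of \cref{thm:pmf_DPP}, and then to simplify the resulting block determinant by two block-row operations followed by a Schur complement.

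Set $M \triangleq A + X^\top X$, which is invertible by the argument given in the proof of \cref{thm:pmf_DPP_A}. First I would apply \cref{eq:identity_KT} to the $(n+r)\times(n+r)$ matrix $L$ from \cref{eq:L} with $\calA' = \calA \cup (n+\{1,\dots,r\})$. Each $\calC \supseteq \calA$ is in bijection with $\calC' = \calC \cup (n+\{1,\dots,r\}) \supseteq \calA'$, and $\mathbb{P}(Y=\calC) = \det L_{\calC'\calC'}/\det M$ by \cref{thm:pmf_DPP_A}, so summing over $\calC \supseteq \calA$ yields
\[
\mathbb{P}(\calA \subseteq Y) = \frac{1}{\det M}\, \det\begin{bmatrix} 1_{\overline{\calA}} & X \\ -X^\top & A \end{bmatrix},
\]
where $1_{\overline{\calA}}$ denotes the $n\times n$ diagonal indicator of $\{1,\dots,n\}\setminus \calA$.

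Next I would partition rows and columns into the three blocks corresponding to $\calA$, $\overline{\calA}$ and $\{n+1,\dots,n+r\}$. The middle block-row reads $[\,0\ \ I\ \ X_{\overline{\calA}:}\,]$; adding $X_{\overline{\calA}:}^\top$ times this block-row to the bottom block-row kills its $\overline{\calA}$-entries and replaces the bottom-right $A$ by $A + X_{\overline{\calA}:}^\top X_{\overline{\calA}:} = M - X_{\calA:}^\top X_{\calA:}$. Moving the $\overline{\calA}$-block to the end makes the matrix block upper-triangular with an identity on that diagonal block, so the determinant reduces to
\[
\det \begin{bmatrix} 0 & X_{\calA:} \\ -X_{\calA:}^\top & M - X_{\calA:}^\top X_{\calA:} \end{bmatrix}.
\]

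The crucial step is then to add $X_{\calA:}^\top$ times the top block-row to the bottom block-row: this cancels $-X_{\calA:}^\top X_{\calA:}$ in the corner and yields $\det\begin{bmatrix} 0 & X_{\calA:} \\ -X_{\calA:}^\top & M \end{bmatrix}$. Since $M$ is invertible, a Schur complement with respect to the $M$ block gives this determinant as $\det(M)\det(X_{\calA:} M^{-1} X_{\calA:}^\top)$. Dividing by $\det M$ and recognizing $X_{\calA:} M^{-1} X_{\calA:}^\top$ as the principal submatrix of $X M^{-1} X^\top$ indexed by $\calA$ yields the claim. I do not anticipate any genuine obstacle; the main trick is the second row operation, which turns the awkward $M - X_{\calA:}^\top X_{\calA:}$ back into $M$ and so bypasses a detour through the Sherman--Morrison--Woodbury identity.
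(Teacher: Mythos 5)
Your proof is correct. Both you and the paper start from the same point, invoking the identity \cref{eq:identity_KT} to turn the sum over $\calC \supseteq \calA$ into a single $(n+r)\times(n+r)$ determinant $\det\left(\begin{smallmatrix} 1_{\overline{\calA}} & X \\ -X^\top & A \end{smallmatrix}\right)$, but from there the two arguments diverge. The paper passes through two further citations: the identity from the proof of \citep[Theorem~2]{kulesza2012determinantal} that rewrites $\det(L + 1_{\overline{\calA'}})$ in terms of the principal minor $\det\bigl(L(L+1_{\overline{n+\{1,\dots,r\}}})^{-1}\bigr)_{\calA'\calA'}$, and then the block-matrix inverse formula — both left to the reader. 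You instead manipulate the block determinant by hand, with the main insight being the second block-row operation (adding $X_{\calA:}^\top$ times the top block-row to the bottom one) that converts the awkward $M - X_{\calA:}^\top X_{\calA:}$ back into $M$; a Schur complement then finishes in one line. Your route is more elementary, self-contained, and avoids the K\&T machinery. Incidentally, by carrying the explicit $1/\det M$ normalization through, your argument also sidesteps the paper's unmotivated interim claim that $\det(A + X^\top X) = 1$, which does not hold in the general setting of \cref{thm:pmf_DPP_A} and presumably reflects a silent rescaling; your version is cleaner on this point.
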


\begin{proof}
  Let us again use the identity \cref{eq:identity_KT}
  by choosing $n^\prime = r+n$ and  $L$ to be the $(n+r)\times (n+r)$ matrix given by
  \begin{align}
    L = 
    \begin{bmatrix}
      0_{n} & X\\
      -X^\top & A
    \end{bmatrix},\label{eq:L}
  \end{align}
  and $\calA^\prime = r+ \calA$ with $\calA \subseteq \{1,\dots ,n\}$.
  At this point, we can compute 
  \[
    \det(L + 1_{\overline{n+\{1,\dots,r\}}}) = \det(A + X^\top X ) = 1,
  \]
  where we used the notation $[1_{{\calC}}]_{i,j} = \delta_{ij} 1(i\in \calC)$.
  Hence, $L + 1_{\overline{n+\{1,\dots,r\}}}$ is invertible.
  Thus, we find
  \begin{align}
    \mathbb{P}(\calA \subseteq Y) &= \sum_{\calC: \calA \subseteq \calC }\mathbb{P}(Y= \calC) \nonumber\\
    &= \det(L + 1_{\overline{\calA^\prime}})\nonumber\\
    &= \det(L + 1_{\overline{n+\{1,\dots,r\}}}) \det\left(L\left(L+ 1_{\overline{n+\{1,\dots,r\}}}\right)^{-1}\right)_{\calA^\prime\calA^\prime}.\label{eq:subset_proba}
  \end{align}
  To obtain the last equality, we used identities from \citep[proof of Theorem 2]{kulesza2012determinantal}.
  The result follows from the use of the well-known formula for the inverse of a block matrix.
  % Leveraging the formula for the inverse of block matrices and the identity
  % $$
  %   \skew(\sfX^\top \sfX) + \sfX^\top \sfX + \mathbb{I} = 2\triu(\sfX^\top \sfX),
  % $$
  % a simple calculation gives 
  % $$
  % (L+\mathbb{I})^{-1} = 
  % \begin{bmatrix}
  %   (2\triu(\sfX^\top \sfX))^{-1} & - (2\triu(\sfX^\top \sfX))^{-1}X^\top\\
  %   X(2\triu(\sfX^\top \sfX))^{-1} & \mathbb{I} - \sfX(2\triu(\sfX^\top \sfX))^{-1}\sfX^\top
  % \end{bmatrix}
  % $$
  % Thus, recalling
  % \begin{align}
  %   L = 
  %   \begin{bmatrix}
  %     \skew(\sfX^\top\sfX) & \sfX^\top\\
  %     -\sfX & 0_{n \times n}
  %   \end{bmatrix},\label{eq:L}
  % \end{align}
  % we have
  % $$
  % L(L+\mathbb{I})^{-1} = \begin{bmatrix}
  %   \mathbb{I} - \left(2\triu(\sfX^\top \sfX)\right)^{-1} & \left(2\triu(\sfX^\top \sfX)\right)^{-1} \sfX^\top\\
  %   -X \left(2\triu(\sfX^\top \sfX)\right)^{-1}  & X \left(2\triu(\sfX^\top \sfX)\right)^{-1} \sfX^\top
  % \end{bmatrix},
  % $$
  % where we denoted an omitted block by a $*$.
  % Let $\calA = \{a_1, \dots, a_s\}\subseteq \{1,\dots, n\}$.
  % By a substitution into \cref{eq:inclusion_proba} and by choosing $\calA^\prime = \{a_1 +r , \dots, a_s+r\}\subseteq \{r+1,\dots, r+n\}$, we find
  % \begin{align*}
  %   \mathbb{P}(\calA \subseteq Y) = 2^r\det(2^{-1}\sfX [\triu(\sfX^\top \sfX)]^{-1}\sfX^\top)_{\calA\calA}= 2^{r-|\calA|}\det(\sfX [\triu(\sfX^\top \sfX)]^{-1}\sfX^\top)_{\calA\calA} .
  % \end{align*}
\end{proof}
Next, we give in \cref{lem:spectral_decomposition} an expression of the spectral decomposition of the correlation kernel which is later used to derive the law of the number of points sampled by the process.
\begin{lemma}[Spectral decomposition]\label{lem:spectral_decomposition}
  Under the hypotheses of \cref{thm:pmf_DPP_A}, let $K = X (A + X^\top X)^{-1}X^\top$ and define the $r\times r$ matrix $S = (X^\top X)^{-1/2} A (X^\top X)^{-1/2}$.
  On the one side, if $r = 2 k+1$ for some $k\geq 1$, we have 
  \begin{equation}
        K = V\left(v_0 v_0^\top + \sum_{\ell=1}^{k} \frac{1 -\rmi \nu_\ell}{1 + \nu_\ell^2} u_\ell u_\ell^* + \frac{1 + \rmi \nu_\ell}{1 + \nu_\ell^2} \bar{u}_\ell \bar{u}_\ell^*\right)V^\top,\label{eq:K_spectral}
  \end{equation}
  where $V = X(X^\top X)^{-1/2}$ has orthonormal columns.
  In \cref{eq:K_spectral}, we used the Youla decomposition of $S$, namely the normalized eigenpairs $(i\nu_\ell, u_\ell)$ and $(-i\nu_\ell, \bar{u}_\ell)$ with $\nu_\ell$ real for all $\ell\in\{1,\dots,k\}$, whereas $v_0$ is the unpaired eigenvector of $S$ of zero eigenvalue.
  Also, we have the decomposition of the unit vectors $u_\ell = (a_\ell + \rmi b_\ell)/\sqrt{2}$ where $(a_\ell, b_\ell)_\ell$ are mutually orthogonal real unit vectors.
  On the other side, if $r = 2 k$ for some $k\geq 1$, the same decomposition holds in the absence of $v_0$.
\end{lemma}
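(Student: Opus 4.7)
The plan is to first collapse the defining expression of $K$ to $V(I_r + S)^{-1}V^\top$, and then read off the spectral decomposition of $(I_r + S)^{-1}$ from the Youla normal form of $S$. This turns the claim into a routine diagonalization.

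First, since $V = X(X^\top X)^{-1/2}$ has orthonormal columns, one has $X = V(X^\top X)^{1/2}$. Factoring $(X^\top X)^{1/2}$ on both sides yields
\begin{equation*}
A + X^\top X = (X^\top X)^{1/2}(I_r + S)(X^\top X)^{1/2},
\end{equation*}
and substituting the inverse into $K = X(A + X^\top X)^{-1}X^\top$ collapses the product to
\begin{equation*}
K = V(I_r + S)^{-1} V^\top.
\end{equation*}
This is the only step that uses the specific structure of $K$; the rest is a diagonalization of $(I_r + S)^{-1}$.

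Next, since $S$ is real skewsymmetric, Youla's theorem furnishes normalized eigenpairs $(\rmi\nu_\ell, u_\ell)$ and $(-\rmi\nu_\ell, \bar u_\ell)$ for $\ell \in \{1,\dots,k\}$, together with a real unit eigenvector $v_0$ of eigenvalue zero when $r = 2k+1$; writing $u_\ell = (a_\ell + \rmi b_\ell)/\sqrt{2}$ with $(a_\ell, b_\ell, v_0)$ an orthonormal real family recovers the paper's convention. In this basis $I_r + S$ is diagonal with eigenvalues $1 \pm \rmi\nu_\ell$ on the complex pair and $1$ on $v_0$, so
\begin{equation*}
(I_r + S)^{-1} = v_0 v_0^\top + \sum_{\ell=1}^{k} \frac{1}{1 + \rmi\nu_\ell}\, u_\ell u_\ell^* + \frac{1}{1 - \rmi\nu_\ell}\, \bar u_\ell \bar u_\ell^*.
\end{equation*}
Rationalizing $1/(1 \pm \rmi\nu_\ell) = (1 \mp \rmi\nu_\ell)/(1 + \nu_\ell^2)$ and conjugating with $V$ yields the claimed expression. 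The even case $r = 2k$ is identical with the $v_0 v_0^\top$ term absent.

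There is no serious obstacle: once the polar-style factorization through $(X^\top X)^{1/2}$ is in place, the rest is a spectral calculation. The only subtle sanity check is that the right-hand side is indeed a real matrix — this follows from $\bar u_\ell \bar u_\ell^* = \overline{u_\ell u_\ell^*}$, so each conjugate pair contributes $2\Re(\cdot)$, matching the fact that $(I_r + S)^{-1}$ is real by construction.
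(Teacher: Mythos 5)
Your proof is correct and follows the same approach as the paper's (which is stated very tersely): rewrite $K = V(I_r+S)^{-1}V^\top$ via the factorization $A + X^\top X = (X^\top X)^{1/2}(I_r+S)(X^\top X)^{1/2}$, then invoke the Youla normal form of the skewsymmetric matrix $S$ to diagonalize $(I_r+S)^{-1}$. You have simply spelled out the rationalization step and the reality check that the paper leaves implicit.
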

\begin{proof}
  The result follows from writing
  $$
    K = X(X^\top X)^{-1/2} (S + I_{r})^{-1}(X^\top X)^{-1/2}X^\top, 
  $$
  and by using the Youla decomposition of the real skewsymmetric matrix $S$.
\end{proof}
Before going further, let us understand the expression in \cref{eq:K_spectral}.
Considering one term on the right-hand side for simplicity, we can write the paired eigenvectors in terms of a real part and an imaginary part as follows $u = (a + \rmi b)/\sqrt{2}$ and $\bar{u} = (a - \rmi b)/\sqrt{2}$.
Further, denote $\alpha = 1/(1+ \nu^2)$ and $\beta = \nu/(1+ \nu^2)$, which are such that that $0< \alpha \leq 1$ and $0\leq |\beta| \leq 1$.
Hence, we have
$$
\begin{bmatrix}
  u & \bar{u}
\end{bmatrix}
\begin{bmatrix}
  \alpha-\rmi \beta & 0\\
  0 & \alpha + \rmi \beta
\end{bmatrix}
\begin{bmatrix}
  u^*\\  \bar{u}^*
\end{bmatrix} 
=
\begin{bmatrix}
a & b
\end{bmatrix}
\begin{bmatrix}
  \alpha & \beta\\
  -\beta & \alpha
\end{bmatrix}
\begin{bmatrix}
  a^\top\\  b^\top
\end{bmatrix} 
= \alpha(aa^\top + bb^\top) + \beta(ab^\top - ba^\top),
$$
where we actually see the role played by $\beta$ in the skewsymmetric part of the matrix.

We are now ready to describe the law of the number of points of the process.
\begin{proposition}[Law of cardinal of a sample]
  In the setting of \cref{lem:spectral_decomposition}, let $Y\sim \mathrm{DPP}(K)$ where $K = X (A + X^\top X)^{-1}X^\top$ is $n\times n$ and with an $r\times r$ skewsymmetric matrix $A$.
  If $r = 2 k+1$, the law of the number of points in a sample of $Y$ is
  $$
  |Y| = 1 + 2 \sum_{\ell = 1}^{k}\text{Bern}\left(\frac{1}{1+\nu_\ell^2}\right)
  $$
  where the Bernoulli random variables are independent.
  Otherwise, if $r = 2k$, the law of $|Y|$ is the same as above with the constant ``$1$'' removed.
\end{proposition}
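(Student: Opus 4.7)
The plan is to compute the probability generating function of $|Y|$ and read off the distribution from its factorization. The starting point is the general identity, valid for any (possibly nonsymmetric) correlation kernel $K$ whose principal minors give the inclusion probabilities,
\begin{equation*}
\mathbb{E}[z^{|Y|}] = \mathbb{E}\prod_{i=1}^{n}\bigl(1 + (z-1)\mathbbm{1}_{i\in Y}\bigr) = \sum_{\calA \subseteq \{1,\dots,n\}} (z-1)^{|\calA|}\,\mathbb{P}(\calA \subseteq Y) = \det\bigl(I_n + (z-1)K\bigr),
\end{equation*}
where the last equality uses the algebraic identity $\det(I+xM)=\sum_{\calC} x^{|\calC|}\det M_{\calC\calC}$ (which does not require symmetry of $M$) together with the inclusion probabilities of \cref{corol:kernel_DPP}.

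Next I would plug in the spectral form of $K$ supplied by \cref{lem:spectral_decomposition}. Because $V$ has orthonormal columns and $\{v_0,u_\ell,\bar u_\ell\}$ (resp.\ $\{u_\ell,\bar u_\ell\}$ when $r=2k$) is an orthonormal basis of $\mathbb{C}^r$, the matrix $K$ has rank $r$ and its nonzero eigenvalues are exactly $1$ (only when $r$ is odd, coming from $v_0$) together with the conjugate pairs $(1\mp\rmi\nu_\ell)/(1+\nu_\ell^2)$, $\ell=1,\dots,k$. Hence, in the odd case,
\begin{equation*}
\mathbb{E}[z^{|Y|}] = \bigl(1+(z-1)\bigr)\prod_{\ell=1}^{k}\left(1+(z-1)\tfrac{1-\rmi\nu_\ell}{1+\nu_\ell^2}\right)\left(1+(z-1)\tfrac{1+\rmi\nu_\ell}{1+\nu_\ell^2}\right).
\end{equation*}

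Now the imaginary parts cancel in each pair, and a short computation (setting $p_\ell = 1/(1+\nu_\ell^2)$) gives
\begin{equation*}
\left(1+(z-1)\tfrac{1-\rmi\nu_\ell}{1+\nu_\ell^2}\right)\left(1+(z-1)\tfrac{1+\rmi\nu_\ell}{1+\nu_\ell^2}\right) = 1+p_\ell\bigl((z-1)^2+2(z-1)\bigr) = 1-p_\ell + p_\ell z^{2},
\end{equation*}
which one recognizes as the probability generating function of $2 B_\ell$ with $B_\ell\sim\mathrm{Bern}(p_\ell)$, while the first factor $z$ is the pgf of the constant $1$. Since the pgf factors as a product, the summands are independent, yielding $|Y|\overset{d}{=} 1 + 2\sum_{\ell=1}^{k} B_\ell$ when $r=2k+1$. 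The even case $r=2k$ is identical except that the factor $z$ (and hence the constant $1$ in the sum) is absent.

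The only genuine subtlety lies in the first step: justifying the pgf identity for a nonsymmetric $K$. This is handled entirely at the level of inclusion probabilities (via \cref{corol:kernel_DPP}) and the purely algebraic minor–determinant identity, so symmetry of $K$ is never needed. Everything afterwards is a bookkeeping exercise on the eigenvalues read off from the Youla form of $S$ in \cref{lem:spectral_decomposition}.
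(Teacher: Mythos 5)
Your proof is correct and follows essentially the same route as the paper's: both compute the probability generating function $\mathbb{E}[z^{|Y|}] = \det(I_n - K + zK)$, plug in the spectral decomposition from \cref{lem:spectral_decomposition}, and recognize the factor $1-p_\ell + p_\ell z^2$ as the pgf of $2\,\mathrm{Bern}(p_\ell)$. The only difference is that the paper cites \citep{brunel2018learning} for the pgf identity, whereas you derive it in two lines from the inclusion probabilities of \cref{corol:kernel_DPP} and the elementary expansion $\det(I+xM)=\sum_{\calC}x^{|\calC|}\det M_{\calC\calC}$; this is a welcome self-contained detail, since it makes explicit that the identity needs no symmetry assumption on $K$.
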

\begin{proof}
  Classically, the law of the cardinal of $Y$ is expressed as a determinant \citep[Corollary 1 in Supplemantary Material]{brunel2018learning}, namely
  \begin{align*}
    \mathbb{E}[z^{|Y|}] = \det( I_{n} - K + z K).
  \end{align*}
  Consider the case of odd $r$.
  Denoting $\mu_\ell = \frac{1 -\rmi \nu_\ell}{1 + \nu_\ell^2}$, the spectral decomposition in \cref{lem:spectral_decomposition} yields
  \begin{align*}
    \mathbb{E}[z^{|Y|}] = z \prod_{\ell=1}^{k}(1-\mu_\ell + z \mu_\ell)(1-\bar{\mu}_\ell + z \bar{\mu}_\ell) = z \prod_{\ell=1}^{k}\left(\frac{\nu_\ell^2}{1 + \nu_\ell^2} +  \frac{z^2}{1 + \nu_\ell^2}\right),
  \end{align*}
  where we observe that the above product involves moment generating functions of Bernoullis with success probability $1/(1 + \nu_\ell^2)$.
  The case of even $r$ follows from the same reasoning.
\end{proof}

\subsection{Example: sampling uniform spanning trees}
In this section, we particularize the DPP of \cref{thm:pmf_DPP} to highlight its connections with a well-known DPP on graph edges.
A spanning tree of a graph is a connected subgraph with the same vertex set and edges forming no cycle.
The edges of a spanning tree sampled uniformly are known to be a projection DPP \citep{Pemantle91}.
The point process of \cref{thm:pmf_DPP} is in fact a generalization of this DPP.

We begin by setting a few definitions.
Let $B$ denote the edge-vertex incidence matrix of an unweighted finite and connected graph. 
The Laplacian of this graph is customarily defined as  $L = B^\top B$.
For further use, denote by $\deg(i)$ the number of neighbors of node $i$ and let $D$ be the diagonal matrix with entry $ii$ being $\deg(i)$.
To sample spanning trees with the algorithm of \cref{sec:algo}, one needs to first choose a node $\varrho$, called the \emph{root}.
Then, we define the matrix $X = B_{:\widehat{\varrho}}$ obtained from $B$ by removing the column of the root node.
Note that $\|X_{:,i}\|^2_2 = \deg(i)$ for all nodes $i\neq \varrho$.
Thus, define the column-normalized incidence matrix $\mathsf{B} = B D^{-1/2}$.
\subsubsection{Probability distribution over dimer-rooted forests}
As can be seen by a direct substitution, the DPP associated with $\ket{\columns(\sfX)}$ is in fact a measure over edges of \emph{dimer-rooted forests}\footnote{We propose this terminology to highlight the role of contractions associated with the expression of pfaffians. We do not know yet their connection with the \emph{half-trees} defined by \citet{de2014principal}.} which are defined in \cref{def:dimer-rooted_forest}. 
We refer to \cref{fig:contracted_tree} for an illustration. 
\begin{definition}[dimer-rooted forest with root $\varrho$]\label{def:dimer-rooted_forest}
  Let $G$ be a finite connected graph with $n_v$ nodes and let $\varrho$ be one of its nodes.
  A dimer-rooted forest of $G$ with root $\varrho$ is a set of edges $\calC$ which satisfies the following conditions:
  \begin{itemize}
    \item[(i)] $\calC$ contains no cycle and is such that $\text{parity}(\calC) =\text{parity}(n_v-1)$,
    \item[(ii)] $\calC$ can be completed to a spanning subgraph by complementing it by an edge-disjoint set of dimers such that the connected components of $\calC$ are either trees rooted to $\varrho$ or trees rooted to one dimer.
  \end{itemize}
\end{definition}
Note that the dimers are not part of the dimer-rooted forest.
These dimer-rooted forests are closely related to dimer covers, namely sets of edges such that each node is the endpoint of exactly one edge. 
In a word, each dimer in the dimer-rooted forests corresponds to a contraction in the expansion of a Pfaffian as detailed in \cref{sec:tech}.
This is illustrated from the graph viewpoint in \cref{fig:square_graph} where the colored edges indicate the superposition of edges coming with the neighbourhood of each colored node.
\begin{figure}[t]
  \centering
  \begin{tikzpicture}[scale=0.80]
      \node[shape=circle,draw=red,line width=1pt] (A) at (0,0) {\tiny $1$}; 
      \node (AD) at (0,-1) {\tiny $13$}; 
      \node[shape=circle,draw=teal] (B) at (2,0) {\tiny $2$}; 
      \node (AB) at (1,0) {\tiny $12$}; 
      \node (BC) at (2,-1) {\tiny $24$}; 
      \node (DC) at (1,-2) {\tiny $34$}; 

      \node[shape=circle,draw=blue,line width=1pt] (C) at (2,-2) {\tiny $4$}; 
      \node[shape=circle,draw=black,line width=1pt] (D) at (0,-2) {\tiny $3$}; 
      \node at (0,-2.5) {\tiny root}; 

      \path [->,draw=red,line width=1pt] (A) edge[bend right] (B);
      \path [->,draw=red,line width=1pt] (A) edge[bend right] (D);
      \path [->,draw=teal,line width=1pt] (A) edge[bend left] (B);
      \path [->,draw=teal,line width=1pt] (B) edge[bend left] (C);
      \path [->,draw=blue,line width=1pt] (B) edge[bend right] (C);
      \path [->,draw=blue,line width=1pt] (D) edge[bend left] (C);

      \node[shape=circle,draw=teal,line width=1pt] (B) at (2,0) {\tiny $2$}; 

      \node at (9,-1) {$\begin{aligned}
        &\ket{\text{columns}(\mathsf{B}_{:\widehat{3}})}\\
        &= {\color{red}(\frac{f_{12}+f_{13}}{\sqrt{\deg(1)}})}{\color{teal}(\frac{f_{24}-f_{12}}{\sqrt{\deg(2)}})}{\color{blue}(\frac{-f_{24}-f_{34}}{\sqrt{\deg(4)}})}\ket{\emptyset}\\
        &\propto (-{\color{red}f_{12}} \underbrace{{\color{teal}f_{24}}{\color{blue}f_{24}}}_{1}   
        +  \underbrace{{\color{red}f_{12}}{\color{teal}f_{12}}}_{1}{\color{blue}f_{24}} 
        - {\color{red}f_{13}}\underbrace{{\color{teal}f_{24}}{\color{blue}f_{24}}}_{1})\ket{\emptyset} \text{\footnotesize(dimer-rooted forest)}\\ 
         & + \underbrace{(-{\color{red}f_{12}}{\color{teal}f_{24}}{\color{blue}f_{34}}-{\color{red}f_{13}} {\color{teal}f_{24}} {\color{blue}f_{34}}
          + {\color{red}f_{13}} {\color{teal}f_{12}}{\color{blue}f_{24}} + {\color{red}f_{13}}{\color{teal}f_{12}} {\color{blue}f_{34}})\ket{\emptyset}}_{\text{ spanning trees}}
      \end{aligned}$}; 

  \end{tikzpicture}
  \caption{A simple square graph and the corresponding Clifford loader $\Cliff(\mathsf{B}_{:1}) \Cliff(\mathsf{B}_{:2})\Cliff(\mathsf{B}_{:4})\ket{\emptyset}$; see \cref{eq:def_psi}. 
  We here defined the normalized incidence matrix $\mathsf{B} = B D^{-1/2}$ and  the notation $f_e = c_e + c_e^*$ for any edge $e$.
  The column $\varrho=3$ is removed from $\mathsf{B}$.
  The colored edges emanate from the node of the same color whereas their orientation is determined (by convention) by the matrix $B$. \label{fig:square_graph}}
\end{figure}
Indeed, in the decomposition of $\ket{\text{columns}(\mathsf{B}_{:\widehat{\varrho}})}$, the origin of the dimers is the annihilation (or contraction, i.e., $(c_e + c_e^*)^2 = \mathbb{I}$) of identical edges coming from neighbouring nodes.

\begin{theorem}[determinantal measure over dimer-rooted forests]\label{thm:contracted_trees}
  Consider a connected graph with $n_v$ nodes and $n_e$ edges. Denote by $B$ its $n_e\times n_v$ incidence matrix. 
  Let $\varrho\in \{1,\dots , n_v\}$ be any node of this graph.
  Let $Y$ be a DPP with law
  \begin{equation*}
    \mathbb{P}(Y = \calC) = \left(\prod_{\substack{i=1\\ i \neq \varrho}}^{n_v}\deg(i)\right)^{-1} \cdot \det\begin{bmatrix}
      0_{|\calC|}& B_{\calC \widehat{\varrho}}\\
      -(B_{\calC \widehat{\varrho}})^\top & \skew(L_{\widehat{\varrho}\widehat{\varrho}})
    \end{bmatrix} \text{ with } \calC\subseteq \{1, \dots ,n_e\}.
  \end{equation*}
  If $\mathbb{P}(Y = \calC)\neq 0$, then  $\calC$ is necessarily the set of simple edges of a dimer-rooted forest; see \cref{def:dimer-rooted_forest}.
\end{theorem}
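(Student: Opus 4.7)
My plan is to reduce to \cref{thm:pmf_DPP} and then read off the structural constraint on $\calC$ by expanding the Pfaffian combinatorially.

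First, I identify the correct $\sfX$: take $\sfX = \mathsf{B}_{:\widehat{\varrho}} = B_{:\widehat{\varrho}} D_{\widehat{\varrho}\widehat{\varrho}}^{-1/2}$. Since $D$ is diagonal, $\sfX^\top \sfX = D_{\widehat{\varrho}\widehat{\varrho}}^{-1/2} L_{\widehat{\varrho}\widehat{\varrho}} D_{\widehat{\varrho}\widehat{\varrho}}^{-1/2}$, and the $\skew$ operator commutes with this diagonal conjugation. Pulling the factors $D_{\widehat{\varrho}\widehat{\varrho}}^{-1/2}$ out of the block matrix via multilinearity of the determinant on the last $n_v-1$ rows and columns produces the prefactor $1/\det(D_{\widehat{\varrho}\widehat{\varrho}}) = \bigl(\prod_{i\neq\varrho}\deg(i)\bigr)^{-1}$ and the block matrix of the statement.

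Second, the structural claim. Since the block matrix is real and skewsymmetric, its determinant equals the square of its Pfaffian (which requires the matching parity condition $|\calC|\equiv n_v-1 \pmod 2$, giving part of condition (i)). I expand the Pfaffian as a signed sum over perfect matchings of the index set $\calC\sqcup\widehat{\varrho}$, and examine which terms survive. The top-left zero block forbids edge--edge pairings; $B_{e,v}\neq 0$ only when $v$ is an endpoint of $e$; and the $\skew(L_{\widehat{\varrho}\widehat{\varrho}})$-block is supported on pairs $(v,v')$ adjacent in $G$. So each surviving matching encodes (a) an injection $\phi\colon\calC\to\widehat{\varrho}$ with $\phi(e)$ an endpoint of $e$, and (b) a perfect matching of the leftover nodes $\widehat{\varrho}\setminus\phi(\calC)$ by edges of $G$, which are the dimers.

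Third, I read off the dimer-rooted forest structure component by component. On each connected component $K$ of $\calC$, injectivity of $\phi$ gives $|E(K)|\leq|V(K)\cap\widehat{\varrho}|$, while connectedness gives $|E(K)|\geq|V(K)|-1$. If $\varrho\in V(K)$, the two bounds force $K$ to be a tree whose edges map bijectively to $V(K)\setminus\{\varrho\}$: this is a tree rooted at $\varrho$. If $\varrho\notin V(K)$, then either $K$ is a tree with exactly one vertex left over for a dimer (``rooted to a dimer"), or $|E(K)|=|V(K)|$, i.e., $K$ contains a cycle.

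The delicate step is ruling out the cyclic case. My plan is a sign-cancellation argument: for a matching whose restriction to a cycle $C\subseteq\calC$ uses one cyclic orientation, pair it with its partner obtained by shifting the cycle's orientation. The ratio of the two Pfaffian contributions combines (a) the sign of the permutation relating the two matchings, which is determined by a $2k$-cycle in their symmetric difference, and (b) the ratio of signed incidence products, which picks up one sign flip per cycle edge depending on the reference orientation. I expect these two effects to combine to $-1$ regardless of the length of $C$ or the edge orientations, producing a fixed-point-free sign-reversing involution on matchings containing a cycle. Consequently the Pfaffian, and hence the probability, vanishes whenever $\calC$ contains a cycle.

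The cycle cancellation is the main obstacle. I would handle it by choosing a canonical orientation for $C$, writing both matchings explicitly in Pfaffian canonical form, and checking the two signs combine to $-1$ by a direct computation for a cycle of arbitrary length. Once this is established, the only surviving $\calC$'s are precisely the acyclic ones compatible with a dimer completion, matching \cref{def:dimer-rooted_forest}.
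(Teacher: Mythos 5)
Your reduction to \cref{thm:pmf_DPP} via $\sfX = B_{:\widehat\varrho} D_{\widehat\varrho\widehat\varrho}^{-1/2}$ and the conjugation by $\bigl(\begin{smallmatrix} I & 0 \\ 0 & D^{-1/2}\end{smallmatrix}\bigr)$ to extract the prefactor $1/\prod_{i\neq\varrho}\deg(i)$ is correct, and your Pfaffian expansion into perfect matchings of the ``augmented'' index set $\calC \sqcup \widehat\varrho$ — with the top-left zero block forbidding edge--edge pairings, the off-diagonal block enforcing $\phi(e)$ to be an endpoint of $e$, and the $\skew(L_{\widehat\varrho\widehat\varrho})$ block supplying dimers — is exactly the paper's augmented-graph dimer-cover argument for condition (ii). Your component-by-component count refining the injection $\phi$ into the three cases (tree to $\varrho$, tree to a dimer, unicyclic) is also sound.

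The gap is precisely where you flag it: you never actually rule out the unicyclic case, and the sign-reversing involution you propose is only a plan (``I expect these two effects to combine to $-1$ \ldots I would handle it by \ldots''). This is not just a missing detail: the cancellation argument requires carefully tracking the interaction between (a) the sign of the permutation relating two matchings that differ by rotating a cycle, and (b) the products of signed incidence entries $B_{ev}$ along that cycle, and the bookkeeping depends on the chosen vertex ordering. You would in effect be re-deriving a Pfaffian analogue of the all-minors matrix-tree theorem. The paper avoids all of this with a one-line observation you missed: if $\calC$ contains a cycle, the corresponding rows of $B_{\calC\widehat\varrho}$ are already linearly dependent (the signed sum of the rows of $B$ over a cycle is zero, and this survives deletion of the $\varrho$ column), so the rows $\bigl[\,0_{|\calC|} \mid B_{\calC\widehat\varrho}\,\bigr]$ of the block matrix are linearly dependent and $\det A = 0$ without any sign analysis. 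I'd recommend replacing your involution sketch with this rank argument; it both closes the gap and shortens the proof. Once acyclicity is established this way, your remaining reasoning matches the paper and delivers (i) and (ii) of \cref{def:dimer-rooted_forest}.
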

\begin{proof}
  Define for convenience
  \begin{equation}
    {A} \triangleq \begin{bmatrix}
      0_{|\calC|} & B_{\calC \widehat{\varrho}}\\
      -(B_{\calC \widehat{\varrho}})^\top & \skew(L_{\widehat{\varrho}\widehat{\varrho}})
    \end{bmatrix}.\label{eq:skew-adjacency}
  \end{equation}
  First, we observe that if $\calC$ includes edges forming a cycle, then the corresponding rows of the first block of \cref{eq:skew-adjacency}, i.e.\
  $    \begin{bmatrix}
        0_{|\calC|} & B_{\calC \widehat{\varrho}}\\
      \end{bmatrix}
  $,
  are linearly dependent, since the corresponding rows of $B_{\calC \widehat{\varrho}}$ are linealy dependent.
  Therefore, $\calC$ cannot include any cycle otherwise $\det A = 0$.
  Similarly, $\text{parity}(\calC) =\text{parity}(n_v-1)$, since otherwise $A$ has an odd number of rows and columns and therefore, $\det A = 0$.
  We thus see that $(i)$ in \cref{def:dimer-rooted_forest} is necessary, and in what follows, we assume $\text{parity}(\calC) =\text{parity}(n_v-1)$.
  
  Second, note that $C = \skew(L_{\widehat{\varrho}\widehat{\varrho}})$ can be interpreted as a skewsymmetric adjacency matrix where each edge is endowed with an orientation as follows:
  \[
    {C}_{ij} = \begin{cases}
      +1 & \text{ if } ij \text{ is an oriented edge},\\
      -1 & \text{ if } ji \text{ is an oriented edge},\\
      0 & \text{ otherwise.}
    \end{cases}
  \]
  Recall that we denote edges by pairs $(i,j)$ with $i<j$; see \cref{fig:skew_adjacency} for an illustration.
  \begin{figure}[ht!]
    \centering
      \begin{tikzpicture}[scale=0.80]
          \node[shape=circle,draw=black] (A) at (-1,0) {\tiny $1$};
          \node[shape=circle,draw=black] (C) at (-1,2) {\tiny $2$};
          \node[shape=circle,draw=black] (E) at (2,2) {\tiny $3$};  
          \node (G) at (4,2) {\tiny $\varrho$};  
          \path [-,dashed] (E) edge (G);

          \path [->,thick] (A) edge (C);

          \path [->,thick] (C) edge (E);
          \node[shape=circle,draw=black] (A1) at (1-0.5,0) {\tiny $4$};  
          \path [->,thick] (A) edge (A1);
          \node[shape=circle,draw=black] (A2) at (2,0) {\tiny $5$}; 
          \path [->,thick] (A1) edge (A2);
          \node[shape=circle,draw=black] (A3) at (1-0.5,1) {\tiny $6$}; 
          \path [->,thick] (A1) edge (A3);
          \node[shape=circle,draw=black] (A4) at (2,1) {\tiny $7$}; 
          \path [->,thick] (A3) edge (A4);
          \node[shape=circle,draw=black] (A5) at (4,1) {\tiny $8$}; 
          \path [-,dashed] (A5) edge (G);
          \path [->,thick] (A5) edge (A4);
          \path [->,thick] (A4) edge (E);
          \node[shape=circle,draw=black] (A6) at (4,0) {\tiny $9$}; 

          \path [->,thick] (A5) edge (A6);
          \path [->,thick] (A6) edge (A2);
          \node[shape=circle,draw=black] (A8) at (7,0) {\tiny $10$}; 
          \path [->,thick] (A2) edge (A4);
          \node[shape=circle,draw=black] (A9) at (7,2) {\tiny $11$}; 
          \path [-,dashed] (A9) edge (G);

          \node[shape=circle,draw=black] (A11) at (5.5,1) {\tiny $12$}; 
          \node[shape=circle,draw=black] (A12) at (7,1) {\tiny $13$}; 
          \path [->,thick] (A6) edge (A8);

          \path [->,thick] (A11) edge (A12);
          \path [->,thick] (A5) edge (A11);
          \path [->,thick] (A9) edge (A12);
          \path [->,thick] (A8) edge (A12);
    \end{tikzpicture}
    \caption{Orientation associated to the matrix $\skew(L_{\widehat{\varrho}\widehat{\varrho}})$ for a simple graph. 
    Edge orientations are determined by the ordering of the nodes.
    Dashed edges are missing after removing $\varrho$.
    \label{fig:skew_adjacency}}
  \end{figure}
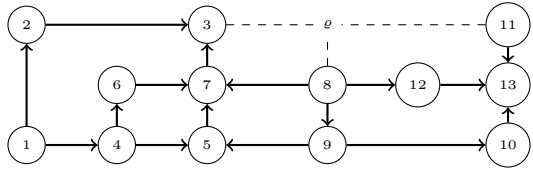
  Hence, by examining \cref{eq:skew-adjacency}, we realize  that the matrix ${A}$  can be interpreted as a skewsymmetric adjacency matrix of an augmented graph where the edges in $\calC$ are considered as additional vertices, namely whose vertex set is $$\mathcal{V} = \calC\cup \{1, \dots,\widehat{\varrho}, \dots, n_v\},$$
  where $\dots,\widehat{\varrho},\dots$ indicates that the vertex $\varrho$ is missing.
  Above, each edge $ij$ in $\calC$ is considered as an extra node of the augmented graph which is only connected to its neighbours $i$ and $j$; see \cref{fig:augmented_graph} for an illustration.
  % At this point, we interpret ${A}$ as a directed adjacency matrix of the augmented graph.
  Now, we recall that $\det A = (\pf {A})^2$ since ${A}$ is skewsymmetric with an even number of rows as a consequence of $(i)$. 
  By using the definition of the Pfaffian as a sum over contractions (see \cref{sec:expression_psi} for the exact expression), we see that, if $\pf {A} \neq 0$, there should exist at least one permutation $\sigma$ such that 
  $
  {A}_{\sigma(1)\sigma(2)}\dots {A}_{\sigma(2k-1)\sigma(2k)}\neq 0,
  $
  with $2k = |\calC| + (n_v - 1)$.\footnote{Classically, this matrix can be extended to a Tutte matrix by introducing as many indeterminates as edges. 
  The determinant of this Tutte matrix is a polynomial of the indeterminates which is not identically zero (as a polynomial) if and only if there exists a dimer configuration (also called perfect matching) of the graph.}
  Hence, this permutation determines a dimer cover of the augmented graph.
  Thus, coming back to the expression \cref{eq:skew-adjacency},  $\det A \neq 0$ implies that there exists a dimer configuration of the augmented graph, which by definition should visit all the nodes of the augmented graph, and hence should include a pair $(e,v_e)$ for all $e \in \calC$ where $v_e$ is an endpoint of $e$.
  This yields to condition $(ii)$ in \cref{def:dimer-rooted_forest}.
  Thus, the conditions $(i)$ and $(ii)$ are necessary.
  % To show that they are sufficient, we use \cref{thm:pmf_DPP}. 
  % Indeed, the DPP $Y$ is associated with the state $\ket{\columns(\sfX)}$ for $\sfX$ obtained from \cref{eq:col_norm} of $X = B_{:\widehat{\varrho}}$.
  % By examining its expression in \cref{eq:expression_colX} and comparing with the (Pfaffian) expression of the coefficient \cref{eq:expression_contraction} of the basis element $\prod_{i\in \calC} c_i^* \ket{\emptyset}$, we see that 

\end{proof}

    % Upon noting $\det A = (\pf A)^2
  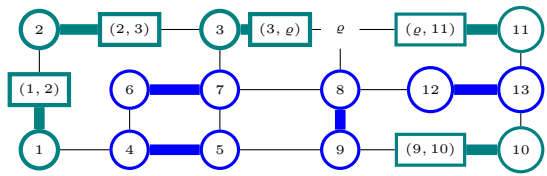
\begin{figure}[ht!]
    \centering
      \begin{tikzpicture}[scale=0.80]
          \node[shape=circle,draw=teal, ultra thick] (A) at (-1,0) {\tiny $1$};
          \node[shape=rectangle,draw=teal, ultra thick] (B) at (-1,1) {\tiny $(1,2)$};
          \node[shape=circle,draw=teal, ultra thick] (C) at (-1,2) {\tiny $2$};
          \node[shape=rectangle,draw=teal, ultra thick] (D) at (1-0.5,2) {\tiny $(2,3)$};  
          \node[shape=circle,draw=teal, ultra thick] (E) at (2,2) {\tiny $3$};  
          \node[shape=rectangle,draw=teal, ultra thick] (F) at (3,2) {\tiny $(3,\varrho)$};  
          \node[shape=circle] (G) at (4,2) {\tiny $\varrho$};  
          \path [-,draw=teal,line width=4pt] (A) edge (B);
          \path [-] (B) edge (C);
          \path [-,draw=teal,line width=4pt] (C) edge (D);
          \path [-] (D) edge (E);
          \path [-,draw=teal,line width=4pt] (E) edge (F);
          \path [-] (F) edge (G);
          \node[shape=circle,draw=blue,very thick] (A1) at (1-0.5,0) {\tiny $4$};  
          \path [-] (A) edge (A1);
          \node[shape=circle,draw=blue,very thick] (A2) at (2,0) {\tiny $5$}; 
          \path [-,blue,line width=4pt] (A1) edge (A2);
          \node[shape=circle,draw=blue,very thick] (A3) at (1-0.5,1) {\tiny $6$}; 
          \path [-] (A1) edge (A3);
          \node[shape=circle,draw=blue,very thick] (A4) at (2,1) {\tiny $7$}; 
          \path [-,blue,line width=4pt] (A3) edge (A4);
          \node[shape=circle,draw=blue,very thick] (A5) at (4,1) {\tiny $8$}; 
          \path [-] (A5) edge (G);
          \path [-] (A5) edge (A4);
          \path [-] (A4) edge (E);
          \node[shape=circle,draw=blue,very thick] (A6) at (4,0) {\tiny $9$}; 
          \path [-,blue,line width=4pt] (A5) edge (A6);
          \path [-] (A6) edge (A2);
          \node[shape=rectangle,draw=teal,very thick] (A7) at (5.5,0) {\tiny $(9,10)$}; 
          \path [-] (A7) edge (A6);
          \node[shape=circle,draw=teal,very thick] (A8) at (7,0) {\tiny $10$}; 
          \path [-,teal,line width=4pt] (A7) edge (A8);
          \path [-] (A2) edge (A4);
          \node[shape=circle,draw=teal,very thick] (A9) at (7,2) {\tiny $11$}; 
          \node[shape=rectangle,draw=teal,very thick] (A10) at (5.5,2) {\tiny $(\varrho,11)$}; 
          \path [-,teal,line width=4pt] (A9) edge (A10);
          \path [-] (G) edge (A10);
          \node[shape=circle,draw=blue,very thick] (A11) at (5.5,1) {\tiny $12$}; 
          \node[shape=circle,draw=blue,very thick] (A12) at (7,1) {\tiny $13$}; 
          \path [-,blue,line width=4pt] (A11) edge (A12);
          \path [-] (A5) edge (A11);
          \path [-] (A9) edge (A12);
          \path [-] (A8) edge (A12);
    \end{tikzpicture}
    \caption{An augmented graph covered by a dimer configuration (thick edges).
    Here, the edges in $\calC = \{(1,2),(2,3),(3,\varrho),(11,\varrho),(10,9)\}$ are depicted as teal rectangles whereas the perfect matching spanning the remaining nodes is colored in blue. 
    The nodes of the original graph are displayed as circles.
    \label{fig:augmented_graph}}
  \end{figure}
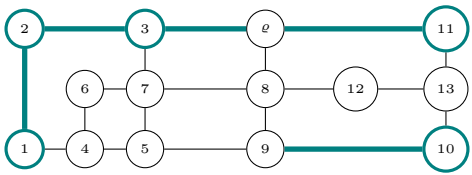
\begin{figure}[ht!]
  \centering
    \begin{tikzpicture}[scale=0.80]
        \node[shape=circle,draw=teal,very thick] (A) at (0,0) {\tiny $1$};
        \node[shape=circle,draw=teal,very thick] (C) at (0,2) {\tiny $2$};
        \node[shape=circle,draw=teal,very thick] (E) at (2,2) {\tiny $3$};  
        \node[shape=circle,draw=black] (G) at (4,2) {\tiny $\varrho$};  
        \path [-,draw=teal,line width=2pt] (A) edge (C);
        \path [-,draw=teal,line width=2pt] (C) edge (E);
        \path [-,draw=teal,line width=2pt] (E) edge (G);
        \node[shape=circle,draw=black] (A1) at (1,0) {\tiny $4$};  
        \path [-] (A) edge (A1);
        \node[shape=circle,draw=black] (A2) at (2,0) {\tiny $5$}; 
        \path [-] (A1) edge (A2);
        \node[shape=circle,draw=black] (A3) at (1,1) {\tiny $6$}; 
        \path [-] (A1) edge (A3);
        \node[shape=circle,draw=black] (A4) at (2,1) {\tiny $7$}; 
        \path [-] (A3) edge (A4);
        \node[shape=circle,draw=black] (A5) at (4,1) {\tiny $8$}; 
        \path [-] (A5) edge (G);
        \path [-] (A5) edge (A4);
        \path [-] (A4) edge (E);
        \node[shape=circle,draw=black] (A6) at (4,0) {\tiny $9$}; 
        \path [-] (A5) edge (A6);
        \path [-] (A6) edge (A2);
        \node[shape=circle,draw=teal,very thick] (A8) at (7,0) {\tiny $10$}; 
        \path [-] (A2) edge (A4);
        \node[shape=circle,draw=teal,very thick] (A9) at (7,2) {\tiny $11$}; 
        \path [-,teal,line width=2pt] (A9) edge (G);
        \node[shape=circle,draw=black] (A11) at (5.5,1) {\tiny $12$}; 
        \node[shape=circle,draw=black] (A12) at (7,1) {\tiny $13$}; 
        \path [-] (A11) edge (A12);
        \path [-] (A5) edge (A11);
        \path [-] (A9) edge (A12);
        \path [-] (A8) edge (A12);
        \path [-,teal,line width=2pt] (A6) edge (A8);
  \end{tikzpicture}
  \caption{A dimer-rooted forest with root $\varrho$ and whose edges -- colored in teal -- are $\calC = \{(1,2),(2,3),(3,\varrho),(11,\varrho),(10,9)\}$. 
  This dimer-rooted forest is associated with the augmented graph of \cref{fig:augmented_graph}.
  \label{fig:contracted_tree}}
\end{figure}
Considering a connected graph with no self-loop, let $D = \Diag(\deg)$ denote the diagonal degree matrix. It is customary to define the normalized Laplacian as 
\begin{equation}
L^{(N)} = D^{-1/2}L D^{-1/2}. \label{eq:normalized_laplacian}
\end{equation}
Key properties of $L^{(N)}$ are that its diagonal contains only ones and that it is positive semi-definite.
Let $L^{(N)}_{\widehat{\varrho}\widehat{\varrho}}$ be the normalized Laplacian with the row and column associated with the root $\varrho$ removed, which is non-singular.
By Hadamard's inequality, $\det L^{(N)}_{\widehat{\varrho}\widehat{\varrho}} \leq 1$.
The acceptance probability of the algorithm of \cref{sec:algo} reads
\begin{equation}
  a \triangleq \det L^{(N)}_{\widehat{\varrho}\widehat{\varrho}} = \frac{\det L_{\widehat{\varrho}\widehat{\varrho}}}{\prod_{i\neq \varrho}\deg(i)} = \frac{\sharp \text{STs}}{\prod_{i\neq \varrho}\deg(i)},\label{eq:a_as_det}
\end{equation}
where $\sharp \text{STs}$ denotes the number of spanning trees of the graph\footnote{Recall that $\det L_{\widehat{\varrho}\widehat{\varrho}}$ does not depend on $\varrho$ as a consequence of the matrix tree theorem.}.
This expression shows that, for a given graph, we can maximize the acceptance probability in \cref{a:rejection_sampling_Clifford} by choosing a node $\varrho$ with maximum degree. 
To fix ideas, we now list two examples of graphs which exhibit different typical decays for $a$.
\begin{example}[Complete graph]\label{ex:acceptance_complete_graph}
  For the complete graph of $n$ nodes with $n\geq 2$, $\sharp \text{STs} = n^{n-2}$; see e.g., \cref{fig:small_complete_plot} for an illustration. Thus, in this case, the acceptance probability reads  
  $$
  a = n^{-1} \left(\frac{n}{n-1}\right)^{n-1} = \Theta(n^{-1}), \text{  for any root node } \varrho.
  $$
  It is easy to see that it satisfies $2/n \leq a \leq e /n$ where $e$ is Euler's constant.
\end{example}

\begin{example}[Barbell graph]\label{ex:acceptance_barbell_graph}
  Consider a barbell graph of $n$ nodes with two complete graphs of $n^{\prime}$ nodes connected by a line graph of $n^{\prime \prime}$ nodes; see e.g., \cref{fig:small_barbell_plot} for an illustration with $n^\prime=3$ and $n^{\prime \prime} = 0$. In that case, by noting that $\sharp \text{STs} = n^{\prime 2{n}^\prime-4}$ for $n^\prime\geq 2$, we find that the acceptance probability reads
  $$
  a = \deg(\varrho) \left(\frac{n^\prime}{n^\prime -1 }\right)^{2 n^\prime -2} n^{\prime -4} 2^{-n^{\prime\prime }}, \text{ with $\varrho$ the root node}.
  $$
  Also, we have $4 n^{\prime -4} 2^{-n^{\prime\prime }} \leq a /\deg(\varrho) \leq e^2 n^{\prime -4} 2^{-n^{\prime\prime }}$. 
  In order to maximize the acceptance probability, we choose $\varrho$ as a connecting node between a clique and the line graph. 
  Then, we have $a = \Theta(n^{\prime -3}2^{-n^{\prime\prime }})$ where we can take for example $n^\prime = n^{\prime\prime } = n/3$.
\end{example}
Thus, the decay of the acceptance probability in \cref{ex:acceptance_barbell_graph} is very fast as the graph grows in size.
In \cref{sec:amplitude_amplification}, we propose a circuit to boost this probability.
Before reaching this section, we compare the determinantal formula \cref{eq:a_as_det} for $a$ to similar quantities appearing in rejection sampling (RS).
We indeed show in \cref{sec:vanilla_RS} that $a$ is the acceptance probability of a closely related classical RS algorithm.
%
% \begin{remark}
%   Note that the determinantal formula \cref{eq:a_as_det} for the acceptance probability is a direct consequence of the expression \cref{eq:ProbY=C} for the probability that $Y = \calC$, namely 
% \end{remark}
\subsubsection{Acceptance probability for vanilla rejection sampling \label{sec:vanilla_RS}}
  Interestingly, the expression of the acceptance probability \cref{eq:a_as_det} -- the inverse number of rejections before acceptance -- can be compared to a similar quantity appearing in a famous classical algorithm for sampling uniform spanning trees, i.e., Wilson's cycle-popping algorithm \citep{Wilson96}.
  This algorithm which has both a random walk formulation or a stacks-of-cards formulation is discussed more generally in the \emph{partial rejection sampling} framework (PRS) in 
  \citep{jerrum2024fundamentals} and \citep{guo2019uniform}.
  First, by using the expression \cref{eq:normalized_laplacian}, we rewrite the acceptance probability in \cref{a:rejection_sampling_Clifford} as follows
  \begin{equation}
      a  = \det(I_{n-1} - \Pi_{\widehat{\varrho}\widehat{\varrho}}),\label{eq:a_as_det_Pi}
  \end{equation}
  where $\Pi$ is the transition matrix of the customary random walk with uniform jumps to neighbours, namely, $\Pi_{ij} = 1/\deg(i)$ if $j\sim i$ and $\Pi_{ij} = 0$ otherwise.
  At this point, we connect the expression of $a$ to the stacks-of-cards description of Wilson's cycle-popping algorithm.
  
  Consider a random variable $U_i$ per node $i\neq \varrho$ which yields a neighbour of $i$ -- say $j$ -- with probability $\Pi_{ij}$.
  Each $U_i$ is visualized as a card positioned over node $i$, and its value is interpreted as an edge between $i$ and a neighbour $U_i$.
  We assume that all these variables are mutually independent.
  By inspection, a realization of $U_{1}, \dots , \widehat{U_{\varrho}}, \dots, U_{n}$ yields an oriented spanning subgraph which may contain cycles. 
  By \emph{cycles}, we mean oriented cycles including backtracks (2-cycles).
  Denote this oriented spanning subgraph by $\mathrm{subgraph}(U_{1}, \dots , \widehat{U_{\varrho}}, \dots, U_{n})$.
  The stacks-of-cards algorithm goes as follows.
  First, fix an order of the cycles.
  \begin{itemize}
    \item Sample $U_{1}, \dots , \widehat{U_{\varrho}}, \dots, U_{n}$ independently.
    \item While $\mathrm{subgraph}(U_{1}, \dots , \widehat{U_{\varrho}}, \dots, U_{n})$ contains cycles, resample (or pop) all the $U_i$'s which are part of the smallest cycle in the order. Otherwise, stop and output the spanning tree given by the $U_i$'s.
  \end{itemize}
  Note that this algorithm finishes with probability one and that it outputs a uniform spanning tree as well as the history of erased cycles.
  These `popped' cycles can be organized as a \emph{heap of cycles}, i.e., a combinatorial structure defined by 
  \citep{viennot2006heaps};  see Section 5.7.2 of \citep{fanuel2024number} for a pedestrian exposition.
  Following \citep{jerrum2024fundamentals}, the expression \cref{eq:a_as_det_Pi} equals
  \begin{equation}
      a = \Pr_{\prod_{i\neq \varrho}U_i}[\mathrm{subgraph}(U_{1}, \dots , \widehat{U_{\varrho}}, \dots, U_{n}) \text{ contains no cycle}],\label{eq:Prob_no_cycle}
  \end{equation}
  or, in other words, $a$ is the probability that -- if $U_{1}, \dots , \widehat{U_{\varrho}}, \dots, U_{n}$ are sampled independently --  $\mathrm{subgraph}(U_{1}, \dots , \widehat{U_{\varrho}}, \dots, U_{n})$ is a spanning tree.

  Thus, we see that \cref{eq:Prob_no_cycle} is exactly the inverse expected number of rejections before acceptance if the edges are sampled following $\prod_{i\neq \varrho}U_i$.
  This is the classical case of \emph{vanilla} RS.
  In contrast, according to \citep[Corollary 7]{jerrum2024fundamentals}, the expected number of iteration before acceptance in the \emph{partial rejection sampling} (PRS) framework (Wilson's algorithm) reads 
  $$
  \frac{\Pr_{\prod_{i\neq \varrho}U_i}[\mathrm{subgraph}(U_{1}, \dots , \widehat{U_{\varrho}}, \dots, U_{n}) \text{ contains exactly one cycle}]}
  {\Pr_{\prod_{i\neq \varrho}U_i}[\mathrm{subgraph}(U_{1}, \dots , \widehat{U_{\varrho}}, \dots, U_{n}) \text{ contains no cycle}]},
  $$
  which can be much smaller compared with $1/a$ (\emph{vanilla} RS \cref{eq:Prob_no_cycle}) since the numerator can be much smaller than $1$.

  % Idea for the next step: follow the methodology of our note about PRS. Define a markov chain whose effect it to step from one assignement to $U$'s to another by just resampling the scopes of the cycles.
  % Initialize the chain by assignement obtained by using $\prod_{i\neq \varrho}U_i$. Then, $a$ should the inverse expected number of steps of this Markov chain before it stops. 

%

\subsubsection{Sketching acceptance probability in the graph case\label{sec:sketch_graph_case}}
We discuss here the choice of parameter $\theta_1$ for the sketching algorithm of \citet{boutsidis2017randomized} and the associated complexity \cref{eq:cost_sketch_theta_1}.
The basic idea is that the complexity of the sketch is low provided that the graph has a good connectivity.
Recall that we choose $\sfX = \mathsf{B}_{:\widehat{\varrho}}\in \mathbb{R}^{n\times r}$ where $\mathsf{B} = B D^{-1/2}$ is the column-normalized incidence matrix of a connected graph with $n$ edges and $n_v = r + 1$ nodes.
\paragraph*{A connected graph with a hub node.}
To bound the spectrum of $\sfX^\top \sfX = L^{(N)}_{\widehat{\varrho}\widehat{\varrho}}$ from below, we begin by proving the following lemma, inspired from \citep[Section 3.3]{han2015large}.
\begin{lemma} \label{lem:lower_bound_L}
  Consider a connected graph with no self-loop and let $L = B^\top B$ be its combinatorial Laplacian. If there exists a node $\varrho$ connected to all the other nodes, we have
  $\lambda_{\min}(L_{\widehat{\varrho}\widehat{\varrho}}) \geq 1$.
\end{lemma}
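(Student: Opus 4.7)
The plan is to decompose $L_{\widehat{\varrho}\widehat{\varrho}}$ into the Laplacian of the induced subgraph obtained by deleting $\varrho$, plus a contribution coming from edges incident to $\varrho$, and then argue that the former is positive semidefinite while the latter contributes exactly the identity.

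More concretely, first I would write $L = D - A$ where $D$ and $A$ are respectively the degree and adjacency matrices of the full graph. Let $G'$ be the subgraph induced on $\{1,\dots,n_v\}\setminus\{\varrho\}$, and let $L(G') = D(G') - A(G')$ be its combinatorial Laplacian. The key observation is that deleting the row and column indexed by $\varrho$ from $A$ leaves exactly $A(G')$, since the only edges removed are those incident to $\varrho$. For the degrees, the hypothesis that $\varrho$ is connected to every other node means $\deg_G(i) = \deg_{G'}(i) + 1$ for every $i \neq \varrho$, so $D_{\widehat{\varrho}\widehat{\varrho}} = D(G') + I$.

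Combining these, I obtain the identity
\[
L_{\widehat{\varrho}\widehat{\varrho}} \;=\; D_{\widehat{\varrho}\widehat{\varrho}} - A_{\widehat{\varrho}\widehat{\varrho}} \;=\; \bigl(D(G') + I\bigr) - A(G') \;=\; L(G') + I.
\]
Since $L(G') = B(G')^\top B(G')$ is positive semidefinite, we have $L_{\widehat{\varrho}\widehat{\varrho}} \succeq I$, which immediately gives $\lambda_{\min}(L_{\widehat{\varrho}\widehat{\varrho}}) \geq 1$.

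There is essentially no obstacle here; the argument is elementary. The only point requiring a little care is the bookkeeping of which edges survive after deleting the row and column $\varrho$ from $B$, so that $A_{\widehat{\varrho}\widehat{\varrho}} = A(G')$ really holds with the correct diagonal accounting. Note also that the hypothesis ``no self-loop'' is used only to ensure that $D$ really records degrees consistently with $B^\top B$; without it, the identification $L = D - A$ would pick up extra diagonal terms.
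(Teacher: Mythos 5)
Your proof is correct but proceeds by a genuinely different route from the paper's. The paper applies the Gershgorin circle theorem directly to the rows of $L_{\widehat{\varrho}\widehat{\varrho}}$: for each $i_\star\neq\varrho$, the diagonal entry is $\deg(i_\star)$ and the off-diagonal absolute row sum is $\deg(i_\star)-1$ (the neighbor $\varrho$ having been removed), so every Gershgorin disc has left endpoint at least $1$. You instead exhibit the structural identity $L_{\widehat{\varrho}\widehat{\varrho}} = L(G')+I$, with $G'$ the induced subgraph on the non-root vertices, and conclude from $L(G')\succeq 0$. Both arguments hinge on the same combinatorial fact --- that deleting $\varrho$ costs every remaining vertex exactly one neighbor --- but your decomposition turns it into an explicit operator inequality $L_{\widehat{\varrho}\widehat{\varrho}}\succeq I$ rather than a disc-radius estimate. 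This is arguably cleaner and also identifies the extremal case for free (equality precisely when $G'$ has no edges, i.e.\ $G$ is a star centered at $\varrho$), which Gershgorin alone does not. As a side remark, the paper's justification ``since the graph is connected'' for the off-diagonal row sum being $\deg(i_\star)-1$ is slightly loose: what is actually used is the hub hypothesis, namely that $\varrho$ is a neighbor of every $i_\star$, and your bookkeeping $\deg_G(i)=\deg_{G'}(i)+1$ makes that dependence explicit.
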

\begin{proof}
  We use the Gershgorin circle theorem: $\lambda_{\min}(L_{\widehat{\varrho}\widehat{\varrho}})$ is at least in one closed disc $i_\star$ of center $\deg(i_\star)$ and radius $\sum_{j\neq i_\star, j\neq \varrho}1(i_\star\sim j)$.
  Since the graph is connected, the latter quantity reads  $\deg(i_\star) - 1$.
  Thus, $\lambda_{\min}(L_{\widehat{\varrho}\widehat{\varrho}})\geq \deg(i_\star) -(\deg(i_\star) - 1) = 1$.
\end{proof}
Under the hypotheses of \cref{lem:lower_bound_L}, by considering a Rayleigh quotient involving $L^{(N)} = D^{-1/2} L D^{-1/2}$, it is easy to see that
$
\lambda_{\min}(L^{(N)}_{\widehat{\varrho}\widehat{\varrho}}) \geq 1/\max_{i\neq \varrho}\deg(i).
$
Since the maximal degree of a node is a graph with $n_v$ nodes is smaller or equal to $n_v - 1$, we have the lower bound 
$$
\lambda_{\min}(L^{(N)}_{\widehat{\varrho}\widehat{\varrho}}) \geq 1/(n_v - 1).
$$
Therefore, we can choose the strict lower bound $\theta_1 = 1/n_v = 1/(r + 1)$.
Finally, since $\text{nnz}(\sfX^\top \sfX) = (n_v - 1) + 2 (n_e -\deg(\varrho))$ where $n=n_e$ is the number of edges, we find that the complexity \cref{eq:cost_sketch_theta_1} reads
$
\text{nnz}(\sfX^\top \sfX)/\theta_1 = \mathcal{O}(nr),
$
 with a dependence on $nr$ as announced in \cref{sec:intro}.
 \paragraph*{A path graph.}
Consider a path graph of $n_v > 1$ nodes with $\varrho$ being an endpoint node. 
Note that $L_{\widehat{\varrho}\widehat{\varrho}}$ is a $(n_v - 1) \times (n_v - 1)$ tridiagonal matrix of a specific form which  was studied by \citet{da2007eigenvalues}, in which case we have
$
\lambda_{\min}(L_{\widehat{\varrho}\widehat{\varrho}}) = 4 \sin^2(\pi/(4n_v - 2)).
$
Now, we consider the normalized Laplacian.
Since node degrees are either equal to $1$ or to $2$, we find that
$$
2 \sin^2\left(\frac{\pi}{4n_v - 2}\right) \leq \lambda_{\min}(L^{(N)}_{\widehat{\varrho}\widehat{\varrho}}) \leq 4 \sin^2\left(\frac{\pi}{4n_v - 2}\right).
$$
Thus we can choose $\theta_1 = \sin^2(\pi/(4n_v - 2))$ which is strictly smaller than $\lambda_{\min}(L^{(N)}_{\widehat{\varrho}\widehat{\varrho}})$.
Hence, $1/\theta_1 = \Theta(n_v^2)$.
Recalling that we consider a path graph with $n_v = r + 1$ nodes, we conclude that $
\text{nnz}(\sfX^\top \sfX)/\theta_1 = \Theta(nr^2)$.

% Use that Cheeger inequality (local ratio) implies that $r^{-2} \lesssim  \lambda_{\min}(L^{(N)}_{\widehat{\varrho}\widehat{\varrho}}) \lesssim r^{-1}$ \citep{chung2007random}.
%
\section{Sampling determinantal dimer-rooted forests \label{sec:numerics}}
In this section, we illustrate the properties of the DPP on graph edges given in \cref{thm:contracted_trees} by using both Qiskit simulation tools and a quantum computer.
We sample dimer-rooted forests since they are convenient for subset vizualization. 
Furthermore, our algorithms have the advantage to be exact so that the output has the correct structure, e.g., upon conditioning on the cardinal, the sampled edges form a spanning tree.
This is in contrast with approximate sampling methods based on down-up walks as in \citep{anari2024optimal} which are faster in theory but are unlikely to be structure-preserving. 
Before describing the numerics, we briefly describe loader architectures to improve the text consistency whereas more details are provided in \cref{app:Clifford} for the interested reader.
\subsection{Architectures for the loaders \label{sec:architectures_loaders}}
\begin{figure}[h!]
  \centering
  \begin{subfigure}{\textwidth}
        \centering
        \includegraphics[scale = 0.2,clip, trim=2.6cm 0cm 0cm 0cm]{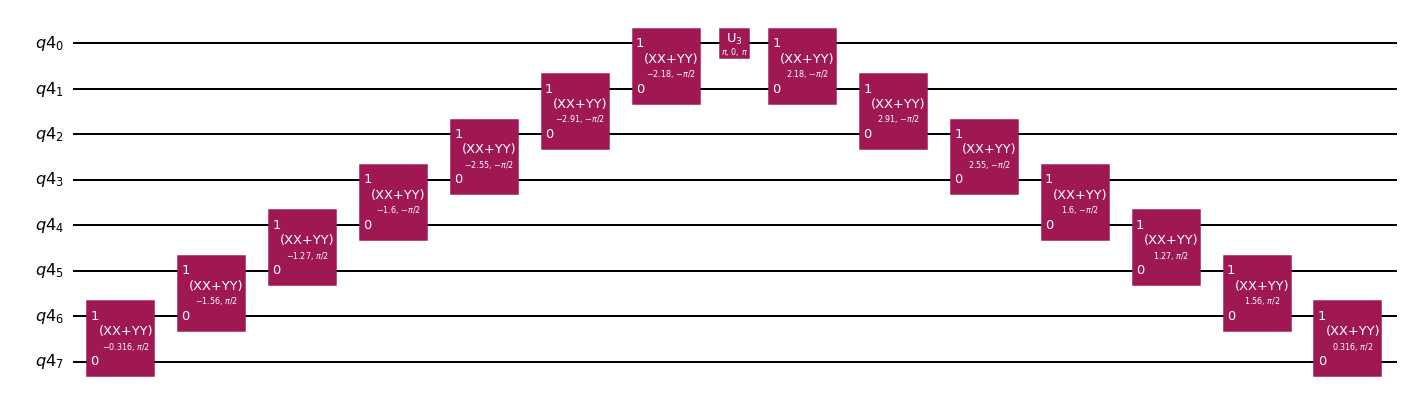}
        \caption{Pyramid Clifford loader.\label{fig:parallel_Clifford}}
  \end{subfigure}
  \begin{subfigure}{\textwidth}
        \centering
        \includegraphics[scale = 0.2,clip, trim=2.6cm 0cm 0cm 0cm]{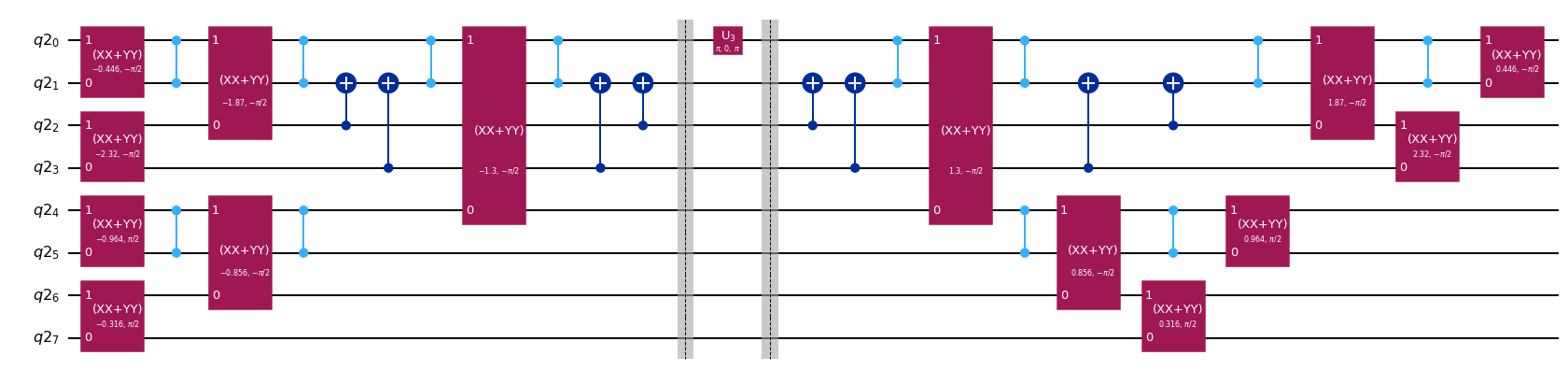}
        \caption{Parallel Clifford loader. \label{fig:pyramid_Clifford}}
  \end{subfigure}
\caption{A pyramid and a parallel Clifford architecture for $\Cliff(\sfx)$ where $\sfx$ is a $8\times 1$ unit vector.
In the pyramid loader, the depth is linear and all the gates are RBSs (red $\mathtt{XX+YY}$ gates); see \cref{lem:RBS}. 
In the parallel case, the depth is logarithmic and FBSs are used when the pairs of qubits are not consecutive (red $\mathtt{XX+YY}$ sandwiched with blue parity gates); see \cref{lem:FBS}. 
Note that the (grey) barriers are only there for an illustrative purpose.\label{fig:parallel_pyramid}
}
\end{figure}
Let a real vector $\sfx = [\sfx_1, \dots, \sfx_n]^\top$ such that $\|\sfx\|_2 = 1$.
We also fix an index $\ell_\triangle$ such that $1\leq \ell_\triangle \leq n$.
Denote $f_{i} = c_i + c_i^*$ for all $1\leq i \leq n$.
The Clifford Loader \cref{eq:Clifford_operator} can be written as the following operator
\begin{equation}
  \Cliff(\sfx) = \sfx_1 \cdot f_1 + \dots + \sfx_n\cdot f_n  =  \calU_{\ell_\triangle}(\sfx) f_{\ell_\triangle} \calU_{\ell_\triangle}(\sfx)^*,\label{eq:loader}
\end{equation}
where $\calU_{\ell_\triangle}(\sfx)$ is an appropriate unitary operator which depends on the index $\ell_\triangle$ and of the chosen circuit architecture. 
\citet*{kerenidis2022quantum} call $\calU_{\ell_\triangle}(\sfx)$ an \emph{unary} data loader and fix $\ell_\triangle= 1$, whereas they promote two circuit architectures: \emph{pyramid}  and  \emph{parallel} loaders, as mentioned in \cref{sec:CAR_and_Clifford}.
We refer to \cref{fig:parallel_Clifford} and \cref{fig:pyramid_Clifford} for an illustration of these circuits for which we also took $\ell_\triangle= 1$, as it can be seen by localizing the central $\mathtt{X} = \mathtt{U_3(\pi,0,\pi)}$ gate.
In light of these figures and by analogy with architecture, we call $\ell_\triangle$ the \emph{pyramidion} of the loader.
Following \citep*{kerenidis2022quantum}, we write $\calU_{\ell_\triangle}(\sfx)$ as a composition of two-qubit operators called Givens operators which are either RBS gates if the qubits are consecutive, or FBS gates otherwise.
The basic idea is that, for any $1\leq i < j \leq n$, we can code a Givens rotation by an angle $\theta$ with the help of an operator $\calU_{ij}(\theta)$ such that
\begin{align*}
  \begin{bmatrix}
    \calU_{ij} \ c_{i}^* \ \calU_{ij}^*\\
    \calU_{ij}\ c_{j}^* \ \calU_{ij}^*
  \end{bmatrix}
   =
   \begin{bmatrix}
    \cos \theta & \sin \theta\\
    -\sin \theta & \cos \theta 
   \end{bmatrix} 
   \begin{bmatrix}
    c_{i}^*\\
    c_{j}^*
  \end{bmatrix} \text{ and } \calU_{ij} c_{k}^*\calU_{ij}^* = c_{k}^*, \quad \forall k \notin\{i,j\}.
\end{align*}
Hence, recalling that $f_{i} = c_i + c_i^*$ for all $1\leq i \leq n$, the coordinates of $\sfx$ in the linear combination \cref{eq:loader} can be obtained by successively applying Givens operators as follows:
\begin{equation}
  1 \cdot f_{\ell_\triangle}^* \xrightarrow[]{ \calU_{\ell_\triangle j_1}} \cos \theta_1 \cdot f_{\ell_\triangle} + \sin \theta_1 \cdot f_{j_1} \xrightarrow[]{\dots} \dots \xrightarrow[]{\dots} \sfx_1 \cdot f_1 + \dots + \sfx_n \cdot f_n. \label{eq:chain_of_Us}
\end{equation}
By reading this sequence of operations from right to left, we observe that acting by conjugation with the inverse unitary operator puts a zero in front of a factor $f$. 
For instance, the inverse of the operation on the left-hand side in \cref{eq:chain_of_Us} puts a zero in front of $f_{j_1}$.
Thus, in this paper, the parameters of the loaders are identified by reading \cref{eq:chain_of_Us} from right to left and by determining the rotation angles in order to zero out a specific coefficient.
The sequence of Givens (zeroing) transformations \cref{eq:chain_of_Us} is determined by the architecture.
In the \emph{pyramid architecture}, Givens operators act on all the consecutive pairs of qubits in a staircase manner; see \cref{fig:pyramid_Clifford}. 
Namely, from right to left, by acting on $n-1$ and $n$, a zero is put in front on $f_n$; and this operation is repeated next by using qubits $n-2$ and $n-1$, etc, until only $1 \cdot f_1$ remains.
In contrast, the \emph{parallel architecture} involves pairs of qubits organized as in a half-interval search; see \cref{fig:parallel_Clifford}.
\begin{remark}
  Note that, in the process of construction of the pyramid and parallel loaders, we have to assume that the entry which has to be zeroed out is not already equal to zero.
  In the latter case, no gate has to be appended to the decomposition.
\end{remark}
The operator $\calU_{ij}(\theta)$ is represented in the computational basis as an FBS gate (or RBS if $j = i+1$).
RBS gates are implemented with $\mathtt{XX+YY}$ gates, whereas FBS gates also require controlled-$\sigma_z$ gates and  controlled-$\sigma_x$ gates; see \cref{fig:FBS_gate_in_detail} for an illustration.
For convenience, these gates are discussed in \cref{sec:RBS} and \cref{sec:FBS}, respectively.
Different choices of $\ell_\triangle$ and of decompositions of $U_{\ell_\triangle}(\sfx)$ correspond to different architectures.
\begin{figure}
  \centering
  \begin{subfigure}[b]{0.45\textwidth}
    \centering
    \includegraphics[scale = 0.2,clip, trim=2.6cm 0cm 0cm 0cm]{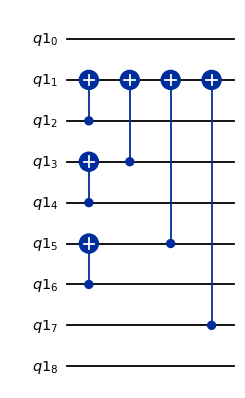}
    \caption{Log-depth parity gate.\label{fig:parity_gate}}
  \end{subfigure}
  \begin{subfigure}[b]{0.45\textwidth}
    \centering
    \includegraphics[scale = 0.2,clip, trim=2.8cm 10cm 0cm 0cm]{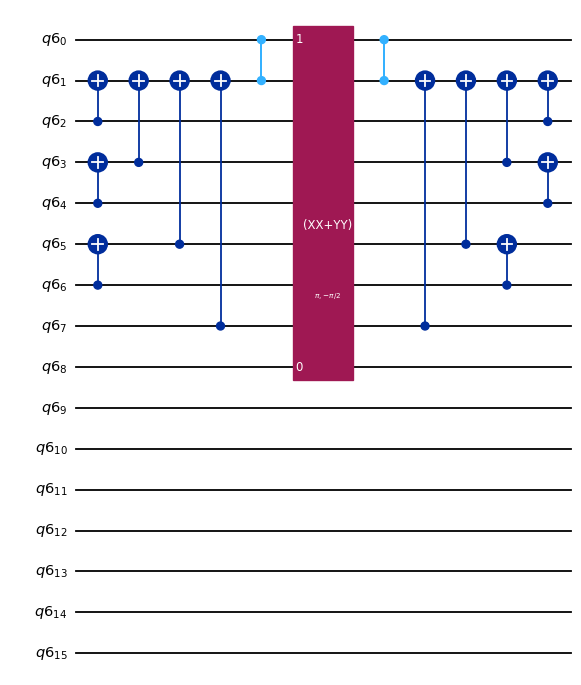}
    \caption{FBS gate.\label{fig:FBS_gate_in_detail}}
  \end{subfigure}
  \caption{Left-hand side: logarithmic depth gate computing the parity of the qubits between qubit $0$ and qubit $8$. 
  This parity is stored in qubit $1$.
  Right-hand side: FBS gate between qubit $0$ and qubit $8$.\label{fig:FBS_gate}}
\end{figure}

The pyramid and parallel architectures are suited for loading dense vectors since they do not adapt to the sparsity structure of $x$. 
In \cref{sec:sparse_architecture}, we propose another architecture called \emph{sparse} Clifford loaders in order to load vectors such as columns of an edge-vertex incidence matrix of a graph.
In that case, the pyramidion of the loader is chosen according to the sparsity pattern of $\sfx$.
The sparse loader of $\sfx$ requires $\mathrm{nnz}(\sfx)$ two-qubit gates and has $\mathcal{O}(n)$ depth.

%
% In this section, we implement Clifford loaders with the parallel architecture with logarithmic depth -- see \cref{fig:parallel_pyramid} -- as well as with the sparse architecture that we define in \cref{sec:sparse_architecture}.

\subsubsection{Sparse architecture \label{sec:sparse_architecture}}

\begin{figure}[h!]
  \begin{subfigure}[b]{0.3\textwidth}
        \centering
        \includegraphics[scale = 0.28,clip, trim=2.6cm 0cm 0cm 0cm]{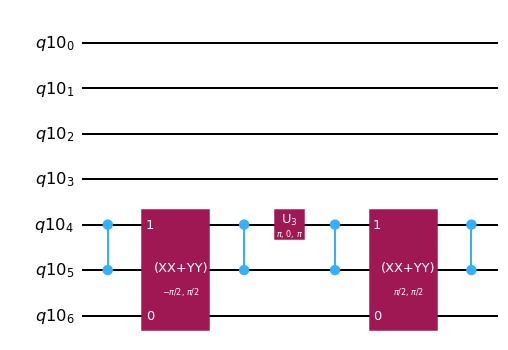}
        \caption{Sparse Clifford loader. }
  \end{subfigure}
  \hfill
  \begin{subfigure}[b]{0.3\textwidth}
  \centering
    \begin{tikzpicture}[scale = 0.5]
      \draw (0, 0) node (0){$ \kbordermatrix{
         &\\
         (0,1) &0 \\
        (0,2) &0 \\
         (1,2) &0 \\
        (2,\varrho) &0 \\ 
        {\color{blue}(\varrho,4)} &\frac{1}{\sqrt{2}}\\
        (\varrho,5) &0\\
        {\color{blue}(4,5)} &\frac{-1}{\sqrt{2}}
      }$};
    \end{tikzpicture}
  \caption{$\sfx = B_{:4}$\label{fig:sparse_Clifford}}
  \end{subfigure}
  \hfill
    \begin{subfigure}[b]{0.3\textwidth}
      \centering
      \begin{tikzpicture}[scale = 1.]
        \draw
          (0, 0) node[shape=circle,draw=black] (0){0}
          (0, 1) node[shape=circle,draw=black] (1){1}
          (1, 0.5) node[shape=circle,draw=black] (2){2}
          (2, 0.5) node[shape=circle,draw=black] (3){$\varrho$}
          (3, 0) node[shape=circle,draw=blue] (4){\color{blue} 4}
          (3, 1) node[shape=circle,draw=black] (5){5};
          \draw[-,thick] (0) to (1);
          \draw[-,thick] (0) to (2);
          \draw[-,thick] (1) to (2);
          \draw[-,thick,blue] (4) to (5);
        \draw[-,thick]  (2) to (3);
        \draw[-,thick,blue]  (3) to (4);
        \draw[-,thick]  (3) to (5);
      \end{tikzpicture}
      \caption{Barbell graph.}
    \end{subfigure}
  \caption{Loading the column of the edge-vertex incidence matrix $B$ (barbell graph) corresponding to vertex $4$  with a sparse Clifford architecture made of one FBS gate. 
  In this case, the pyramidion is $\ell_{\triangle} = 5$, i.e., edge $(\varrho,4)$.}
\end{figure}
To construct sparse versions of Clifford loaders, we adapt the strategy of \citep{JSKSB18}\footnote{\citet{JSKSB18} proposed circuits to prepare Slater determinants by acting with Givens rotations on the rows of a dense matrix and by using consecutive qubits.}.
Namely, Givens rotations are used to zero out -- one-by-one -- all the non-vanishing entries of the starting vector until this vector has a unique non-vanishing entry.
\begin{algorithm}[h!]
  \begin{algorithmic}
  \Require{Index $1 \leq \ell_\triangle \leq n$ and numerical precision $\mathtt{tol}>0$}.
  \State{Initialize $\sfx^\prime = \sfx$ and  $\calU = \mathbb{I}$.}
  \For{ $i = n, \dots, 1$} \hfill \emph{\teal{$\sharp$ loop backward from the end of $\sfx$.}}
    \If{$|\sfx^\prime_i| \geq \mathtt{tol}$}   

    \State Find the largest $j$ s.t.\ $1\leq j \leq i$ and $|\sfx^\prime_j| \geq \mathtt{tol}$. 
    \If{$j = i$} \hfill \emph{\teal{$\sharp$ pyramidion.}}
    \State Set $\ell_\triangle = i$ and break.
  
    \Else \hfill \teal{\emph{$\sharp$ use $j$ as pivot.}}
    \State Find $\theta$ and $x_\star$ such that \hfill \emph{\teal{$\sharp$ zero out $\sfx^\prime_i$ with nearest $\sfx^\prime_j \neq 0$.}}
    $$     
    \begin{bmatrix}
      x_\star\\
      0
    \end{bmatrix}
     =
     \begin{bmatrix}
      \cos \theta & \sin \theta\\
      -\sin \theta & \cos \theta 
     \end{bmatrix} 
     \begin{bmatrix}
      \sfx^\prime_j\\
      \sfx^\prime_i
    \end{bmatrix}.
    $$
    \State Update $\sfx^\prime_j \leftarrow x_\star$ and $\sfx^\prime_i \leftarrow 0$.
    \State Update $\calU \leftarrow \calU_{ij}(\theta) \circ \calU$ where $\calU_{ij}(\theta)$ is given in \cref{eq:rotation_ij}.
    \EndIf
    \EndIf
    \EndFor
    \Ensure{$\Cliff(\sfx) =  \calU (c_{\ell_\triangle}^* + c_{\ell_\triangle}) \calU^*$.}
  \end{algorithmic}
  \caption{Sparse Clifford loader of $\sfx\in \mathbb{R}^{n\times 1}$ with $\|\sfx\|_2 = 1$.}
  \label{a:sparse}
\end{algorithm}
In \cref{a:sparse}, we describe a procedure where we zero out each non-zero entry of the data vector by using the closest non-zero entry as a \emph{pivot}.
We take $\mathtt{tol} = 10^{-8}$ as the numeric threshold to distinguish non-zero entries.
Note that, in the context of graphs, small circuits built with the sparse architecture contain much fewer gates as it can be seen from the example of \cref{fig:hist_barbell_sherbrooke} (bottom).
This sparse architecture contains $12$ $\mathtt{XX+YY}$ gates and $8$ controlled-$\sigma_z$ gates, whereas the parallel architecture contains $20$ $\mathtt{XX+YY}$ gates, $24$ controlled-$\sigma_z$ gates and $16$ controlled-$\sigma_x$ gates.
\FloatBarrier
\subsection{Simulation results}
We consider two graphs: a complete graph and a barbell graph, for which the acceptance probability of our algorithm is discussed, respectively, in \cref{ex:acceptance_complete_graph} and \cref{ex:acceptance_barbell_graph}.
This determinantal measure over edges is a specific case of the DPP with skewsymmetric $L$-matrix of \cref{thm:pmf_DPP}.
For these simulations, we sample dimer-rooted forests with and without the amplification circuit described in \cref{sec:amplitude_amplification}.
\begin{figure}
  \begin{subfigure}[b]{0.5\textwidth}
    \centering
  \begin{tikzpicture}[scale = 1.5]
      \draw
        (0, 0) node[shape=circle,draw=black] (0){0}
        (0.5, 1) node[shape=circle,draw=black] (1){1}
        (1, 0) node[shape=circle,draw=black] (2){2}
        (0.5, 0.333) node[shape=circle,draw=black] (3){$\varrho$};
      \begin{scope}[-]
        \draw (0) to (1);
        \draw (0) to (2);
        \draw (0) to (3);
        \draw (1) to (2);
        \draw (1) to (3);
        \draw (2) to (3);
      \end{scope}
    \end{tikzpicture}
    \caption{Complete graph with $\varrho=3$. \label{fig:small_complete_plot}}
  \end{subfigure}
  \hfill
  \begin{subfigure}[b]{0.5\textwidth}
    \centering
    \begin{tikzpicture}[scale = 1.5]
      \draw
        (0, 0) node[shape=circle,draw=black] (0){0}
        (0, 1) node[shape=circle,draw=black] (1){1}
        (1, 0.5) node[shape=circle,draw=black] (2){2}
        (2, 0.5) node[shape=circle,draw=black] (3){$\varrho$}
        (3, 0) node[shape=circle,draw=black] (4){4}
        (3, 1) node[shape=circle,draw=black] (5){5};
      \begin{scope}[-]
        \draw (0) to (1);
        \draw (0) to (2);
        \draw (1) to (2);
        \draw (2) to (3);
        \draw (3) to (4);
        \draw (3) to (5);
        \draw (4) to (5);
      \end{scope}
    \end{tikzpicture}
    \caption{Barbell graph with $\varrho=3$.\label{fig:small_barbell_plot}}
 
  \end{subfigure}  
  \begin{subfigure}[b]{0.5\textwidth}
    \centering
    \includegraphics[scale = 0.4]{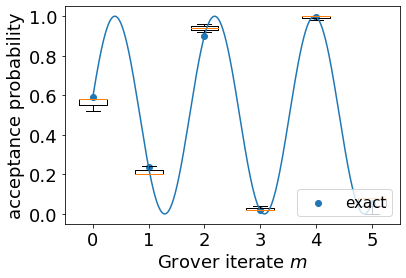}
    \caption{Acceptance \emph{vs} $m$ (complete graph).\label{fig:complete_sin**2}}
  \end{subfigure}
    \hfill
  \begin{subfigure}[b]{0.5\textwidth}
    \centering
    \includegraphics[scale = 0.4]{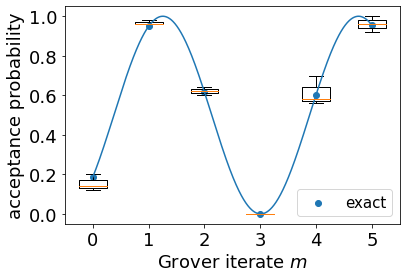}
    \caption{Acceptance \emph{vs} $m$ (barbell graph).\label{fig:barbell_sin**2}}
  \end{subfigure}

  \caption{Effect of amplification on the acceptance probability in \cref{a:rejection_sampling_Clifford_amplified} for sampling determinantal dimer-rooted forests in a complete graph (left-hand side) and a barbell graph (right-hand side).
  The blue dots denote the exact value of the amplified probability. 
  The boxes are obtained by simulation of a logarithmic Clifford loader.
  For each $0\leq m \leq 5$, the acceptance frequency is estimated over $50$ independent DPP samples, and error bars are obtained by repeating this estimation $3$ times.
  The orange curve is the extrapolation of the $\sin^2(x)$ function of \cref{rem:rule_of_thumb}.}
  %See \cref{ex:acceptance_complete_graph} and \cref{ex:acceptance_barbell_graph} for a discussion of the acceptance probability without amplification.}
\end{figure}
Although \cref{a:rejection_sampling_Clifford_amplified} determines the number of Grover steps $m$, for an illustrative purpose, we also consider the acceptance probability \cref{eq:amplified_acceptance_prob} as a function of $m$.
In \cref{fig:complete_sin**2} and \cref{fig:barbell_sin**2}, the periodic effect of amplitude amplification \cref{eq:Q**m_psi} -- of the form of a $sin^2(x)$ function -- is made manifest: for the complete graph where the initial acceptance probability is already large, and for boosting the acceptance probability in the case of the barbell graph.
As a matter of fact, one Grover iteration ($m=1$) is sufficient to amplify substancially the acceptance probability in the case of the barbell graph (\cref{fig:barbell_sin**2}) whereas the same number of iterations reduces this probability in the case of the complete graph (\cref{fig:complete_sin**2}, in which case \cref{a:rejection_sampling_Clifford_amplified} simply chooses $m=0$).
Adding several Grover iterations yields to an oscillatory behaviour where amplification as well as reduction can occur.

\begin{figure}
  \centering
  \begin{overpic}[width=\textwidth,scale = 0.055]{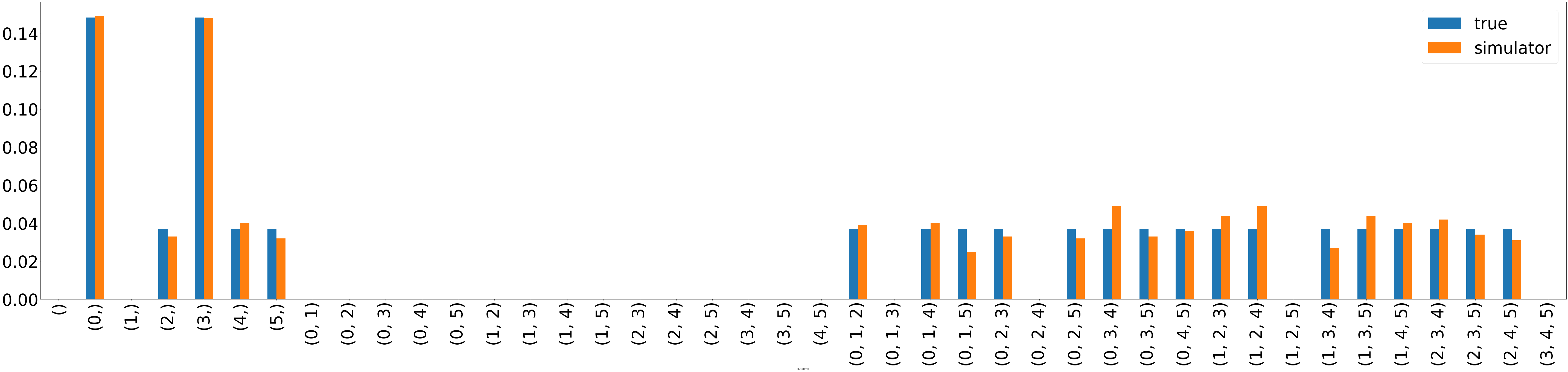}
    \put (67,14){\footnotesize spanning trees}
    \put( 54,  12) {\line(1,0){ 43}}
  \end{overpic}
  \caption{Simulated subset frequencies (orange) with a parallel Clifford loader \emph{vs} exact subset probabilities (blue) in the case of the complete graph of $4$ nodes depicted in \cref{fig:small_complete_plot}. 
  Subset frequencies are estimated over $1000$ independent DPP samples.
  At the right of the histogram, we observe the equal probability mass given to the spanning trees of the complete graph which are subsets of $3$ edges in this case. \label{fig:hist_complete}}
\end{figure}

Also, we directly report an estimate of the probability of any subset of edges.
Edges are enumerated as follows. 
An orientation is fixed by writing an edge as $(i,j)$ with $i<j$, and all the edges are then ordered lexicographically; for example, in \cref{fig:small_complete_plot}, $(0,1)$ is edge $\mathtt{(0,)}$, next $(0,2)$ is edge $\mathtt{(1,)}$, etc.
\Cref{fig:hist_complete} and \cref{fig:hist_barbell} provide histograms comparing empirical and exact subset probabilities.
These plots illustrate the fact that the determinantal measure only gives mass to dimer-rooted forests of odd cardinal since these two graphs have an even number of nodes; see \cref{def:dimer-rooted_forest}.
\begin{figure}
  \centering
\begin{subfigure}[b]{\textwidth}
  \centering
  \includegraphics[width=\textwidth,scale = 0.055]{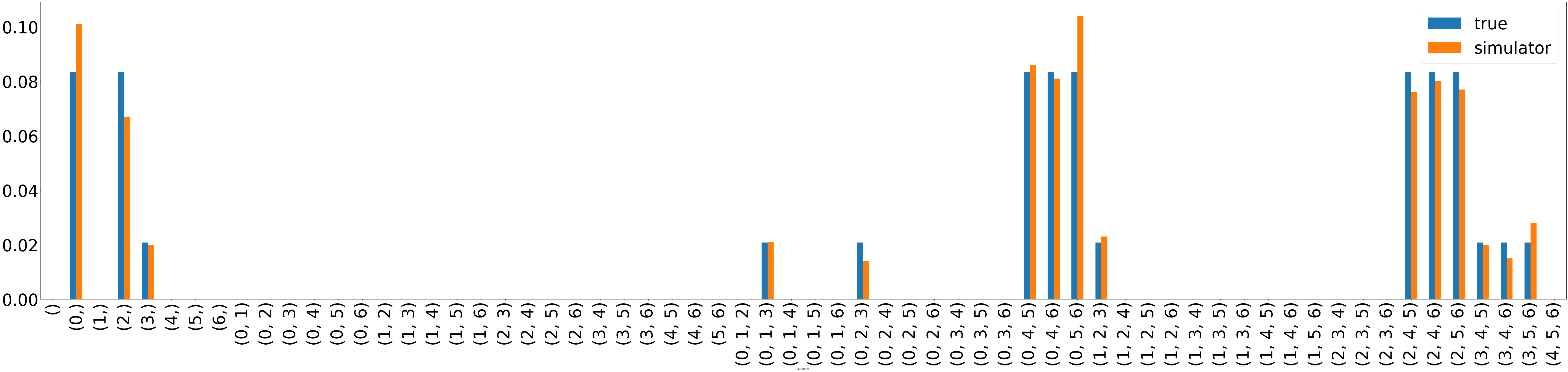}
\end{subfigure}
\hfill
\begin{subfigure}[b]{\textwidth}
  \centering
  \begin{overpic}[width=\textwidth,scale = 0.055]{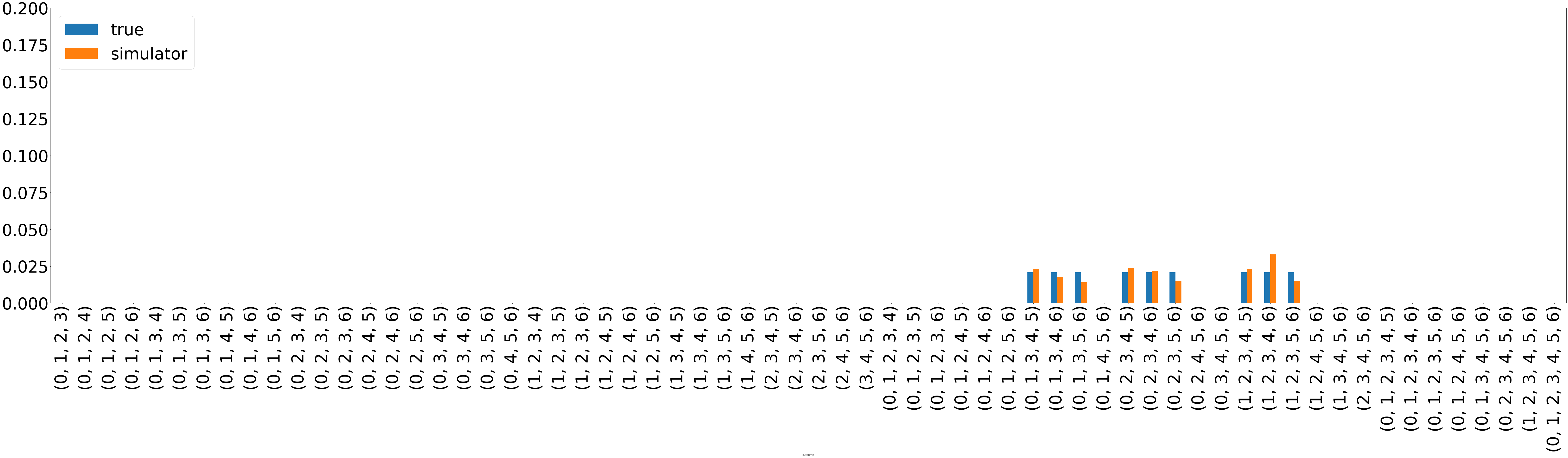}
    \put (67,15){\footnotesize spanning trees}
    \put( 65,  14) {\line(1,0){ 18}}

  \end{overpic}
\end{subfigure}
\caption{Simulated subset frequencies (orange) with the parallel Clifford loader \emph{vs} exact subset probabilities (blue) in the case of the barbell graph of $6$ nodes given in \cref{fig:small_barbell_plot}.
Subset frequencies are estimated over $1000$ independent DPP samples.
At the right of the second row, we observe the equal probability mass given to the spanning trees which are subsets of $5$ edges in this case. 
\label{fig:hist_barbell}}
\end{figure}
\begin{remark}[Example of dimer-rooted forests with zero probability]\label{rem:node_ordering_is_important}
  By inspecting for instance \cref{fig:hist_barbell}, we observe that the edge denoted by $\mathtt{(1,)}$ on the $x$-axis -- that is the single edge corresponding to the node pair $(0,2)$ in \cref{fig:small_barbell_plot} -- has zero probability to be sampled.
  Nonetheless, the single edge $(0,2)$ is a dimer-rooted forest with root $\varrho = 3$.
  This lack of symmetry is a consequence of the orientation induced by the node ordering as we now explain.
  Indeed, as mentioned in the proof of \cref{thm:contracted_trees}, $C=\skew(L_{\widehat{\varrho}\widehat{\varrho}})$ can be interpreted as a skewsymmetric adjacency matrix.
  Thus, for all edge $(i,j)$ with $i<j$, we have $C_{ij}=1$ and therefore the edge is oriented from $i$ to $j$, as illustrated in \cref{fig:oriented_cycle_skew(L)}.
  Now, by using \cref{thm:contracted_trees},  the probability that $(0,2)$ is sampled equals
  $$
  \det
  \kbordermatrix{
    & (0,2) & 0 & 1 & 2\\
    (0,2) & 0 & 1 & 0 & -1\\
    0 & -1 & 0 & 1 & 1\\
    1 & 0 & -1& 0 & 1 \\
    2 & 1 & -1 & -1 & 0
  }
  =0.
  $$
  Considering the augmented graph of \cref{fig:oriented_cycle_augmented}, the above determinant vanishes since the row of node $(0,2)$ is collinear with the row of node $1$.
  \begin{figure}[h]
  \begin{subfigure}[b]{0.5\textwidth}
    \centering
    \begin{tikzpicture}[scale = 1.5]
      \draw
        (0, 0) node[shape=circle,draw=black] (0){0}
        (0, 1) node[shape=circle,draw=black] (1){1}
        (1, 0.5) node[shape=circle,draw=black] (2){2}
        (2, 0.5) node[shape=circle,draw=black] (3){$\varrho$}
        (3, 0) node[shape=circle,draw=black] (4){4}
        (3, 1) node[shape=circle,draw=black] (5){5};
      \begin{scope}[-]
        \draw[->,thick] (0) to (1);
        \draw[->,thick] (0) to (2);
        \draw[->,thick] (1) to (2);
        \draw[->,thick] (4) to (5);
      \end{scope}
      \draw[dotted] (2) to (3);
      \draw[dotted] (3) to (4);
      \draw[dotted] (3) to (5);
    \end{tikzpicture}
    \caption{ $ \skew(L_{\widehat{\varrho}\widehat{\varrho}})$. \label{fig:oriented_cycle_skew(L)}}

  \end{subfigure}
  \hfill
  \begin{subfigure}[b]{0.5\textwidth}
    \centering
    \begin{tikzpicture}[scale = 1.5]
      \draw
        (0, 0) node[shape=circle,draw=black] (0){0}
        (0, 1) node[shape=circle,draw=black] (1){1}
        (1, 0.5) node[shape=circle,draw=black] (2){2}
        (2, 0.5) node[shape=circle,draw=black] (3){$\varrho$}
        (3, 0) node[shape=circle,draw=black] (4){4}
        (3, 1) node[shape=circle,draw=black] (5){5};
      \node[shape=rectangle,draw=teal, ultra thick] (B) at (0.7,0) {\tiny $(0,2)$};
      \draw[->,thick] (B) to (0);
      \draw[->,thick] (2) to (B);

      \begin{scope}[-]
        \draw[->,thick] (0) to (1);
        \draw[->,thick] (0) to (2);
        \draw[->,thick] (1) to (2);
        \draw[->,thick] (4) to (5);
      \end{scope}
      \draw[dotted] (2) to (3);
      \draw[dotted] (3) to (4);
      \draw[dotted] (3) to (5);
    \end{tikzpicture}
    \caption{
    $\left(\begin{smallmatrix}
        0 & B_{\calC \widehat{\varrho}}\\
        -(B_{\calC \widehat{\varrho}})^\top & \skew(L_{\widehat{\varrho}\widehat{\varrho}})
    \end{smallmatrix}\right)
    $ with $\calC=\{(0,2)\}$. \label{fig:oriented_cycle_augmented}}

  \end{subfigure}
  \caption{Oriented graphs associated with two skewsymmetric adjacency matrices obtained in the case of the barbell graph of \cref{fig:small_barbell_plot}; see \cref{rem:node_ordering_is_important}. \label{fig:oriented_cycle}}
\end{figure}
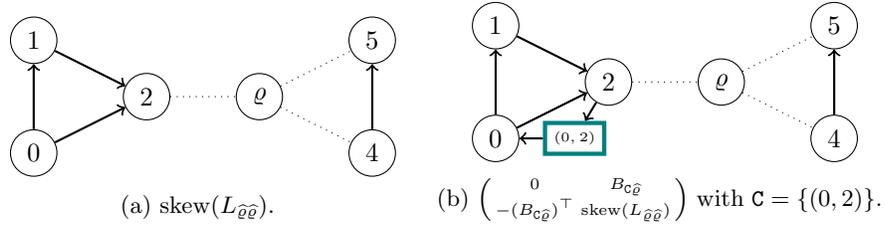
\end{remark}
\subsection{Results on Quantum Computing Units}
To illustrate the effect of noise, we execute the experiment of \cref{fig:hist_barbell} (still without amplification) on one of the freely available Quantum Computing Units (QPUs) of \texttt{IBM Quantum Plateform} with $127$ qubits, namely  $\mathtt{ibm\_sherbrooke}$, by using this time a sparse Clifford loader.

Concerning the timing, $8192$ samples were obtained in about $4s$ excluding transpilation time, according to \texttt{IBM Quantum Plateform} dashboard. 
The frequencies of sampled subsets are displayed in \cref{fig:hist_barbell_sherbrooke}. 
In the first two rows of \cref{fig:hist_barbell_sherbrooke}, the subsets with large probabilities are indeed sampled although other subsets with vanishing probability also appear due to the presence of noise. 
Note that we used the highest possible level of optimization of the transpiler, namely $\mathtt{optimization\_level}=3$, in order to reduce the errors as much as possible.
The results without optimization ($\mathtt{optimization\_level}=0$) are more noisy as it can be seen in \cref{fig:hist_barbell_sherbrooke_no_optim}.
\begin{figure}
  \centering
\begin{subfigure}[b]{\textwidth}
  \centering
  \includegraphics[width=\textwidth,scale = 0.055]{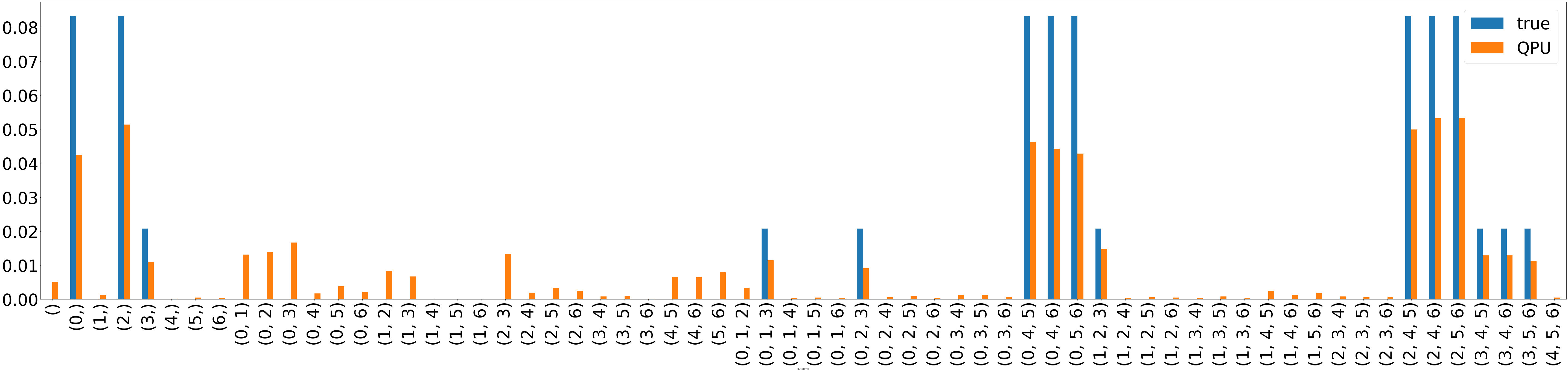}
\end{subfigure}
\hfill
\begin{subfigure}[b]{\textwidth}
  \centering
  \includegraphics[width=\textwidth,scale = 0.055]{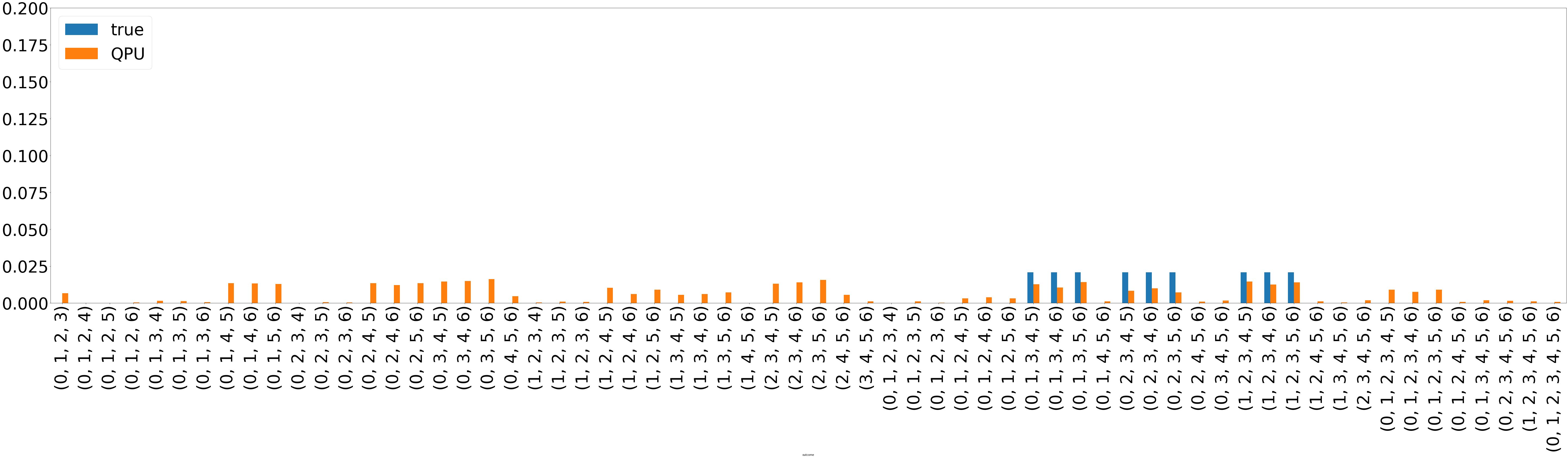}
\end{subfigure}
\begin{subfigure}[b]{\textwidth}
  \centering
  \includegraphics[width=\textwidth,scale = 0.055]{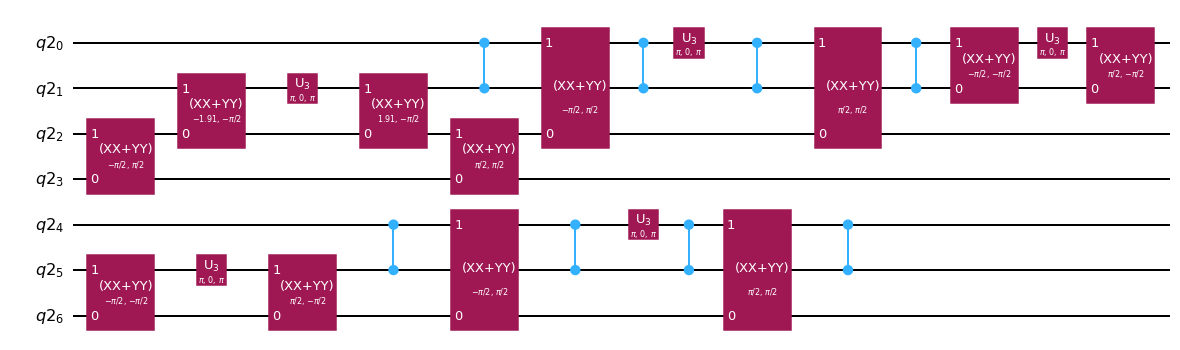}
\end{subfigure}
\caption{Row $1$ and Row $2$: observed subset frequencies on $\mathtt{ibm\_sherbrooke}$ (orange) with the \emph{sparse} Clifford loader \emph{vs} exact subset probabilities (blue) in the case of the barbell graph of \cref{fig:small_barbell_plot}.
Row 3: the corresponding \emph{sparse} Clifford loader with FBS gates.
Subset frequencies are estimated over $8192$ independent DPP samples. Qiskit transpiler is used with the option $\mathtt{optimization\_level}=3$.
\label{fig:hist_barbell_sherbrooke}}
\end{figure}

In general, we found that the parallel and sparse architectures -- which use FBS gates -- are more error prone than the pyramid architecture.
The reason might be that the latter only use RBS gates on neighbouring qubits, while the former use additional controlled-$\sigma_z$ and controlled-$\sigma_x$ gates; see \cref{fig:FBS_gate} for an example.

\begin{figure}
  \centering
\begin{subfigure}[b]{\textwidth}
  \centering
  \includegraphics[width=\textwidth,scale = 0.055]{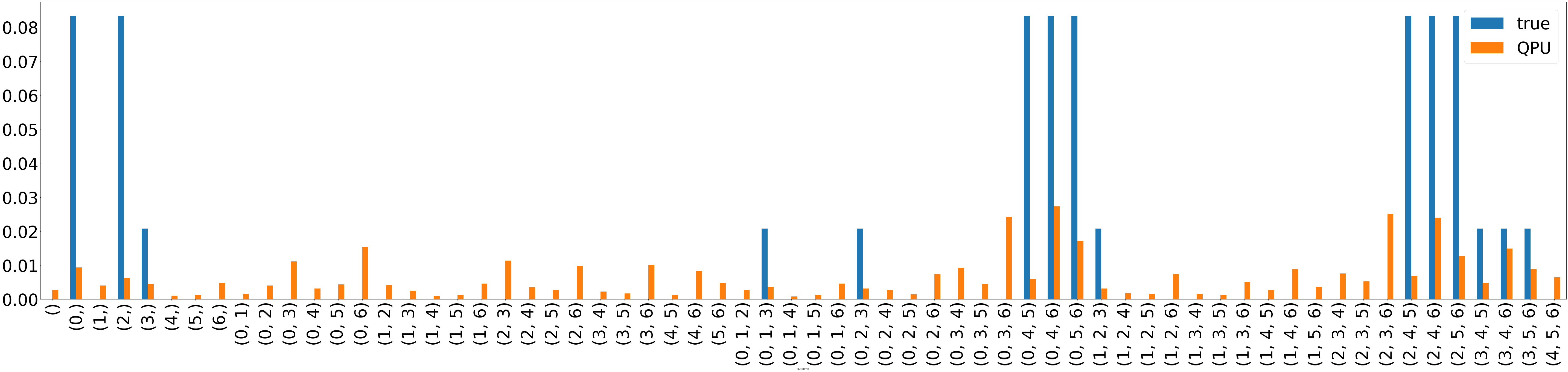}
\end{subfigure}
\hfill
\begin{subfigure}[b]{\textwidth}
  \centering
  \includegraphics[width=\textwidth,scale = 0.055]{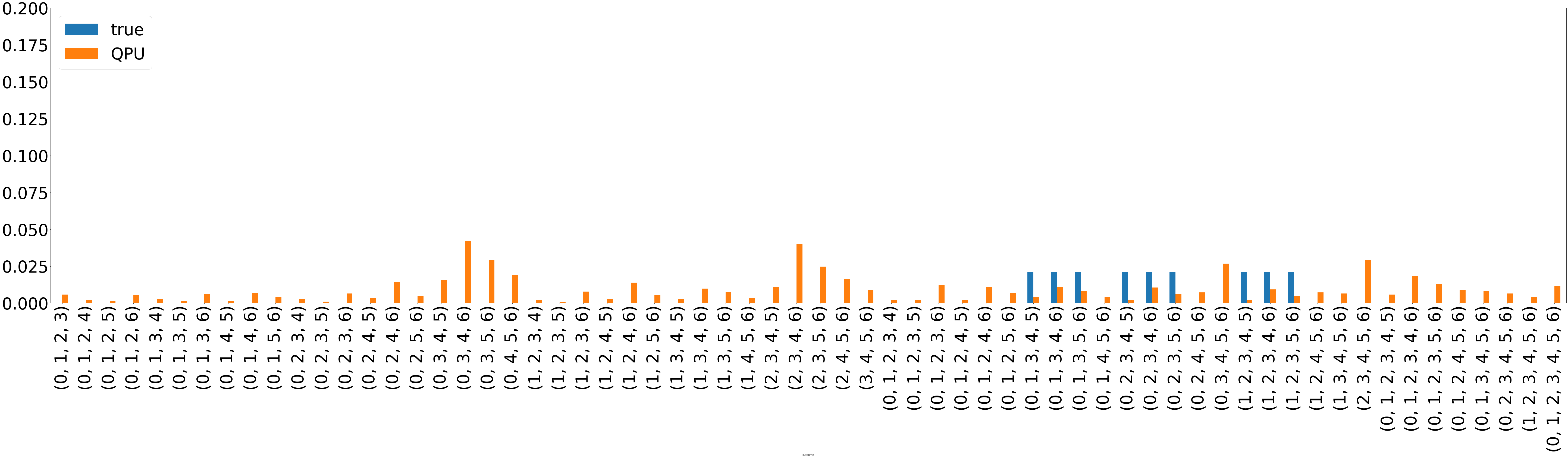}
\end{subfigure}

\caption{Row $1$ and Row $2$: observed subset frequencies on $\mathtt{ibm\_sherbrooke}$ (orange) with the \emph{sparse} Clifford loader \emph{vs} exact subset probabilities (blue) in the case of the barbell graph of \cref{fig:small_barbell_plot}. In contrast with \cref{fig:hist_barbell_sherbrooke}, the Qiskit transpiler uses the lowest optimization level, i.e.,  $\mathtt{optimization\_level}=0$.
\label{fig:hist_barbell_sherbrooke_no_optim}}
\end{figure}

\FloatBarrier
\section{Conclusion}
We have presented an algorithm that samples a DPP when the kernel is a projection onto a space given by a non-orthonormal spanning set of vectors, namely the columns of the $n\times r$ matrix $X$.
The computational bottleneck of our algorithm is a classical preprocessing step that approximates $\det \sfX^\top \sfX$.
Whenever this step is $o(nr^2)$, we have a faster DPP sampling algorithm than classical counterparts. 

Our algorithm takes the form of a classical preprocessing through a normalization step and an estimation of the necessary number of Grover steps, followed by a bounded number of repeated runs of a quantum circuit until an easily-checked cardinality constraint is met. 
In passing, we have characterized the distribution of the output of our algorithm when only one run of the circuit is performed, without checking the cardinality constraint. This side contribution sheds light on the robustness of a previous algorithm by \cite{kerenidis2022quantum} and brings a new DPP forward, by identifying a DPP kernel of a type unknown in previous applications of DPPs, yet that is natural in the quantum formalization of DPPs.
We have illustrated our algorithm on small-scale graph examples, which also shows the potential of our algorithm is numerical applications of uniform spanning trees and related distributions.
This connects our approach to recent work by \cite{apers2022quantum}, who propose a quantum algorithm for graph sparsification that assumes a QRAM memory and does not explicitly sample uniform spanning trees.

There are many potential extensions of our algorithm. 
For instance, one could design DPP-specific error-correcting codes, using known statistical properties of DPPs to detect e.g. readout errors.
We could also adapt our circuit to the particular machine on which it is run, by taking into account the estimated accuracies of one- and two-qubit gates provided by the constructor.
This is partially done automatically when running circuits e.g. on IBM machines, a step known as \emph{transpilation}, but the procedure could be taylored to our particular task.
More formally, it would also be interesting to define Clifford loaders in order to sample quaternionic DPPs \citep{kassel2022determinantal} or the recently introduced multideterminantal DPPs \citep{kenyon2025multideterminantal}.

Other interesting avenues for future work are the completion of our quantum circuit to treat a downstream task such as linear regression with column subset selection \emph{ directly on the quantum computer}, instead of outputting a DPP sample. Moreover, it would be interesting to see if other QR-related numerical algebraic tasks than DPP sampling can benefit from the combination of Clifford loaders and amplitude rejection.
\section*{Acknowledgements}

This work was supported by the ERC grant BLACKJACK (ERC-2019-STG-851866)
and the ANR AI chair BACCARAT (ANR-20-CHIA-0002).
We acknowledge the use of IBM Quantum services for this work. The views expressed are those of the authors, and do not reflect the official policy or position of IBM or the IBM Quantum team.
\appendix
\section{Clifford loaders\label{app:Clifford}}
We revisit here the construction of Clifford loaders by \citep{johri2021nearest} and  \citet*{kerenidis2022quantum}  with the operator algebra approach of \citet*{bardenet2024sampling}; see \cref{sec:Parallel_Pyramid_Loaders}.
%In addition, we describe another loader architecture suited for sparse vectors; see \cref{sec:sparse_architecture}.

% Let $\sfx\in \mathbb{R}^d$ such that $\|\sfx\|_2 = 1$.
% We also fix an index $\ell_\triangle$ such that $1\leq \ell_\triangle \leq d$.
% The Clifford Loader $\Cliff(\sfx)$ can be written as the following operator
% \begin{equation}
%   \Cliff(\sfx) = \sum_{i=1}^d \sfx_i (c_i^* + c_i) =  \calU_{\ell_\triangle}(\sfx) (c_{\ell_\triangle}^* + c_{\ell_\triangle}) \calU_{\ell_\triangle}(\sfx)^*,\label{eq:loader}
% \end{equation}
% where $\calU_{\ell_\triangle}(\sfx)$ is an appropriate unitary operator which depends on the index $\ell_\triangle$ and of the chosen circuit architecture. 
% We write this operator as a composition of two-qubit operators called Givens operators that we describe in the sequel. 
% Different choices of $\ell_\triangle$ and of decompositions of $U_{\ell_\triangle}(\sfx)$ correspond to different achitectures.
%
\subsection{Parallel and pyramid architectures\label{sec:Parallel_Pyramid_Loaders}}

In the next subsection, we highlight a connection -- which is implicit in \citet*{kerenidis2022quantum} -- between the Clifford loaders and spherical coordinate systems.
\subsubsection{Spherical coordinate systems}

The basic idea to construct $\calU(\sfx)$ in \cref{eq:loader} is to use spherical coordinates of $\sfx$.
We can consider the case where $n$ is a power of $2$ since any vector can be increased in length by adding zeros so that its length is $2^k$ for some $k$.
Hyperspherical coordinate systems of $\sfx$ are associated with binary trees, as it is explained in \citep[Chapter 6]{nikiforov1991classical}.
Following \citep{johri2021nearest}, we are interested in circuits with short depth which are associated with binary trees of short depth where parallel operations are executed.
We give an example of this parallelism for $n=4$ in \cref{fig:log_tree}, so that the generalization to higher $n$ becomes intuitive.
\begin{figure}[ht!]
  \centering
    \begin{tikzpicture}{scale=0.7}
        \node[shape=circle,draw=black] (A) at (0,0) {$r_1$};
        \node at (0,0+0.7) {$\theta_1$};

        \node[shape=circle,draw=black] (B) at (-1,1) {$r_2$};
        \node at (-1,1+0.7) {$\theta_2$};

        \node[shape=circle,draw=black] (C) at (1,1) {$r_3$};
        \node at (1,1+0.7) {$\theta_3$};

        \node[shape=circle,draw=black] (D) at (-1.5,2.5) {$\sfx_1$};
        \node[shape=circle,draw=black] (E) at (-0.5,2.5) {$\sfx_2$};
        \node[shape=circle,draw=black] (F) at (0.5,2.5) {$\sfx_3$};
        \node[shape=circle,draw=black] (G) at (1.5,2.5) {$\sfx_4$};

        \node at (-1.5-0.8,2.5+0.6+0.05) {$\sfx^\top = $};

        \node at (-1.5-0.3,2.5+0.6) {$[$};
        \node at (-1.5,2.5+0.6) {$\sfx_1$};
        \node at (-0.5,2.5+0.6) {$\sfx_2$};
        \node at (0.5,2.5+0.6) {$\sfx_3$};
        \node at (1.5,2.5+0.6) {$\sfx_4$};
        \node at (1.5+0.3,2.5+0.6) {$]$};

        \path [-] (A) edge (B);
        \path [-] (A) edge (C);
        \path [-] (B) edge (D);
        \path [-] (B) edge (E);
        \path [-] (C) edge (F);
        \path [-] (C) edge (G);

        \node at (4, 3){coordinates};

        \node at (4, 2.5){$\mathsf{x_1}= \cos \theta_1 \cos \theta_2$};
        \node at (4, 2){$\mathsf{x_2}= \cos \theta_1 \sin \theta_2$};  \node at (4, 1.5){$\mathsf{x_3}=\sin \theta_1 \cos \theta_3$};
        \node at (4, 1){$\mathsf{x_4}=\sin\theta_1 \sin \theta_3$};

        \node at (8, 2.5){$r_2= \sqrt{\mathsf{x_1}^2 + \mathsf{x_2}^2}$};
        \node at (8, 2.){$r_3= \sqrt{\mathsf{x_3}^2 + \mathsf{x_4}^2}$};
        \node at (8, 1.5){$r_1= \sqrt{r_1^2 + r_2^2}$};

        \node at (8, 3){parameters};

        \node at (8, 1.-0.2){$\cos \theta_2 = \mathsf{x_1}/r_2$};
        \node at (8, .5-0.2){$\cos \theta_3 = \mathsf{x_3}/r_3$};
        \node at (8, -0.2){$\cos \theta_1 = r_2/r_1$};
  \end{tikzpicture}
  \caption{Hyperspherical coordinates of a unit norm $\sfx\in \mathbb{R}^4$ (parallel shape).
  For simplicity, we assume that none of the components of $\sfx$ vanish.
  The parameters $r_3,r_2,r_1$ are obtained by going along the tree from the leafs to the root. 
  The angles $\theta_1,\theta_2, \theta_3$ are computed by going from the root to the leafs and using $r_1,r_2,r_3$.
  \label{fig:log_tree}}
\end{figure}
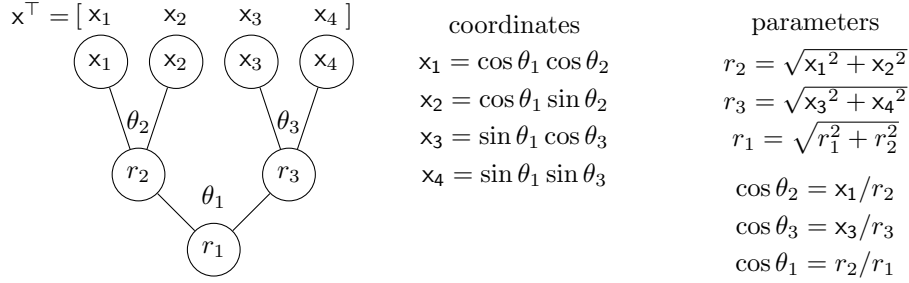
To compute the spherical coordinates of $\sfx$, we go in the tree from the leafs to the roots to evaluate the $r_i$s.
Next, the angles $\theta_i$s are obtained by visiting the nodes from the root to the leaves.
We see that the computation of all the coordinates of $\sfx$ requires $\mathcal{O}(n)$ operations.
Note that different binary trees correspond to different spherical coordinate systems.
For example, for $\sfx\in \mathbb{R}^8$, %a tree of depth $7$ is given in \cref{fig:largest_depth}, whereas 
the tree of largest depth (i.e., least parallel) is given in \cref{fig:least_depth}.

\begin{figure}[ht!]
  \resizebox{.8\textwidth}{!}
  {
  \begin{tikzpicture}{scale = 0.5}

    \node[scale=0.75] at (-1.5-0.3-0.5,2.5 +0.5+0.05) {$\sfx^\top = $};

    \node[scale=0.75] at (-1.5-0.3,2.5 +0.5) {$[$};

    \node[scale=0.75] at (-1.5,2.5 +0.5) {$\sfx_1$};
    \node[scale=0.75]  at (-0.5,2.5 +0.5) {$\sfx_2$};
    \node[scale=0.75]  at (0.5,2.5 +0.5) {$\sfx_3$};
    \node[scale=0.75]  at (1.5,2.5 +0.5) {$\sfx_4$};
    \node[scale=0.75]  at (-1.5+4,2.5 +0.5) {$\sfx_5$};
    \node[scale=0.75]  at (-0.5+4,2.5 +0.5) {$\sfx_6$};
    \node[scale=0.75]  at (0.5+4,2.5 +0.5) {$\sfx_7$};
    \node[scale=0.75]  at (1.5+4,2.5 +0.5) {$\sfx_8$};

    \node[scale=0.75] at (1.5+4+0.3,2.5 +0.5) {$]$};

    \node[shape=circle,draw=black,scale=0.75] (D1) at (-1.5,2.5) {$\sfx_1$};
    \node[shape=circle,draw=black,scale=0.75] (E1) at (-0.5,2.5) {$\sfx_2$};
    \node[shape=circle,draw=black,scale=0.75] (F1) at (0.5,2.5) {$\sfx_3$};
    \node[shape=circle,draw=black,scale=0.75] (G1) at (1.5,2.5) {$\sfx_4$};
    \node[shape=circle,draw=black,scale=0.75] (D2) at (-1.5+4,2.5) {$\sfx_5$};
    \node[shape=circle,draw=black,scale=0.75] (E2) at (-0.5+4,2.5) {$\sfx_6$};
    \node[shape=circle,draw=black,scale=0.75] (F2) at (0.5+4,2.5) {$\sfx_7$};
    \node[shape=circle,draw=black,scale=0.75] (G2) at (1.5+4,2.5) {$\sfx_8$};

    % \node[shape=circle,draw=black] (A0) at (2,-2) {$r_1$};
    \node[shape=circle,draw=black,scale=0.75] (r1) at (-1.5 + 0.125*3.5,2.5 - 0.125*4.5) {$r_7$};
    \node[shape=circle,draw=black,scale=0.75] (r2) at (-1.5 + 0.25*3.5,2.5 - 0.25*4.5) {$r_6$};
    \node[shape=circle,draw=black,scale=0.75] (r3) at (-1.5 + 0.375*3.5,2.5 - 0.375*4.5) {$r_5$};
    \node[shape=circle,draw=black,scale=0.75] (r4) at (-1.5 + 0.5*3.5,2.5 - 0.5*4.5) {$r_4$};
    \node[shape=circle,draw=black,scale=0.75] (r5) at (-1.5 + 0.625*3.5,2.5 - 0.625*4.5) {$r_3$};
    \node[shape=circle,draw=black,scale=0.75] (r6) at (-1.5 + 0.75*3.5,2.5 - 0.75*4.5) {$r_2$};
    \node[shape=circle,draw=black,scale=0.75] (r7) at (-1.5 + 0.875*3.5,2.5 - 0.875*4.5) {$r_1$};

    \path [-] (D1) edge (r1);

    \path [-] (r7) edge (r6);
    \path [-] (r5) edge (r6);
    \path [-] (r5) edge (r4);
    \path [-] (r3) edge (r4);
    \path [-] (r3) edge (r2);
    \path [-] (r1) edge (r2);

    \path [-] (E1) edge (r1);
    \path [-] (F1) edge (r2);
    \path [-] (G1) edge (r3);
    \path [-] (D2) edge (r4);
    \path [-] (E2) edge (r5);
    \path [-] (F2) edge (r6);
    \path [-] (G2) edge (r7);

    \node at (9-0.5,2.5){coordinates};
    \node at (9-0.5, 2.5 -0.5){$\mathsf{x_1}= \cos\theta_1\dots \cos\theta_7$};
    \node at (9-0.5, 2-0.5){$\vdots$};  
    \node at (9-0.5, -0.){$\vdots$};
    \node at (9-0.5, -0.5){$\mathsf{x_6}=\cos\theta_1\cos\theta_2\sin\theta_3$};
    \node at (9-0.5, -1){$\mathsf{x_7}=\cos\theta_1\sin\theta_2$};
    \node at (9-0.5, -1.5){$\mathsf{x_8}=\sin\theta_1$};
\end{tikzpicture}
}
  \caption{Hyperspherical coordinates of a unit vector $\sfx\in \mathbb{R}^8$ with pyramid shape.
  Again, we assume that none of the components of $\sfx$ vanish.
  \label{fig:least_depth}}
\end{figure}
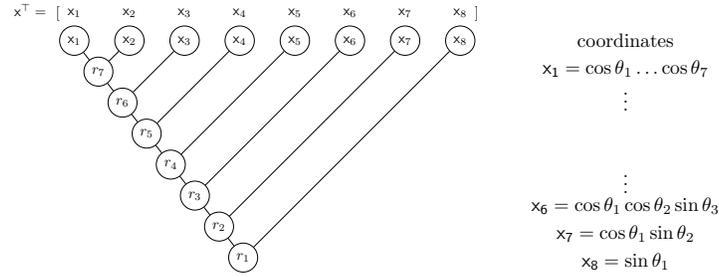

\subsubsection{RBS gates \label{sec:RBS}}
We now give the intuition behind the use of the Givens operator to load a unit vector in $\mathbb{R}^4$ with parallel rotations as in the case of the tree of \cref{fig:log_tree}.
Consider the vector 
$
\mathsf{y} = 
[
  \cos \theta,
   \sin \theta,
   0,
   0
]^\top.
$
Actually, there exists a unitary operator $\calU(\mathsf{y})$ -- called here Givens operator -- such that $\calU\ket{\emptyset}= \ket{\emptyset}$ and
\begin{align}
  \begin{bmatrix}
    \calU \ c_{1}^* \ \calU^*\\
    \calU\ c_{2}^* \ \calU^*
  \end{bmatrix}
   =
   \begin{bmatrix}
    \cos \theta & \sin \theta\\
    -\sin \theta & \cos \theta 
   \end{bmatrix} 
   \begin{bmatrix}
    c_{1}^*\\
    c_{2}^*
  \end{bmatrix} \text{ and } \calU c_{i}^*\calU^* = c_{i}^*, \quad \forall i \notin\{1,2\}; \label{eq:rotation}
\end{align}
as defined e.g.\ in \citep[Section 5.2.1]{bardenet2024sampling}.
We can load $\mathsf{y}$ by acting on the Fock vacuum as follows:
$$
  \Cliff(\mathsf{y}) \ket{\emptyset} = \cos \theta \ c_{1}^*\ket{\emptyset}+ \sin \theta \ c_{2}^*\ket{\emptyset}.
$$
In the computational basis, this reads
$
  \Cliff(\mathsf{y}) \ket{0000} =  \cos \theta \ket{1000}+ \sin \theta \ket{0100}.
$
The following result --  whose proof is elementary -- is used in order to implement Clifford loaders whenever qubits are consecutive.
\begin{lemma}[RBS Gate]\label{lem:RBS}
  Let $\calU$ be the operator defined in \cref{eq:rotation}.
  Consider the vector space generated by the following orthonormal basis: 
  $$\left(\ket{\emptyset}= \ket{0000}, c_1^* \ket{\emptyset} = \ket{1000}, c_2^* \ket{\emptyset} = \ket{0100}, c_2^*c_1^*\ket{\emptyset} = \ket{1100}\right).$$
  Denote this basis by $\mathcal{B}$.
  The representation of $\calU$ restricted to this subspace is
  \begin{equation}
  \rho(\calU)_{\mathcal{B}\mathcal{B}}
  = 
  \kbordermatrix{
    &\ket{0000} & \ket{1000} & \ket{0100} & \ket{1100}\\
    \ket{0000}&1 & 0 & 0 & 0\\
    \ket{1000}&0 & \cos\theta & \sin \theta & 0\\
    \ket{0100}&0 & -\sin \theta & \cos \theta & 0\\
    \ket{1100}&0 & 0 & 0 & 1
  },\label{eq:RBS}
  \end{equation}
  whereas $\rho(\calU)$ is the identity in the subspace generated by $\ket{0010}$ and $\ket{0001}$.
  The matrix \cref{eq:RBS} is called in \citep*{kerenidis2022quantum} a Reconfigurable Beam Splitter (RBS) gate.
\end{lemma}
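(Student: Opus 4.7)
The plan is to verify the matrix entry-by-entry by computing $\calU \ket{\psi}$ for each basis state $\ket{\psi}$ in $\mathcal{B}$, as well as for $\ket{0010} = c_3^* \ket{\emptyset}$ and $\ket{0001} = c_4^* \ket{\emptyset}$, exclusively using the defining properties $\calU \ket{\emptyset} = \ket{\emptyset}$ and the conjugation relations \eqref{eq:rotation}. Since $\calU \ket{\emptyset} = \ket{\emptyset}$ and $\calU$ is unitary, we also have $\calU^* \ket{\emptyset} = \ket{\emptyset}$, which is the key observation that lets us rewrite any action on a Fock state as a conjugation followed by the trivial action on the vacuum.

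First, $\calU \ket{0000} = \ket{0000}$ by hypothesis, giving the trivial first column. For the one-particle states, I write $\ket{1000} = c_1^* \ket{\emptyset}$ and insert $\calU^* \calU = \mathbb{I}$ to obtain
\begin{equation*}
\calU c_1^* \ket{\emptyset} = \calU c_1^* \calU^* \ket{\emptyset} = (\cos\theta \, c_1^* + \sin\theta \, c_2^*) \ket{\emptyset} = \cos\theta \ket{1000} + \sin\theta \ket{0100},
\end{equation*}
and similarly $\calU \ket{0100} = -\sin\theta \ket{1000} + \cos\theta \ket{0100}$. For the two remaining one-particle states $c_3^* \ket{\emptyset}$ and $c_4^* \ket{\emptyset}$, the same trick yields $\calU c_k^* \ket{\emptyset} = c_k^* \ket{\emptyset}$ for $k \in \{3,4\}$ since $\calU$ fixes $c_k^*$ by conjugation.

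The only non-routine step is the two-particle basis vector $\ket{1100} = c_2^* c_1^* \ket{\emptyset}$. Here I insert $\calU^*\calU$ twice to factor
\begin{equation*}
\calU c_2^* c_1^* \ket{\emptyset} = (\calU c_2^* \calU^*)(\calU c_1^* \calU^*) \ket{\emptyset} = (-\sin\theta \, c_1^* + \cos\theta \, c_2^*)(\cos\theta \, c_1^* + \sin\theta \, c_2^*) \ket{\emptyset}.
\end{equation*}
Expanding the product and invoking \eqref{e:CAR} — specifically $(c_i^*)^2 = 0$ and $c_2^* c_1^* = - c_1^* c_2^*$ — all diagonal terms vanish and the cross terms combine via $\sin^2\theta + \cos^2\theta = 1$ to produce $c_2^* c_1^* \ket{\emptyset} = \ket{1100}$. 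This is the one place where the CAR algebra does non-trivial work, and it gives the last diagonal $1$ in the matrix. Assembling these six images yields the claimed block structure on $\mathrm{Span}(\mathcal{B})$ and the identity on $\mathrm{Span}(\ket{0010},\ket{0001})$, completing the proof.
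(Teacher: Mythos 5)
Your strategy is the natural one for a lemma the paper leaves unproven (it is stated as "elementary"), and the mechanism is exactly right: since $\calU\ket{\emptyset}=\ket{\emptyset}$ implies $\calU^*\ket{\emptyset}=\ket{\emptyset}$, you can rewrite every image $\calU\,c_{i_1}^*\cdots c_{i_s}^*\ket{\emptyset}$ as $(\calU c_{i_1}^*\calU^*)\cdots(\calU c_{i_s}^*\calU^*)\ket{\emptyset}$, apply \eqref{eq:rotation} to each factor, and then reduce with \eqref{e:CAR}. The one non-trivial computation, $\calU c_2^*c_1^*\ket{\emptyset}=c_2^*c_1^*\ket{\emptyset}$ via $(c_i^*)^2=0$, $c_1^*c_2^*=-c_2^*c_1^*$, and $\sin^2\theta+\cos^2\theta=1$, is carried out correctly.

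One issue worth flagging, however: the matrix you would assemble from your own computations is the \emph{transpose} of \eqref{eq:RBS}. You correctly find $\calU\ket{1000}=\cos\theta\ket{1000}+\sin\theta\ket{0100}$ and $\calU\ket{0100}=-\sin\theta\ket{1000}+\cos\theta\ket{0100}$; under the standard convention (column $j$ of $\rho(\calU)_{\mathcal{B}\mathcal{B}}$ holds the coordinates of $\calU\ket{e_j}$, which is the convention the paper uses in \cref{lem:FBS} and its proof), these images give $-\sin\theta$ at row $\ket{1000}$, column $\ket{0100}$, whereas \eqref{eq:RBS} has $+\sin\theta$ there. Indeed, specializing \cref{lem:FBS} to $\calS=(1,2)$, $i=1$, $j=2$, $f=0$ reproduces exactly your images, not the matrix of \eqref{eq:RBS}. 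So either the stated matrix has a transposition typo, or the lemma tacitly switches to a row convention; in any event you should state explicitly the convention under which your six computed images are being arranged into a matrix, and note that this produces the transpose of \eqref{eq:RBS} under the column convention. Since $\theta$ is a free parameter and the transpose of a planar rotation is the rotation by $-\theta$, this has no downstream consequences, but the sign mismatch should not be left unremarked when you claim to recover the claimed matrix.
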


Roughly speaking, a RBS represents a Givens operator in the one-particle subspace.
Considering now a space of $n$ qubits, we denote the operator acting in the subspace of the $i$th and $j$th qubits as \cref{eq:RBS} by $\mathrm{RBS}_{ij}(\theta)$ with $1\leq i < j\leq n$.
This gate can be realized in Qiskit by a $\mathtt{XX+YY}$ gate.
\subsubsection{FBS gates \label{sec:FBS}}

In \cref{lem:FBS} below, we formalize a result by \citet{kerenidis2022quantum} which describes the Givens operator in a subspace with more than one particle.
%
% \begin{figure}
%   \centering
%   \begin{subfigure}[b]{0.45\textwidth}
%     \centering
%     \includegraphics[scale = 0.2,clip, trim=2.6cm 0cm 0cm 0cm]{fig/logdepthparitygate.png}
%     \caption{Log-depth parity gate.\label{fig:parity_gate}}
%   \end{subfigure}
%   %
%   \begin{subfigure}[b]{0.45\textwidth}
%     \centering
%     \includegraphics[scale = 0.2,clip, trim=2.8cm 10cm 0cm 0cm]{fig/FBS_gate.png}
%     \caption{FBS gate.}
%   \end{subfigure}
%   %
%   \caption{Left-hand side: logarithmic depth gate computing the parity of the qubits between qubit $0$ and qubit $8$. 
%   This parity is stored in qubit $1$.
%   Right-hand side: FBS gate between qubit $0$ and qubit $8$.\label{fig:FBS_gate}}
% \end{figure}
% 
\begin{lemma}[FBS Gate]\label{lem:FBS}
  Consider the setting of \cref{lem:RBS}.
  Let $1\leq i < j \leq n$ and denote by $\calU_{ij}(\theta)$  the operator defined by $\calU_{ij}(\theta)\ket{\emptyset} = \ket{\emptyset}$, 
  \begin{align}
    \begin{bmatrix}
      \calU_{ij}(\theta) \ c_{i}^* \ \calU_{ij}(\theta)^*\\
      \calU_{ij}(\theta)\ c_{j}^* \ \calU_{ij}(\theta)^*
    \end{bmatrix}
     =
     \begin{bmatrix}
      \cos \theta & \sin \theta\\
      -\sin \theta & \cos \theta 
     \end{bmatrix} 
     \begin{bmatrix}
      c_{i}^*\\
      c_{j}^*
    \end{bmatrix} \text{ and } \calU_{ij}(\theta) c_{k}^*\calU_{ij}(\theta)^* = c_{k}^*, \ \forall k \notin\{i,j\}. \label{eq:rotation_ij}
  \end{align}
  For any ordered subset $\calC$, we denote by $\ket{\calC}$ the state of the computational basis representing $\prod_{k\in \calC} c_k^* \ket{\emptyset}$ in the Jordan-Wigner representation.
  Let $\calS$ be an ordered subset of $\{1,\dots, n\}$ such that  $i,j\in \calS$.
  Denote by $\mathcal{B}$ the orthonormal basis 
  $$\left(\ket{\calS}, \ket{\calS\setminus i}, \ket{\calS \setminus j},\ket{\calS\setminus \{i,j\}}\right).$$
  The representation of $\calU$ restricted to this subspace is
  \begin{equation}
  \rho(\calU)_{\mathcal{B}\mathcal{B}}
  = 
  \kbordermatrix{
    &\ket{\calS} & \ket{\calS\setminus i} & \ket{\calS\setminus j} & \ket{\calS\setminus \{i,j\}}\\
    \ket{\calS} &1 & 0 & 0 & 0\\
    \ket{\calS\setminus i} &0 & \cos\theta & (-1)^{f(i,j,\calS)}\sin \theta & 0\\
    \ket{\calS\setminus j} &0 & (-1)^{1 +f(i,j,\calS)}\sin \theta & \cos \theta & 0\\
    \ket{\calS\setminus \{i,j\}}&0 & 0 & 0 & 1
  },\label{eq:FBS}
  \end{equation}
  where $f(i,j,\calS)= \sum_{i< k< j : k\in \calS} 1$. Furthermore, $\rho(\calU_{ij}(\theta)) \ket{\calS^\prime} = \ket{\calS^\prime}$ for all $\calS^\prime$ such that $i,j\notin \calS^\prime$.
  We denote the operator acting in the subspace of the $i$th and $j$th qubits as \cref{eq:FBS} by $\mathrm{FBS}_{ij}(\theta)$ with $1\leq i < j\leq n$.
\end{lemma}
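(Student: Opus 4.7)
The plan is to exploit the intertwining relations \cref{eq:rotation_ij} together with $\calU_{ij}(\theta)\ket{\emptyset}=\ket{\emptyset}$ to compute $\calU_{ij}(\theta)\ket{\calS'}$ directly for each basis state $\ket{\calS'}$ of the form $\prod_{k\in \calS'} c_k^* \ket{\emptyset}$. Writing $\calS$ in increasing order as $s_1 < \dots < s_m$ with $i = s_p$ and $j = s_q$ (so $p<q$), the key observation is that $f(i,j,\calS) = q - p - 1$ counts exactly the indices of $\calS$ lying strictly between $i$ and $j$, which is also the number of anticommutations needed to rearrange operators into increasing order after a substitution.

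First I would handle the trivial cases. For any $\calS'$ with $i,j \notin \calS'$, the last claim follows immediately from $\calU_{ij}(\theta) c_k^* \calU_{ij}(\theta)^* = c_k^*$ for all $k \notin \{i,j\}$ combined with $\calU_{ij}(\theta)\ket{\emptyset}=\ket{\emptyset}$; this also covers $\ket{\calS \setminus \{i,j\}}$. For $\ket{\calS}$, conjugating each $c_{s_\ell}^*$ in the product by $\calU_{ij}(\theta)$ yields all factors unchanged except that the pair $c_i^* \dots c_j^*$ at positions $p,q$ is replaced by $(\cos\theta c_i^* + \sin\theta c_j^*) \dots (-\sin\theta c_i^* + \cos\theta c_j^*)$. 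Using $(c_i^*)^2 = (c_j^*)^2 = 0$ and $c_i^* c_j^* = -c_j^* c_i^*$ from \eqref{e:CAR}, this pair simplifies to $c_i^* c_j^*$, so $\calU_{ij}(\theta)\ket{\calS}=\ket{\calS}$.

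Next I would address the nontrivial off-diagonal entries by treating $\ket{\calS\setminus j}$; the case $\ket{\calS\setminus i}$ is symmetric. Writing $\ket{\calS\setminus j} = c_{s_1}^* \dots c_i^* \dots \widehat{c_j^*} \dots c_{s_m}^* \ket{\emptyset}$ and substituting $c_i^* \mapsto \cos\theta c_i^* + \sin\theta c_j^*$, the $\cos\theta$ piece reproduces $\cos\theta\ket{\calS\setminus j}$ directly. The $\sin\theta$ piece places a $c_j^*$ at position $p$, but to identify the result with $\ket{\calS\setminus i}$ we must transport this $c_j^*$ past the $q - p - 1 = f(i,j,\calS)$ creation operators at positions $p+1,\dots,q-1$, each anticommutation contributing $-1$. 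This yields $(-1)^{f(i,j,\calS)} \sin\theta \ket{\calS\setminus i}$, matching the matrix entry. The analogous computation for $\ket{\calS\setminus i}$ uses $c_j^* \mapsto -\sin\theta c_i^* + \cos\theta c_j^*$ and produces an extra minus sign from the $-\sin\theta$, giving $(-1)^{1+f(i,j,\calS)}\sin\theta\ket{\calS\setminus j}$.

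The only real subtlety is sign bookkeeping, which I would make rigorous by stating once that for an ordered $\calS'$ and $k \notin \calS'$ with $t = |\{\ell \in \calS' : \ell < k\}|$, inserting $c_k^*$ at its increasing-order position from an arbitrary location requires passing it past the remaining operators and produces a determined sign; this reduces all the case analysis to counting elements between $i$ and $j$, which is $f(i,j,\calS)$ by definition. No other obstacle is anticipated since the defining relations \cref{eq:rotation_ij} and the invariance of the vacuum are enough to push $\calU_{ij}(\theta)$ past any product of creation operators.
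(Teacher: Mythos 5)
Your proof is correct and follows essentially the same approach as the paper's sketch: write the computational basis state as a product of creation operators, insert $\calU_{ij}(\theta)^*\calU_{ij}(\theta)$ between factors, use the intertwining relations \cref{eq:rotation_ij} together with the vacuum invariance, and count anticommutations when moving the substituted $c_i^*$ or $c_j^*$ past the $f(i,j,\calS)$ intermediate operators. The only difference is that you spell out all four basis vectors including the diagonal entries (where the cross terms vanish by nilpotency and the $(-1)^{f}$ signs cancel in pairs), whereas the paper's sketch computes only $\calU\ket{\calS\setminus i}$ and leaves the rest implicit.
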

The above matrix is called in \citep*{kerenidis2022quantum} a Fermionic Beam Splitter (FBS) gate.
The only difference with \cref{lem:RBS} is the possible presence of an extra minus sign which is due to the (fermionic) anticommutation relations and counts the parity of the number of particles between $i$ and $j$.
In our understanding, this is why this gate is called fermionic.

For completeness, we give a sketch of proof of \cref{lem:FBS}.
\begin{proof}[Sketch of proof of \cref{lem:FBS}]
  We compute $\calU \ket{\calS\setminus i}$.
  To begin, we observe that 
  \[
    \ket{\calS\setminus i} = \left(\prod_{k\in \calS : k<i} c^*_k\right) \left(\prod_{k\in \calS : i<k<j} c^*_k\right)  c_j^*  \left(\prod_{k\in \calS : k>j}c_k^*\right)\ket{\emptyset}.
  \]
  Now, by applying $\calU$, we find 
  \begin{align*}
      \calU\ket{\calS\setminus i} &= \left(\prod_{k\in \calS : k<i} c^*_k\right) \left(\prod_{k\in \calS : i<k<j} c^*_k\right) \calU c_j^* \calU^* \calU  \left(\prod_{k\in \calS : k>j}c_k^*\right)\ket{\emptyset}\\
      &= \left(\prod_{k\in \calS : k<i} c^*_k\right) \left(\prod_{k\in \calS : i<k<j} c^*_k\right) \left(-\sin \theta c_i^* + \cos\theta c_j^*\right)  \left(\prod_{k\in \calS : k>j}c_k^*\right)\ket{\emptyset}.
  \end{align*}
  Finally, the anticommutation relations on the first term on the right-hand side yields
  \begin{align*} 
    -\sin \theta\left(\prod_{k\in \calS : k<i} c^*_k\right) (-1)^{f(i,j,\calS)}c_i^* \left(\prod_{k\in \calS : i<k<j} c^*_k\right)  \left(\prod_{k\in \calS : k>j}c_k^*\right)\ket{\emptyset}.
  \end{align*}
  Thus, the desired result is
  \[
    \calU\ket{\calS\setminus i} = -(-1)^{f(i,j,\calS)} \sin \theta \ket{\calS\setminus j} + \cos \theta \ket{\calS\setminus i},
  \]
  where $f(i,j,\calS)$ is the cardinal of $\{ k\in \calS : i<k<j\}$.
\end{proof}
% More explicitly, the expression of the Clifford loader is
% \begin{align*}
%   \Cliff(\sfx) &= \cos \theta_1 \cos\theta_2 c_1^* + \cos \theta_1 \sin \theta_2 c_2^* + \sin \theta_1  \cos\theta_3 c_3^*+ \sin \theta_1 \sin \theta_3 c_4^*\\
%   & + \text{ Hermitian conjugate.}
% \end{align*}
Next,  in order to make this paper more self-contained, we slightly rephrase \citep[Proposition 2.6]{kerenidis2022quantum} which gives a decomposition of FBS gates.

An FBS gate between qubits $i$ and $j$ can be realized by combining RBS gates with gates computing the parity of the intermediate qubits.
Let $j > i+1$ to avoid trivial cases.
Denote by $\mathrm{CZ}_{i,j}$ the controlled-$\sigma_z$ gate between $i$ and $j$.
Let $P_{i+1,j}$ be the \emph{parity gate} computing the parity of the qubits between $i$ and $j$ and storing it in qubit $i+1$; see \cref{fig:parity_gate} for an illustration.
This gate is constructed with controlled-$\sigma_x$ gates and has log-depth.
For all $1\leq i < j \leq n$, we have
$$
  \mathrm{FBS}_{ij}(\theta) = P_{i+1,j}\mathrm{CZ}_{i,i+1}\mathrm{RBS}_{ij}(\theta)\mathrm{CZ}_{i,i+1} P_{i+1,j}^*.
$$
%%

% \begin{figure}[h!]
%   %
%     \begin{tikzpicture}[scale = 0.5]
%       \draw (0, 0) node {$
%       \begin{bmatrix}
%         0 \\
%         0 \\
%         0 \\
%         0 \\ 
%         \frac{1}{\sqrt{2}}\\
%         0\\
%         \frac{-1}{\sqrt{2}}
%       \end{bmatrix}$};
%       %
%       \draw (3, -1.5) node {$
%       \begin{bmatrix}

%         \frac{1}{\sqrt{2}} & 0 & \frac{1}{\sqrt{2}}\\
%         0 & 1 & 0\\
%         \frac{-1}{\sqrt{2}} & 0 \frac{1}{\sqrt{2}}
%       \end{bmatrix}$};
%     \end{tikzpicture}

%   %
%   \caption{.}
% \end{figure}
%
\section{Useful technical results \label{sec:tech}}
% \subsection{Explicit expressions in terms of pfaffians \label{app:explicit_r}}

%
The following proposition is a consequence of Wick's theorem.
\begin{proposition}\label{prop:overlap}
  Let $k$ be an even integer and let $\sfX \in \mathbb{R}^{n\times k}$ have linearly independent columns of unit $2$-norm.
  If $k$ is even, we have
  $$
  \braket{0}{\columns(\sfX)} = \pf\left(\skew(\sfX^\top \sfX)\right).
  $$
  In particular, if $\sfX \in \mathbb{R}^{n\times r}$ and $\sfX^\prime \in \mathbb{R}^{n\times r^\prime}$ have linearly independent columns of unit $2$-norm and are such that $\text{parity}(r) = \text{parity}(r^\prime)$, we have
  $$
  \braket{\columns(\sfX^\prime)}{\columns(\sfX)} = \pf \begin{bmatrix}
    \skew(\sfX^\top \sfX) & \sfX^\top \sfX^\prime\\
    -\sfX^{\prime\top} \sfX & \skew(\sfX^{\prime\top} \sfX^\prime).
  \end{bmatrix}
  $$
  Otherwise, if $\text{parity}(r) \neq \text{parity}(r^\prime)$, it holds that $\braket{\columns(\sfX^\prime)}{\columns(\sfX)} =0$.
\end{proposition}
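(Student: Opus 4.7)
The approach is to apply Wick's theorem for fermions to the vacuum expectation of a product of Clifford loaders. Recall that for any operators $A_1, \ldots, A_{2m}$, each a real linear combination of the $c_i$'s and $c_j^*$'s,
\begin{equation*}
\bra{\emptyset} A_1 \cdots A_{2m} \ket{\emptyset} = \pf(C), \qquad C_{ij} = \bra{\emptyset} A_i A_j \ket{\emptyset} \ \text{for } i<j,
\end{equation*}
with $C_{ji}=-C_{ij}$ and zero diagonal, while for an odd number of such factors the expectation vanishes. The odd case follows from the parity operator $P=(-1)^{N_1+\cdots+N_n}$: it fixes $\ket{\emptyset}$ and anticommutes with each $c_i+c_i^*$, so sandwiching an odd product between two $\ket{\emptyset}$'s gives the negative of itself.

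For the first statement, I would set $A_i=\Cliff(\sfX_{:i})$ and compute the pairwise contractions. Expanding $(c_a+c_a^*)(c_b+c_b^*)$ and using $\bra{\emptyset}c_a\ket{\emptyset}=\bra{\emptyset}c_a^*\ket{\emptyset}=0$ together with $\bra{\emptyset}c_a c_b^*\ket{\emptyset}=\delta_{ab}$ from \cref{e:CAR}, one obtains $\bra{\emptyset}\Cliff(\sfx)\Cliff(\sfy)\ket{\emptyset}=\sfx^\top \sfy$. Hence $C_{ij}=\sfX_{:i}^\top\sfX_{:j}$ for $i<j$, and $C=\skew(\sfX^\top \sfX)$, since both matrices are skewsymmetric with zero diagonal and agree in their strictly upper-triangular part. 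Wick's theorem then yields $\braket{\emptyset}{\columns(\sfX)}=\pf(\skew(\sfX^\top\sfX))$.

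For the second statement, Hermiticity of the Clifford loaders gives
\begin{equation*}
\braket{\columns(\sfX')}{\columns(\sfX)} = \bra{\emptyset} \Cliff(\sfX'_{:r'}) \cdots \Cliff(\sfX'_{:1}) \Cliff(\sfX_{:1}) \cdots \Cliff(\sfX_{:r}) \ket{\emptyset}.
\end{equation*}
When $\parity(r)\neq \parity(r')$, the total count $r+r'$ is odd and the expectation vanishes by the parity argument above. Otherwise, Wick's theorem applied to this product of $r+r'$ operators produces a Pfaffian of the corresponding contraction matrix, whose entries are precisely the dot products needed to populate the four blocks $\skew(\sfX^\top \sfX)$, $\skew(\sfX'^\top \sfX')$, $\sfX^\top\sfX'$, and $-\sfX'^\top\sfX$ in the statement.

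The main step requiring care is to reconcile the ordering of the $r+r'$ operators as they appear in the product (the columns of $\sfX'$ coming first and in reversed order, because of the dagger) with the natural block ordering of the statement. This can be handled by a simultaneous row-column permutation $\pi$ of the contraction matrix, under which $\pf(P_\pi A P_\pi^\top)=\sign(\pi)\pf(A)$; tracking the sign contributions from the reversal of the first $r'$ positions and from the swap of the two blocks, one recovers the claimed block matrix identity.
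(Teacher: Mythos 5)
Your approach---Wick's theorem, the contraction $\bra{\emptyset}\Cliff(\sfx)\Cliff(\sfy)\ket{\emptyset}=\sfx^\top\sfy$, Hermiticity of the loaders, and the parity argument via $(-1)^{N_1+\cdots+N_n}$---matches the paper's one-line proof, and correctly establishes the first identity as well as the vanishing when $\parity(r)\neq\parity(r')$.

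The permutation step at the end, however, has a genuine gap. The Wick matrix $C$ for the product $\bra{\emptyset}\Cliff(\sfX'_{:r'})\cdots\Cliff(\sfX'_{:1})\Cliff(\sfX_{:1})\cdots\Cliff(\sfX_{:r})\ket{\emptyset}$ is obtained by \emph{skewsymmetrizing} the symmetric contraction array, i.e.\ $C_{ij}=\bra{\emptyset}A_iA_j\ket{\emptyset}$ for $i<j$ and $C_{ij}=-C_{ji}$ otherwise, and the $\skew(\cdot)$ operation does not commute with permutation conjugation. Concretely, if $a<b$ but $\pi(a)>\pi(b)$, then $(P_\pi C P_\pi^\top)_{ab}=-\bra{\emptyset}A_{\pi(a)}A_{\pi(b)}\ket{\emptyset}$, which is the opposite sign of the Wick entry one would write for the permuted ordering (the contraction array being symmetric, not skewsymmetric). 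Hence $P_\pi C P_\pi^\top$ is \emph{not} the stated block matrix $M$ for any $\pi$, and the identity $\pf(P_\pi A P_\pi^\top)=\sign(\pi)\pf(A)$ cannot be used to bridge $\pf(C)$ and $\pf(M)$. What Wick's theorem actually hands you is the Pfaffian of a matrix whose block structure mirrors the order of the factors---the $\sfX'$-block first, with its columns reversed---exactly as the paper writes out the contraction matrix in the proof of \cref{thm:pmf_DPP} via \cref{sec:expression_psi}. Working out the $r=r'=2$ case symbolically shows that $\pf(C)-\pf(M)=2\det(\sfX^\top\sfX')$, so the two quantities can differ by more than an overall sign, and no amount of sign-tracking from a permutation alone will close the gap; this final reconciliation step needs a separate argument (or a corrected block ordering in the target identity).
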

\begin{proof}
  We use Wick's theorem -- see e.g. Theorem 3 in \citep{bardenet2022point}  and references therein -- and the contraction $\bra{\emptyset}\Cliff(\sfx)\Cliff(\sfx^\prime)\ket{\emptyset}= \sfx^\top \sfx^\prime$  as a consequence of \cref{eq:Anticommutation_Cs}.
\end{proof}
As a consequence of \cref{prop:overlap}, if the columns of $\sfX$ are also orthogonal and $r^\prime = r$, we recover that $\braket{\columns(\sfX^\prime)}{\columns(\sfX)} \propto \det(\sfX^\top \sfX^\prime)$ by using e.g.\ \cref{lem:pf_saddle}.
The latter determinant can be expressed in terms of the product of cosines of principal angles between the vector spaces $\text{col}(\sfX)$ and $\text{col}(\sfX^\prime)$; see \citep[page 10]{kerenidis2022quantum}.
% %
\subsection{Expression of $\ket{\columns(\sfX)}$ \label{sec:expression_psi}}
To determine the expression of the coefficient in front of the basis vector $c_{s_\ell}^* \dots c_{s_1}^*\ket{\emptyset}$ in the decomposition \cref{eq:expression_colX} of $\ket{\columns(\sfX)}$, we use the version of Wick's Theorem in \citep[Section 3.5]{bardenet2022point} with respect to the Fock vacuum $\ket{\emptyset}$.
A particular case is the following: consider linear combinations of the $c_i$'s and $c_i^*$'s (defined in \cref{e:CAR}) that we denote $\beta_1, \dots, \beta_m$ where $m$ is even.
Then, we have
\begin{align}
    \bra{\emptyset}\beta_1\dots\beta_m \ket{\emptyset}= \sum_{\sigma \text{ contraction}} \mathrm{sgn}(\sigma)  \bra{\emptyset}\beta_{\sigma(1)}\beta_{\sigma(2)}\ket{\emptyset}\dots  \bra{\emptyset}\beta_{\sigma(m-1)}\beta_{\sigma(m)}\ket{\emptyset}. \label{eq:Wick}
\end{align}
We quickly remind the definition of a contraction. 
For $m$ even, we remind a contraction is a permutation such that $\sigma(1)< \sigma(3) < ... < \sigma(m - 1)$, and $\sigma(2i - 1) < \sigma(2i)$ for $i \leq m/2$.
The key point to notice is that the Pfaffian of a skewsymmetric matrix $A \in \mathbb{C}^{2k\times 2k}$ reads
\begin{align}
  \pf(A) = \sum_{\sigma \text{ contraction}} \mathrm{sgn}(\sigma) A_{\sigma(1)\sigma(2)}\dots A_{\sigma(2k-1) \sigma(2k)}. \label{eq:pf}
\end{align}
At this point, we use the following identities: 
\begin{equation}
  \bra{\emptyset}\Cliff(\sfx)\Cliff(\sfx^\prime)\ket{\emptyset}= \sfx^\top \sfx^\prime \text{ and } \bra{\emptyset}\Cliff(\sfx)c_s^*\ket{\emptyset}= \sfx_s. \label{eq:trivial_contractions}
\end{equation}
Trivially, $\bra{\emptyset}c_s c_{s^\prime}\ket{\emptyset}= 0$.
Now, we fix the ordered subset $\calS = (s_1, \dots, s_\ell)$ and note that $\prod_{i\in \rev(\calS)}c_i^*\ket{\emptyset}= c^*_{s_\ell} \dots c^*_{s_1}\ket{\emptyset}$.
We consider the expression of
\begin{align}
  \bra{\emptyset} c_{s_1} \dots c_{s_\ell}\Cliff(\sfX_{:1}) \dots \Cliff(\sfX_{:r})  \ket{\emptyset},\label{eq:expression_contraction}
\end{align}
in the light of \cref{eq:Wick}.

We define now a skewsymmetric matrix as follows.
For $i < j$, we have
\[
  A_{ij} = \begin{cases}
    \bra{\emptyset}c_{s_{i}} c_{s_{j}}\ket{\emptyset} & \text{ if } 1 \leq i<j \leq \ell\\
    \bra{\emptyset}c_{s_{i}} \Cliff(\sfX_{:(j-\ell)})\ket{\emptyset} & \text{ if } 1 \leq i \leq \ell \text{ and }  1 \leq j - \ell \leq r\\
    \bra{\emptyset}\Cliff(\sfX_{:(i-\ell)}) \Cliff(\sfX_{:(j-\ell)})\ket{\emptyset} & \text{ if }  1 \leq i - \ell < j - \ell \leq r
  \end{cases}.
\]
By using \cref{eq:trivial_contractions} and completing the lower triangular part of $A$ to ensure its skewsymmetry, we find
\begin{align*}
  A = \begin{bmatrix}
    0_{\ell} & \sfX_{\calS : }\\
    - (\sfX^\top)_{:\calS} & \skew(\sfX^\top\sfX)
  \end{bmatrix}.
\end{align*}
And thus,
$
  \bra{\emptyset} c_{s_1} \dots c_{s_\ell}\Cliff(\sfX_{:1}) \dots \Cliff(\sfX_{:r})  \ket{\emptyset}=\pf A
$.
This completes the proof.
\subsection{Pfaffian identities}
\begin{lemma}
  \label{lem:pf_saddle}
  Let $S,X$ be $r\times r$ real matrices with $S^\top  = - S$.
  We have 
  \[
  \pf
  \begin{bmatrix}
    0_r & X\\
    -X^\top & S
  \end{bmatrix}
  =
  (-1)^{r(r-1)/2}
  \det(X).
  \]
\end{lemma}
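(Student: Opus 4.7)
The plan is to reduce the identity to the case $X = I_r$ and $S = 0$ by two congruence transformations, and then compute the Pfaffian of the resulting block anti-diagonal matrix by induction.

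First, by continuity / polynomial identity, we may assume $X$ is invertible; both sides are polynomial in the entries of $X$ and $S$, and the invertible $X$'s form a dense open set. Then, I would apply the transformation rule $\pf(B M B^\top) = \det(B) \pf(M)$ with $B = \mathrm{Diag}(X^{-1}, I_r)$ to the left-hand side: a direct block computation gives
\begin{equation*}
  B\begin{bmatrix} 0_r & X\\ -X^\top & S\end{bmatrix}B^\top = \begin{bmatrix} 0_r & I_r\\ -I_r & S\end{bmatrix},
\end{equation*}
so that
$\pf\left[\begin{smallmatrix} 0_r & X\\ -X^\top & S\end{smallmatrix}\right] = \det(X)\,\pf\left[\begin{smallmatrix} 0_r & I_r\\ -I_r & S\end{smallmatrix}\right]$.

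Next, to kill $S$, I would apply $B' = \left[\begin{smallmatrix} I_r & 0\\ -S/2 & I_r\end{smallmatrix}\right]$, which has $\det(B') = 1$. Using $(B')^\top = \left[\begin{smallmatrix} I_r & S/2\\ 0 & I_r\end{smallmatrix}\right]$ (because $S^\top = -S$), a short block computation gives
\begin{equation*}
  B'\begin{bmatrix} 0_r & I_r\\ -I_r & S\end{bmatrix}(B')^\top = \begin{bmatrix} 0_r & I_r\\ -I_r & 0_r\end{bmatrix} \eqqcolon M_0,
\end{equation*}
so $\pf\left[\begin{smallmatrix} 0_r & I_r\\ -I_r & S\end{smallmatrix}\right] = \pf(M_0)$.

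It then remains to show $\pf(M_0) = (-1)^{r(r-1)/2}$. I would expand the Pfaffian along the first row of $M_0$. The only nonzero entry in that row is the $1$ in column $r+1$, which yields the recursion $\pf(M_0^{(r)}) = (-1)^{r+1} \pf(M_0^{(r-1)})$, starting from $\pf(M_0^{(1)}) = 1$. Summing the exponents gives $(-1)^{\sum_{k=2}^{r}(k+1)}$, whose parity matches $r(r-1)/2$ modulo $2$ (the difference is $2r$). The only delicate point is bookkeeping the sign in this recursion, but it is a routine check; the congruence reductions themselves are clean.
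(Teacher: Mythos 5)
Your proof is correct, but it takes a genuinely different route from the paper's. The paper writes the target matrix in one shot as a congruence
\begin{equation*}
\begin{bmatrix}
  0_r & X\\
  -X^\top & S
\end{bmatrix}
= T^\top
\begin{bmatrix}
  0_r & I_r\\
  -I_r & 0_r
\end{bmatrix}
T,
\qquad
T = \begin{bmatrix} 0_r & I_r\\ -X^\top  & S/2 \end{bmatrix},
\end{equation*}
then applies $\pf(T^\top M T) = \det(T)\pf(M)$ once and computes $\det(T) = \det(X)$ via the commuting-blocks determinant formula. You instead chain two congruence steps: first $\mathrm{Diag}(X^{-1}, I_r)$ to normalize the off-diagonal block, then a unit lower-triangular factor to clear $S$. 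The trade-offs are real: the paper's single factorization works for arbitrary $X$ (no invertibility needed) and is more compact, but it requires guessing the right $T$; your version is more mechanical and would occur to someone working from scratch, at the cost of an extra polynomial-density argument to pass from invertible to general $X$. You also prove $\pf\bigl[\begin{smallmatrix} 0_r & I_r \\ -I_r & 0_r \end{smallmatrix}\bigr] = (-1)^{r(r-1)/2}$ by an explicit Pfaffian row expansion, whereas the paper simply reads the sign off the unique contributing permutation; your route is more self-contained. One small bookkeeping slip: the difference between $\sum_{k=2}^{r}(k+1)$ and $r(r-1)/2$ is $2(r-1)$, not $2r$, though either way the parity argument goes through.
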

\begin{proof}
We readily check the identity
$$
\begin{bmatrix}
  0_r & X\\
  -X^\top & S
\end{bmatrix}=   
\begin{bmatrix}
  0_r & -X\\
  I_r  & -S/2
\end{bmatrix}
\begin{bmatrix}
  0_r & I_r\\
  -I_r & 0_r
\end{bmatrix}
\begin{bmatrix}
  0_r & I_r\\
  -X^\top  & S/2
\end{bmatrix}.
$$
Now, we use the well-known formula $\pf (T^\top S T) = \det(T) \pf(S)$ where $S$ is skewsymmetric, which gives
$$
\pf
\begin{bmatrix}
  0_r & X\\
  -X^\top & S
\end{bmatrix}
=
\det
\begin{bmatrix}
  0_r & -X\\
  {I_r}  & -S/2
\end{bmatrix}
\pf
\begin{bmatrix}
  0_r & {I_r}\\
  -{I_r} & 0_r
\end{bmatrix}.
$$
Note that, by computing the signature of the appropriate permutation, we have
$$
\pf
\begin{bmatrix}
  0_r & I_r\\
  -I_r & 0_r
\end{bmatrix}
=
(-1)^{r(r-1)/2}.
$$
The last step is to use 
$$
\det
\begin{bmatrix}
  A & B\\
  C  & D
\end{bmatrix}
=
\det(AD - BC)
$$
where $A,B,C,D$ has the same size and $C,D$ commute. 
Since the identity matrix trivially commutes with $-S/2$, we obtain
$$
\pf
\begin{bmatrix}
  0_r & X\\
  -X^\top & S
\end{bmatrix}
=
(-1)^{r(r-1)/2}
\det
\begin{bmatrix}
  0_r & -X\\
  I_r  & -S/2
\end{bmatrix}
= (-1)^{r(r-1)/2}
\det(X).
$$
\end{proof} 

\subsection{More about the correlation kernel}
Though it is guaranteed by construction, we explicitly prove that the correlation kernel of this DPP has non-negative minors.
\begin{lemma}\label{lem:K_is_valid}
  Let the assumptions of \cref{thm:pmf_DPP_A} hold. Let
\begin{equation*}
  K = X \left(A + X^\top X\right)^{-1}X^\top.
\end{equation*}
The matrix $K$ is such that $ v^\top K v \geq 0$ for all unit norm $v\in \mathbb{R}^n$.
\end{lemma}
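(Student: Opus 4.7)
}
The plan is to reduce $v^\top K v$ to a quadratic form in a simpler vector and exploit the fact that the skewsymmetric part of $A + X^\top X$ contributes nothing to symmetric quadratic forms. Write $M = A + X^\top X$. My first step would be to check that $M$ is invertible: for any nonzero $w \in \mathbb{R}^r$, one has $w^\top M w = w^\top A w + w^\top X^\top X w = \|Xw\|_2^2 > 0$ because $A$ is skewsymmetric and $X$ has linearly independent columns, so $Mw = 0$ would force $w = 0$.

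Next, fix $v \in \mathbb{R}^n$ and set $u = X^\top v \in \mathbb{R}^r$ and $y = M^{-1} u$, so that $u = M y$. Then
\begin{equation*}
  v^\top K v \;=\; u^\top M^{-1} u \;=\; u^\top y \;=\; (My)^\top y \;=\; y^\top M^\top y.
\end{equation*}
Since $y^\top M^\top y$ is a scalar, it equals its own transpose $y^\top M y$. Splitting $M = A + X^\top X$ and using $y^\top A y = 0$ (skewsymmetry), this gives
\begin{equation*}
  v^\top K v \;=\; y^\top(A + X^\top X) y \;=\; \|Xy\|_2^2 \;\geq\; 0,
\end{equation*}
which is the desired conclusion.

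There is no serious obstacle here; the only subtle point is the passage $y^\top M^\top y = y^\top M y$, which uses nothing beyond the fact that a real scalar equals its transpose. One could equivalently observe that $K + K^\top = X\bigl(M^{-1} + M^{-\top}\bigr)X^\top$ and verify that $M^{-1} + M^{-\top} = 2\,M^{-1}(X^\top X)M^{-\top}$ is symmetric positive semidefinite, but the route through $u$ and $y$ above is the shortest.
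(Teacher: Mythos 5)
Your proof is correct, and it takes a genuinely different and more elementary route than the paper's. The paper works with the symmetric part $K + K^\top$ and passes through the matrix square root $(X^\top X)^{-1/2}$, writing $(A + X^\top X)^{-1} = (X^\top X)^{-1/2}(S + I_r)^{-1}(X^\top X)^{-1/2}$ with $S$ skewsymmetric, and then establishes the algebraic identity $(S + I_r)^{-1} + (-S + I_r)^{-1} = 2(S + I_r)^{-1}(S + I_r)^{-\top}$ to exhibit $(K + K^\top)/2$ as a Gram-type matrix. You instead substitute $y = (A + X^\top X)^{-1}X^\top v$ and use the scalar identity $y^\top M^\top y = y^\top M y$ together with $y^\top A y = 0$ to land directly on $v^\top K v = \|Xy\|_2^2$. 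Both arguments exploit the same fact (skewsymmetry kills the quadratic form), but yours avoids matrix square roots and block-matrix algebra entirely, and also supplies the invertibility of $M$ en route, which the paper asserts implicitly via the spectral decomposition. The paper's detour through $(X^\top X)^{-1/2}$ does buy something elsewhere: it is the same change of variables used in \cref{lem:spectral_decomposition} and \cref{thm:pmf_DPP_A}, so the proof recycles machinery already in place. For a standalone proof of this lemma, however, your argument is shorter and cleaner.
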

\begin{proof}
  We actually show that $v^\top (K+K^\top) v \geq 0$.
  It is sufficient to show that $(A + X^\top X)^{-1}$ is positive semi-definite.
  We rather consider  
  \begin{align*}
      (A + X^\top X)^{-1} = (X^\top X)^{-1/2}(S + {I_r})^{-1}(X^\top X)^{-1/2},
  \end{align*}
  where $S = (X^\top X)^{-1/2} A  (X^\top X)^{-1/2}$ is skewsymmetric.
  Now, adding its transpose to $(S + {I_r})^{-1}$, we have
  \begin{align*}
    (S + {I_r})^{-1} + (-S + {I_r})^{-1} &= (-S + {I_r} + S + {I_r})(S + {I_r})^{-1}(-S + {I_r})^{-1}\\
    &= 2(S + {I_r})^{-1}(-S + {I_r})^{-1}\\
    & =2 (S + {I_r})^{-1}(S + {I_r})^{\top -1}
  \end{align*}
  Hence, by a direct substitution, we find that 
  \begin{align*}
     (K + K^\top)/2 & =  X (X^\top X)^{-1/2} (S + {I_r})^{-1}(S + {I_r})^{\top -1} (X^\top X)^{-1/2} X^\top 
  \end{align*}
  is manifestly positive semidefinite.
\end{proof}
We now explain how \cref{lem:K_is_valid} yields a condition of the principal minors of $K$. 
Let us use the following lemma.
\begin{lemma}[Lemma 1 in \citep{gartrell2019learning}]\label{lem:minors}
  Let $A$ be a square real matrix. If $A + A^\top$ is positive semidefinite, all principal minors of $A$ are non-negative. 
\end{lemma}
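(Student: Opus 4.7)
The plan is to reduce the claim to a single determinant nonnegativity statement and then handle it via a symmetric/skew splitting. First, observe that for any index set $I\subseteq\{1,\dots,n\}$, the principal submatrix $A_{II}$ satisfies $A_{II}+A_{II}^\top=(A+A^\top)_{II}$, which is itself positive semidefinite because principal submatrices of PSD matrices are PSD. Therefore it suffices to show the \emph{global} statement: if $B$ is any real square matrix with $B+B^\top\succeq 0$, then $\det B\geq 0$. I would state this as a small reduction at the outset and then focus all the work on the single determinant.

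Next, write $B=S+K$ with $S=(B+B^\top)/2$ symmetric PSD and $K=(B-B^\top)/2$ skew\-symmetric. I first treat the nondegenerate case where $S\succ 0$. Then $S^{1/2}$ is invertible and
\begin{equation*}
  B = S^{1/2}\bigl(I + S^{-1/2}KS^{-1/2}\bigr)S^{1/2},
\end{equation*}
where $\tilde K := S^{-1/2}KS^{-1/2}$ is real skew\-symmetric. The eigenvalues of a real skew\-symmetric matrix come in conjugate pairs $\pm i\mu_j$ (with possibly one additional zero eigenvalue when the size is odd), so the eigenvalues of $I+\tilde K$ are $1\pm i\mu_j$ (and possibly $1$). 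Hence
\begin{equation*}
  \det(I+\tilde K) \;=\; \prod_j (1+\mu_j^2) \;\geq\; 1 > 0,
\end{equation*}
and consequently $\det B = \det(S)\det(I+\tilde K) \geq 0$, with strict positivity in this case.

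For the general PSD case, I would use a continuity/perturbation argument. For any $\epsilon>0$ the matrix $B_\epsilon := B+\epsilon I$ satisfies $B_\epsilon+B_\epsilon^\top = (B+B^\top)+2\epsilon I \succ 0$, so its symmetric part is strictly positive definite. The previous step yields $\det B_\epsilon\geq 0$. Since $\epsilon\mapsto\det B_\epsilon$ is a polynomial in $\epsilon$, letting $\epsilon\to 0^+$ gives $\det B\geq 0$. Combining this with the reduction to principal submatrices from the first paragraph concludes the proof.

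The steps are all elementary, so I do not anticipate a true obstacle; the only subtle point is the odd\-dimensional skew\-symmetric case, where one must remember that the spectrum consists of conjugate pairs together with a guaranteed zero eigenvalue, so the pairing argument still yields a product of nonnegative reals. The perturbation step is clean because $S+\epsilon I$ is automatically PD and $\det$ is a continuous (in fact polynomial) function of the entries.
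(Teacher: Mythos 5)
Your proof is correct. The paper's proof is essentially a pointer: it cites a generating method from \citep{tsatsomeros2002generating} for the strictly positive definite case and a density argument (attributed to the supplementary material of \citep{gartrell2019learning}) to pass to the semidefinite case. Your argument has exactly the same two-stage structure (strict case, then perturbation), but you actually supply a clean, self-contained derivation of the strict case via the reduction to a single determinant, the symmetric/skew splitting $B = S + K$, the similarity $B = S^{1/2}(I + \tilde K)S^{1/2}$, and the observation that $\det(I+\tilde K) = \prod_j(1+\mu_j^2) > 0$ by pairing conjugate eigenvalues of the skewsymmetric $\tilde K$. The up-front reduction to the global determinant (using that principal submatrices of PSD matrices are PSD) is a nice simplification that the paper's proof leaves implicit. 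One cosmetic point: in the strict PD case your argument in fact yields $\det B_\epsilon > 0$, not merely $\geq 0$; stating the strict inequality would match the claim attributed to the cited reference, though it makes no difference once you take $\epsilon \to 0^+$.
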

\begin{proof}
This follows from  \citep[generating method 4.2]{tsatsomeros2002generating} which shows that if a real square matrix $A$ is such that $A + A^\top$ is strictly positive definite, then all its principal minors are strictly positive. 
The desired result for the positive semidefinite case follows by using a density argument; see supplementary material of \citep{gartrell2019learning} for a complete proof. 
\end{proof}

\printbibliography

@inproceedings{viennot2006heaps,
  title="{Heaps of Pieces, I: Basic Definitions and Combinatorial Lemmas}",
  author={Viennot, G. X.},
  booktitle="{Combinatoire {\'e}num{\'e}rative: Proceedings of the “Colloque de combinatoire {\'e}num{\'e}rative”, held at Universit{\'e} du Qu{\'e}bec {\`a} Montr{\'e}al, May 28--June 1, 1985}",
  pages={321--350},
  year={2006},
  organization={Springer},
  url={https://doi.org/10.1007/BFb0072524}
}

@article{baez2002octonions,
  title="{The Octonions}",
  author={Baez, J.},
  journal="{Bulletin of the American Mathematical Society}",
  volume={39},
  number={2},
  pages={145--205},
  year={2002},
  url={https://doi.org/10.48550/arXiv.math/0105155}
}

@article{kenyon2025multideterminantal,
  title="{Multideterminantal Measures}",
  author={Kenyon, R.},
  journal={arXiv:2501.18349},
  url={https://doi.org/10.48550/arXiv.2501.18349},
  year={2025}
}

@inproceedings{KS2018,
  title="{A Matrix Chernoff Bound for Strongly Rayleigh Distributions and Spectral Sparsifiers from a Few Random Spanning Trees}",
  author={Kyng, R. and Song, Z.},
  booktitle={2018 IEEE 59th Annual Symposium on Foundations of Computer Science (FOCS)},
  pages={373--384},
  year={2018},
  organization={IEEE},
  url={https://doi.org/10.48550/arXiv.1810.08345},
}

@article{anari2024optimal,
  title="{Optimal Sublinear Sampling of Spanning Trees and Determinantal Point Processes via Average-Case Entropic Independence}",
  author={Anari, N. and Liu, Y.P. and Vuong, T.-D.},
  journal={SIAM Journal on Computing},
  pages={FOCS22--93},
  year={2024},
  publisher={SIAM},
  url={https://doi.org/10.1109/FOCS54457.2022.00019}
}

@article{jerrum2024fundamentals,
  title="{Fundamentals of Partial Rejection Sampling}",
  author={Jerrum, M.},
  journal={Probability Surveys},
  volume={21},
  pages={171--199},
  year={2024},
  url={https://doi.org/10.1214/24-PS29}
}

@article{fanuel2024number,
  title="{On the Number of Steps of CyclePopping in Weakly Inconsistent U(1)-Connection Graphs}",
  author={Fanuel, M. and Bardenet, R.},
  journal={arXiv:2404.14803},
  url={https://doi.org/10.48550/arXiv.2404.14803},
  year={2024}
}

@inproceedings{BaGhLi21,
	author = {Bardenet, R. and Ghosh, S. and Lin, M.},
	booktitle = {Advances in Neural Information Processing Systems (NeurIPS)},
	title = "{Determinantal Point Processes Based on Orthogonal Polynomials for Sampling Minibatches in {SGD}}",
  url={https://doi.org/10.48550/arXiv.2112.06007},
	year = {2021}}

@article{apers2022quantum,
  title="{Quantum Speedup for Graph sparsification, Cut Approximation, and Laplacian Solving}",
  author={Apers, S. and De Wolf, R.},
  journal={SIAM Journal on Computing},
  volume={51},
  number={6},
  pages={1703--1742},
  year={2022},
  publisher={SIAM},
  url={https://doi.org/10.1137/21M1391018}
}

@article{guo2019uniform,
  title="{Uniform Sampling Through the Lov{\'a}sz Local Lemma}",
  author={Guo, H. and Jerrum, M. and Liu, J.},
  journal={Journal of the ACM (JACM)},
  volume={66},
  number={3},
  pages={1--31},
  year={2019},
  publisher={ACM New York, NY, USA},
  url={https://doi.org/10.1145/3310131}
}

@phdthesis{Mac72,
	author = {Macchi, O.},
	date-added = {2023-01-10 17:31:01 +0100},
	date-modified = {2023-01-10 17:31:01 +0100},
	school = {Universit\'e Paris-Sud},
	title = "{Processus Ponctuels et Coincidences -- Contributions {\`a} l'{\'E}tude Th{\'e}orique des Processus Ponctuels, avec Applications {\`a} l'Optique Statistique et aux Communications Optiques}",
	year = {1972}}

@article{DeMa21,
	author = {Derezi\'{n}ski, M. and Mahoney, M. W.},
	date-added = {2021-01-27 22:06:59 +0100},
	date-modified = {2021-01-27 22:07:14 +0100},
	journal = {Notices of the American Mathematical Society},
	number = {1},
	title = "{Determinantal Point Processes in Randomized Numerical Linear Algebra}",
	volume = {68},
  url={https://doi.org/10.1090/noti2202},
	year = {2021}}

@inproceedings{DRVW06,
	author = {Deshpande, A. and Rademacher, L. and Vempala, S. and Wang, G.},
	booktitle = {Proceedings of the Seventeenth Annual ACM-SIAM Symposium on Discrete Algorithm},
	publisher = {Society for Industrial and Applied Mathematics},
	series = {SODA '06},
	title = "{Matrix Approximation and Projective Clustering via Volume Sampling}",
  url={http://dx.doi.org/10.4086/toc.2006.v002a012},
	year = {2006}}

@article{BeBaCh20b,
	author = {Belhadji, A. and Bardenet, R. and Chainais, P.},
	date-added = {2019-01-27 15:22:42 +0100},
	date-modified = {2019-01-27 15:26:33 +0100},
	journal = {Journal of Machine Learning Research (JMLR)},
	title = "{A Determinantal Point Process for Column Subset Selection}",
  url     = {http://jmlr.org/papers/v21/19-080.html},
	year = {2020}}

@inproceedings{Wilson96,
author = {Wilson, D.B.},
title = "{Generating Random Spanning Trees More Quickly than the Cover Time}",
year = {1996},
publisher = {Association for Computing Machinery},
url = {https://doi.org/10.1145/237814.237880},
booktitle = {Proceedings of the Twenty-Eighth Annual ACM Symposium on Theory of Computing},
pages = {296-303},
}

@article{boutsidis2017randomized,
  title="{A Randomized Algorithm for Approximating the log Determinant of a Symmetric Positive Definite Matrix}",
  author={Boutsidis, C. and Drineas, P. and Kambadur, P. and Kontopoulou, E.-M. and Zouzias, A.},
  journal={Linear Algebra and its Applications},
  volume={533},
  pages={95--117},
  year={2017},
  publisher={Elsevier},
  url={https://doi.org/10.1016/j.laa.2017.07.004}
}

@InProceedings{BTA22,
  title = 	 "{A Faster Sampler for Discrete Determinantal Point Processes}",
  author =       {Barthelm\'e, S. and Tremblay, N. and Amblard, P.-O.},
  booktitle = 	 {Proceedings of The 26th International Conference on Artificial Intelligence and Statistics},
  pages = 	 {5582--5592},
  year = 	 {2023},
  volume = 	 {206},
  series = 	 {Proceedings of Machine Learning Research},
  publisher =    {PMLR},
  url = 	 {https://proceedings.mlr.press/v206/barthelme23a.html},
}

@book{HKPV09,
	Author = {Hough, J. B. and Krishnapur, M. and Peres, Y. and Vir\'ag, B.},
	Date-Added = {2018-04-10 09:36:07 +0000},
	Date-Modified = {2018-04-10 09:36:07 +0000},
	Publisher = {American Mathematical Society},
	Title = "{Zeros of {G}aussian Analytic Functions and Determinantal Point Processes}",
	Volume = {51},
  url={https://wt.iam.uni-bonn.de/fileadmin/WT/Inhalt/people/Patrik_Ferrari/Lectures/SS12StochProc/gaf_book.pdf},
	Year = {2009}}

@article{Pemantle91,
	Author = {Pemantle, R.},
	Journal = {The Annals of Probability},
	Number = {4},
	Pages = {1559--1574},
	Publisher = {Institute of Mathematical Statistics},
	Title = "{Choosing a Spanning Tree for the Integer Lattice Uniformly}",
	Url = {http://www.jstor.org/stable/2244527},
	Volume = {19},
	Year = {1991},
	Bdsk-Url-1 = {http://www.jstor.org/stable/2244527}}

@article{de2014principal,
  title="{Principal Minors Pfaffian Half-Tree Theorem}",
  author={De Tili{\`e}re, B.},
  journal={Journal of Combinatorial Theory, Series A},
  volume={124},
  pages={1--40},
  year={2014},
  publisher={Elsevier},
  url={https://doi.org/10.1016/j.jcta.2013.12.002}
}

@article{johri2021nearest,
  title="{Nearest Centroid Classification on a Trapped Ion Quantum Computer}",
  author={Johri, S. and Debnath, S. and Mocherla, A. and Singk, A. and Prakash, A. and Kim, J. and Kerenidis, I.},
  journal={npj Quantum Information},
  volume={7},
  number={1},
  pages={122},
  year={2021},
  publisher={Nature Publishing Group UK London},
  url={https://arxiv.org/abs/2012.04145}
}

@article{kassel2022determinantal,
  title="{Determinantal Probability Measures on Grassmannians}",
  author={Kassel, A. and L{\'e}vy, T.},
  journal={Annales de l'Institut Henri Poincar{\'e} D},
  volume={9},
  number={4},
  pages={659--732},
  year={2022},
  url={https://doi.org/10.4171/AIHPD/152},
}

@article{lavancier2015determinantal,
  title="{Determinantal Point Process Models and Statistical Inference}",
  author={Lavancier, F. and M{\o}ller, J. and Rubak, E.},
  journal={Journal of the Royal Statistical Society Series B: Statistical Methodology},
  volume={77},
  number={4},
  pages={853--877},
  year={2015},
  publisher={Oxford University Press},
  url={https://doi.org/10.1111/rssb.12096},
}

@article{boyer1998tight,
  title="{Tight Bounds on Quantum Searching}",
  author={Boyer, M. and Brassard, G. and H{\o}yer, P. and Tapp, A.},
  journal={Fortschritte der Physik: Progress of Physics},
  volume={46},
  number={4-5},
  pages={493--505},
  year={1998},
  publisher={Wiley Online Library},
  url={https://doi.org/10.48550/arXiv.quant-ph/9605034},
}

@misc{Qiskit,
    author = {{Qiskit contributors}},
    title = "{Qiskit: An Open-source Framework for Quantum Computing}",
    year = {2023},
    doi = {10.5281/zenodo.2573505},
    url = {https://www.ibm.com/quantum/qiskit}
}

@article{Brassard-Hoyer,
  title="{Quantum Amplitude Amplification and Estimation}",
  author={Brassard, G. and Hoyer, P. and Mosca, M. and Tapp, A.},
  journal={Contemporary Mathematics},
  volume={305},
  pages={53--74},
  year={2002},
  publisher={Providence, RI; American Mathematical Society; 1999},
  url={https://arxiv.org/abs/quant-ph/0005055}
}

@article{Sos00,
	Author = {Soshnikov, A.},
	Date-Added = {2019-08-27 22:43:23 +0200},
	Date-Modified = {2019-08-27 22:43:40 +0200},
	Journal = {Russian Mathematical Surveys},
	Pages = {923--975},
	Title = "{Determinantal Random Point Fields}",
	Volume = {55},
	Year = {2000},
	url = {https://dx.doi.org/10.1070/RM2000v055n05ABEH000321},}

@book{trefethen2022numerical,
  title="{Numerical Linear Algebra}",
  author={Trefethen, L.N. and Bau, D.},
  year={2022},
  publisher={SIAM},
  url={https://www.stat.uchicago.edu/~lekheng/courses/309/books/Trefethen-Bau.pdf}
}

@article{TerDiVi02,
	Author = {Terhal, B.M. and DiVincenzo, D.P.},
	Issue = {3},
	Journal = {Phys. Rev. A},
	Numpages = {10},
	Pages = {032325},
	Publisher = {American Physical Society},
	Title = "{Classical Simulation\hspace{0pt} of Noninteracting-Fermion\hspace{0pt} Quantum Circuits}",
	Url = {https://link.aps.org/doi/10.1103/PhysRevA.65.032325},
	Volume = {65},
	Year = {2002}}

@article{gartrell2019learning,
  title="{Learning Nonsymmetric Determinantal Point Processes}",
  author={Gartrell, M. and Brunel, V.-E. and Dohmatob, E. and Krichene, S.},
  journal={Advances in Neural Information Processing Systems},
  volume={32},
  year={2019},
  url={https://arxiv.org/abs/1905.12962}
}

@inproceedings{tremblay2017graph,
  title="{Graph Sampling with Determinantal Processes}",
  author={Tremblay, N. and Amblard, P.-O. and Barthelm{\'e}, S.},
  booktitle={2017 25th European signal processing conference (EUSIPCO)},
  pages={1674--1678},
  year={2017},
  organization={IEEE},
  url={https://doi.org/10.48550/arXiv.1703.01594}
}

@article{tremblay2023extended,
  title="{Extended L-Ensembles: a New Representation for Determinantal Point Processes}",
  author={Tremblay, N. and Barthelm{\'e}, S. and Usevich, K. and Amblard, P.-O.},
  journal={The Annals of Applied Probability},
  volume={33},
  number={1},
  pages={613--640},
  year={2023},
  publisher={Institute of Mathematical Statistics},
  URL = {https://doi.org/10.1214/22-AAP1824}
}

@inproceedings{
gartrell2021scalable,
title="{Scalable Learning and {MAP} Inference for Nonsymmetric Determinantal Point Processes}",
author={Gartrell, M. and Han, I. and Dohmatob, E. and Gillenwater, J. and  Brunel, V.-E.},
booktitle={International Conference on Learning Representations},
year={2021},
url={https://openreview.net/forum?id=HajQFbx_yB}
}

@article{rethinasamy2024logarithmic,
  title="{Logarithmic-Depth Quantum Circuits for Hamming Weight Projections}",
  author={Rethinasamy, S. and LaBorde, M.L. and Wilde, M.M.},
  journal={Physical Review A},
  volume={110},
  number={5},
  pages={052401},
  year={2024},
  publisher={APS},
  url={https://doi.org/10.1103/PhysRevA.110.052401}
}

@article{JSKSB18,
	Author = {Jiang, Z. and Sung, K.J. and Kechedzhi, K. and Smelyanskiy, V.N. and Boixo, S.},
	Date-Added = {2022-11-24 18:46:32 +0100},
	Date-Modified = {2022-11-24 18:47:03 +0100},
	Journal = {Physical Review Applied},
	Number = {4},
	Pages = {044036},
	Publisher = {APS},
	Title = "Quantum Algorithms to Simulate\hspace{0pt} Many-Body\hspace{0pt} Physics of Correlated Fermions",
	Volume = {9},
	Year = {2018},
	url={https://arxiv.org/abs/1711.05395}}

@book{bratteli2012operator,
  title="{Operator Algebras and Quantum Statistical Mechanics II: Equilibrium States Models in Quantum Statistical Mechanics}",
  author={Bratteli, O. and Robinson, D.W.},
  year={2012},
  publisher={Springer Science \& Business Media},
  url={https://doi.org/10.1007/978-3-662-02520-8}
}

@book{NiCh10,
	Author = {Nielsen, M.A. and Chuang, I.L.},
	Date-Added = {2022-11-24 18:18:49 +0100},
	Date-Modified = {2022-11-24 18:19:49 +0100},
	Edition = {10th anniversary edition},
	Publisher = {Cambridge University Press},
	Title = "{Quantum Computation and Quantum Information}",
	Year = {2010},
	Url={https://doi.org/10.1017/CBO9780511976667}}

@article{kulesza2012determinantal,
  title={Determinantal Point Processes for Machine Learning},
  author={Kulesza, A. and Taskar, B.},
  journal={Foundations and Trends{\textregistered} in Machine Learning},
  volume={5},
  number={2--3},
  pages={123--286},
  year={2012},
  publisher={Now Publishers, Inc.},
  url={https://arxiv.org/abs/1207.6083}
}

@article{tsatsomeros2002generating,
  title="{Generating and Detecting Matrices with Positive Principal Minors}",
  author={Tsatsomeros, M.J.},
  journal={Asian Information-Science-Life: An International Journal},
  volume={1},
  number={2},
  pages={115--132},
  year={2002},
  url={https://dl.acm.org/doi/10.5555/1022385.1022397}
}

@article{brunel2018learning,
  title="{Learning Signed Determinantal Point Processes Through the Principal Minor Assignment Problem}",
  author={Brunel, V.-E.},
  journal={Advances in Neural Information Processing Systems},
  volume={31},
  year={2018},
  url={https://arxiv.org/abs/1811.00465}
}

@article{kerenidis2022quantum,
  title="{Quantum Machine Learning with Subspace States}",
  author={Kerenidis, I. and Prakash, A.},
  journal={arXiv:2202.00054},
  year={2022},
  url={https://arxiv.org/abs/2202.00054}
}

@book{nikiforov1991classical,
  title="{Classical Orthogonal Polynomials of a Discrete Variable}",
  author={Nikiforov, A.F. and Uvarov, V.B. and Suslov, S.K},
  year={1991},
  publisher={Springer},
  url={https://doi.org/10.1007/978-3-642-74748-9_2}
}

@inproceedings{jaquard2023smoothing,
  title="{Smoothing Complex-Valued Signals on Graphs with Monte-Carlo}",
  author={Jaquard, H. and Fanuel, M. and Amblard, P.-O. and Bardenet, R. and Barthelm{\'e}, S. and Tremblay, N.},
  booktitle={ICASSP 2023-2023 IEEE International Conference on Acoustics, Speech and Signal Processing (ICASSP)},
  pages={1--5},
  year={2023},
  organization={IEEE},
  url={https://doi.org/10.1109/ICASSP49357.2023.10096354},
}

@article{bardenet2024sampling,
  title="{On Sampling Determinantal and Pfaffian Point Processes on a Quantum Computer}",
  author={Bardenet, R. and Fanuel, M. and Feller, A.},
  journal={Journal of Physics A: Mathematical and Theoretical},
  volume={57},
  number={5},
  pages={055202},
  year={2024},
  publisher={IOP Publishing},
  url={https://arxiv.org/abs/2305.15851}
}

@article{FanBar2022,
  title = "{Sparsification of the Regularized Magnetic Laplacian Thanks to Multi-Type Spanning Forests}",
  author = {Fanuel, M. and Bardenet, R.},
  year = {2022},
  journal={ arXiv:2208.14797},
  arxivid = {2208.14797},
  url={https://doi.org/10.48550/arXiv.2208.14797},
  code = {https://github.com/For-a-few-DPPs-more/MagneticLaplacianSparsifier.jl}
}

@article{bardenet2022point,
  title="{From Point Processes to Quantum Optics and Back}",
  author={Bardenet, R. and Feller, A. and Bouttier, J. and Degiovanni, P. and Hardy, A. and Ran{\c{c}}on, A. and Roussel, B. and Schehr, G. and Westbrook, C.I.},
  journal={ arXiv:2210.05522},
  year={2022},
  url={https://arxiv.org/abs/2210.05522}
}

\end{document}